\newtheorem{theorem}{Theorem} 
\newtheorem{lemma}[theorem]{Lemma}
\newtheorem{definition}[theorem]{Definition}
\newcommand{\pgrams}{\ensuremath{\mathcal{G}}}
\newcommand{\eps}{\ensuremath{\varepsilon}}
\newcommand{\typea}{\textsf{A}\xspace}
\newcommand{\typeb}{\textsf{B}\xspace}
\newcommand{\typec}{\textsf{C}\xspace}
\newcommand{\typed}{\textsf{D}\xspace}
\newcommand{\typee}{\textsf{E}\xspace}
\newcommand{\typef}{\textsf{F}\xspace}
\newcommand{\typex}{\textsf{X}\xspace}
\newcommand{\stc}{\textsf{S}\xspace}
\newcommand{\lmr}{\textsf{LMR}\xspace}
\newcommand{\lmrs}{\textsf{LMR}s\xspace}
\newcommand{\dsc}{\textsf{DS}\xspace}
\newcommand{\dscs}{\textsf{DS}s\xspace}
\newcommand{\bc}{\textsf{BC}\xspace}
\newcommand{\bcs}{\textsf{BC}s\xspace}
\newcommand{\sidec}{\textsf{sc}\xspace}
\newcommand{\cornerc}{\textsf{cc}\xspace}
\newcommand{\Reals}{\ensuremath{\mathbb{R}}}
\newcommand{\rects}{\ensuremath{\mathcal{R}}}
\newcommand{\set}[1]{\left\lbrace #1\right\rbrace}
\newcommand{\vis}{\ensuremath{{V}}}
\newcommand{\blc}{\ensuremath{c_\theta}}
\newcommand{\rth}{\ensuremath{R_\theta}}
\newcommand{\gth}{\Gamma_\theta}
\newcommand{\feta}{\bar{\eta}}
\newcommand{\fdelta}{\bar{\delta}}
\newcommand{\flambda}{\bar{\lambda}}
\def\polylog{\operatorname{polylog}}
\newcommand{\ob}{\textsf{ob}\xspace}
\newbox\ProofSym \setbox\ProofSym=\hbox{%
	\unitlength=0.18ex%
	\begin{picture}(10,10) \put(0,0){\framebox(9,9){}}
	\put(0,3){\framebox(6,6){}}
	\end{picture}}
\newenvironment{denseitems}{\list{$\bullet$}{\itemsep2pt\parsep0pt}}{\endlist}
\title{Maximum-Area Rectangles in a Simple Polygon\thanks{This research was supported by the MSIT(Ministry of Science and ICT), Korea, under the SW Starlab support program(IITP--2017--0--00905) supervised by the IITP(Institute of Information \& communications Technology Planning \& Evaluation).}}
\author{Yujin Choi\thanks{Technische Universit\"at Berlin, Germany, Email: \texttt{yj5162@postech.ac.kr}} \and
Seungjun Lee\thanks{Pohang University of Science and Technology, Pohang, Korea, Email: \texttt{\{juny2400, heekap\}@postech.ac.kr}} \and
Hee-Kap Ahn\footnotemark[3]}
\date{}
\begin{document}
\maketitle

\begin{abstract}
  We study the problem of finding maximum-area rectangles contained
  in a polygon in the plane. There has been a fair amount of work for this problem 
  when the rectangles have to be axis-aligned or when the polygon is convex. We consider this problem
  in a simple polygon with $n$ vertices, possibly with holes, 
  and with no restriction on the orientation of
  the rectangles. We present an algorithm that computes a maximum-area rectangle
  in $O(n^3\log n)$ time using $O(kn^2)$ space, where $k$ is the number of reflex vertices
  of $P$. Our algorithm can report all maximum-area rectangles
  in the same time using $O(n^3)$ space.
  We also present a simple algorithm that finds a
maximum-area rectangle contained in a convex polygon with $n$ vertices
in $O(n^3)$ time using $O(n)$ space. 
\end{abstract}

\section{Introduction}
Computing a largest figure of a certain prescribed shape contained in
a container is a fundamental and important optimization problem in
pattern recognition, computer vision and computational geometry.
There has been a fair amount of work for finding rectangles of maximum
area contained in a convex polygon $P$ with $n$ vertices in the plane.
Amenta showed that an axis-aligned rectangle of maximum area can be
found in linear time by phrasing it as a convex programming
problem~\cite{Amenta-1994}.  Assuming that the vertices are given in
order along the boundary of $P$, stored in an array or balanced binary
search tree in memory, Fischer and H\"offgen gave $O(\log^2 n)$-time
algorithm for finding an axis-aligned rectangle of maximum area
contained in $P$~\cite{Fischer-1994}.  The running time was improved
to $O(\log n)$ by Alt et al.~\cite{Alt-1995}.

Knauer et al.~\cite{Knauer-2012} studied a variant of the problem in
which a maximum-area rectangle is not restricted to be axis-aligned
while it is contained in a convex polygon. They gave randomized and
deterministic approximation algorithms for the problem.
Recently, Cabello et al.~\cite{Cabello-2016} gave an exact
$O(n^3)$-time algorithm for finding a maximum-area rectangle with no
restriction on its orientation that is contained in a convex $n$-gon.
They also gave an algorithm for finding a maximum-perimeter rectangle
and approximation algorithms.

This problem has also been studied for containers which are not
necessarily convex.  Some previous work on the problem focuses on
finding an axis-aligned rectangle of maximum area or of maximum
perimeter contained in a rectilinear polygonal container in the
plane~\cite{Aggarwal-1988,McKenna-1985,Wood-1988}.  Daniels et
al. studied the problem of finding a maximum-area axis-aligned
rectangle contained in a polygon, not necessarily convex and possibly
having holes~\cite{Daniels-1997}.  They gave $O(n\log^2 n)$-time
algorithm for the problem. Later, Boland and Urrutia improved the
running time by a factor of $O(\log n)$ for simple polygons with $n$
vertices~\cite{Boland-2001}.  With no restriction on the orientation
of the rectangles, Hall-Holt et al.~gave a PTAS for finding a
fat\footnote{A rectangle is $c$-fat if its aspect ratio is
  at most $c$ for some constant $c$.} rectangle of maximum area
contained in a simple polygon~\cite{Hall-Holt2006}.

\subsection{Our results}
We study the problem of finding a maximum-area rectangle
with no restriction on its orientation that is contained in a simple
polygon $P$ with $n$ vertices, possibly with holes, in the plane.  We
are not aware of any previous work on this problem, except a PTAS for
finding a fat rectangle of maximum area inscribed in a simple
polygon~\cite{Hall-Holt2006}.  We present an algorithm that computes a
maximum-area rectangle contained in a simple polygon with $n$ vertices
in $O(n^3\log n)$ time using $O(kn^2)$ space, where $k$ is the number
of reflex vertices of $P$. Our algorithm can also find all rectangles
of maximum area contained in $P$ in the same time using $(n^3)$
space. We also present a simple algorithm that finds a maximum-area
rectangle contained in a convex polygon with $n$ vertices in $O(n^3)$
time using $O(n)$ space.

To obtain the running time and space complexities, we characterize the
maximum-area rectangles and classify them into six types, based on the
sets of contacts on their boundaries with the polygon boundary.  Then
we find a maximum-area rectangle in each type so as to find a
maximum-area rectangle contained in $P$.  To facilitate the process,
we construct a ray-shooting data structure for $P$ of $O(n)$ space
which supports, for a given query point in $P$ and a direction,
$O(\log n)$ query time.  We also construct the visibility region from
each vertex within $P$, which can be done in $O(n^2)$ time using
$O(n^2)$ space in total.  For some types, we compute locally maximal
rectangles aligned to the coordinate axes while we rotate the axes. To
do this, we maintain a few data structures such as double staircases
of reflex vertices and priority queues for events during
the rotation of the coordinate axes. They can be constructed and
maintained in $O(kn^2\log n)$ time using $O(kn^2)$ space.  The total
number of events considered by our algorithm is $O(n^3)$, each of
which is handled in $O(\log n)$ time.
\begin{theorem}
  \label{thm:final}
  We can compute a largest rectangle contained in a simple polygon
  with $n$ vertices, possibly with holes, in
  $O(n^3\log n)$ time using $O(kn^2)$ space, where $k$ is the number
  of reflex vertices. We can report all largest rectangles in the same
  time using $O(n^3)$ space.
\end{theorem}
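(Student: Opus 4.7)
The plan is to reduce the problem to an enumeration over a constant number of geometric ``types'' of candidate rectangles, and to find the best rectangle of each type by a rotational sweep of the coordinate axes. First I would establish a structural characterization: any locally maximum-area rectangle inscribed in $P$ must make enough contacts with $\bd P$ that, up to translation, rotation and scaling, it is determined by a bounded-size subset of features (vertices and edges) of $P$. A careful case analysis of which sides and corners of the rectangle touch $\bd P$ should yield at most six distinct contact patterns, call them types \typea through \typef. Handling each type separately will let me exploit its specific structure, and taking the best over the six types gives the global maximum.

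For each type, I would parameterize the candidate rectangles by the orientation $\theta \in [0, \pi/2)$ of their sides, and for a fixed $\theta$ describe a maximum-area rectangle of that type as a function of a constant number of features of $P$. With the coordinate axes rotated to angle $\theta$, the problem becomes an axis-aligned question that can be attacked using: (i) the $O(n)$-space ray-shooting structure with $O(\log n)$ query time, (ii) the visibility regions from all vertices, constructible in $O(n^2)$ time and space, and (iii) a search over reflex vertices organized in a ``double staircase'' data structure \dsc. Combining these, at a fixed $\theta$ a candidate rectangle of any given type can be evaluated in $O(\log n)$ time after the preprocessing described in the overview.

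The main technical step is to sweep $\theta$ continuously rather than re-solving from scratch at each angle. I would design event-driven data structures (double staircases of reflex vertices together with priority queues of impending topological changes) that remain valid on an interval of $\theta$ and need to be updated only at discrete events: the angles at which the combinatorial structure of a locally maximal rectangle of the given type changes. A careful amortized analysis should bound the total number of such events over all types by $O(n^3)$, each event being handled in $O(\log n)$ time, yielding the $O(n^3 \log n)$ running time. The space bound $O(kn^2)$ comes from maintaining $O(n^2)$ double staircases, each of size $O(k)$, plus the visibility and ray-shooting structures of smaller size.

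Finally, during the sweep I keep the best rectangle encountered so far; since each candidate is generated explicitly, retaining a pointer to every rectangle that ties the current maximum reports all optimal rectangles, at the cost of raising the working space to $O(n^3)$ (one slot per generated candidate). The hard part, which I expect to be the main obstacle, is the event management for the rotational sweep: proving that the double staircases and priority queues can be updated in $O(\log n)$ amortized time per event while simultaneously bounding the total event count by $O(n^3)$. This requires a delicate geometric argument about how reflex vertices enter and leave the relevant staircases as $\theta$ varies, and is where I anticipate spending most of the effort.
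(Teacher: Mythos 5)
Your proposal follows essentially the same route as the paper: the classification of locally maximal rectangles into six contact types, the rotational sweep of the coordinate axes, the double staircases of reflex vertices with event queues, the $O(n^3)$ total event bound with $O(\log n)$ work per event, and the $O(n^3)$-space variant for reporting all optima. However, there is one concrete gap: the theorem covers polygons \emph{possibly with holes}, and your proposal never addresses them. Your implicit containment test (a candidate rectangle is in $P$ iff its boundary is in $P$, verified by ray shooting along the sides) is only valid when $\partial P$ is a single simple closed curve; with holes, a hole can lie strictly in the interior of a candidate rectangle whose four sides all avoid it, so the test accepts rectangles not contained in $P$. The paper closes this by adding an emptiness test: it builds a triangular range-searching structure on the $n$ vertices of $P$ in $O(n^2)$ time and space, splits each candidate rectangle into two query triangles, and rejects it in $O(\log n)$ time if any vertex (hence any hole) lies inside; it also substitutes a ray-shooting structure that works in polygons with holes. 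Without some such step your algorithm is incorrect for the holed case.

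A secondary, more minor point: your space accounting is off in structure, though not in the final bound. The algorithm keeps one double staircase per reflex vertex ($k$ staircases of size $O(n)$ each, i.e.\ $O(kn)$ total), and the $O(kn^2)$ term comes from the precomputed event map --- for each reflex vertex $p$, the lists $C(p,q)$ over all vertices $q$ and $L(p,e)$ over all edges $e$, each of size $O(n)$ --- rather than from ``$O(n^2)$ double staircases of size $O(k)$.'' You are right that the event management and the $O(n^3)$ event bound are the technical heart; the paper spends most of its effort exactly there (bounding step, ray, and shift events per staircase by $O(n^2)$ and handling each in $O(\log n)$ time after the event map is built).
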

\begin{theorem}
  \label{thm:convex}
  We can find a largest rectangle in a convex polygon $P$ with $n$
  vertices in $O(n^3)$ time using $O(n)$ space.
\end{theorem}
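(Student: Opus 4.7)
The plan is to characterize the maximum-area rectangles inscribed in the convex polygon $P$ by their contacts with $\partial P$ and then enumerate all combinatorially distinct candidates.

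The first step is the characterization. For any largest rectangle $R \subseteq P$, every one of the four sides of $R$ must touch $\partial P$: if some side $s$ of $R$ were disjoint from $\partial P$, then by convexity of $P$ we could translate $s$ outward by a small amount while keeping $R$ inside $P$, producing a strictly larger rectangle and contradicting maximality. Each side-boundary contact is therefore of one of two types: a vertex of $P$ lying in the relative interior of a side of $R$, or a side of $R$ being collinear with (part of) an edge of $P$.

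Next, I would enumerate candidates by fixing some of the contacts. Given an ordered triple $(v_i,v_j,v_k)$ of vertices of $P$ assigned as contacts to three prescribed sides of $R$, the family of rectangles realizing these three contacts is a one-parameter family, parameterized by the orientation $\theta$. Over the interval of $\theta$ for which the triple remains valid (each vertex stays on its prescribed side rather than lying outside the rectangle), the dimensions, and hence the area, of the rectangle are explicit smooth functions of $\theta$. The fourth-side containment constraint either forces an interior critical point of this area function or selects an endpoint of the $\theta$-interval; in either case the best rectangle for the triple can be pinned down in $O(1)$ time by a closed-form calculation. Enumerating all $O(n^3)$ such triples, together with an analogous but coarser enumeration of the degenerate configurations in which one or more sides of $R$ are collinear with edges of $P$ (so that fewer than three vertex contacts are available), yields all candidate optima. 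Membership of each candidate in $P$ can be confirmed in $O(\log n)$ time with an $O(n)$-space ray-shooting or vertical decomposition structure. Processing the triples in a nested loop and discarding each candidate immediately after scoring it keeps the working space at $O(n)$.

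The main obstacle will be the case analysis and the $O(1)$-per-triple bound. One must verify that \emph{every} maximum-area rectangle is captured by at least one of the enumerated configurations, covering all combinations of vertex-in-side and side-on-edge contacts, as well as coincidences such as a corner of $R$ at a vertex of $P$. One must also show that the area function on each one-parameter family has a simple enough form (e.g.\ rational of bounded degree in $\tan\theta$) that the optimum can be extracted in constant time, and that the fourth-side feasibility check reduces to an algebraic condition rather than a scan of polygon edges. Given these pieces, the overall bound of $O(n^3)$ time and $O(n)$ space follows immediately.
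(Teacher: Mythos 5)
Your high-level plan --- reduce to $O(n^3)$ contact configurations, each a one-parameter family in the orientation $\theta$ whose area function is optimized in closed form in $O(1)$ time --- has the right shape and matches the spirit of the paper's reduction of the convex case to types \typea and \typef. But the contact classification at the heart of your enumeration is wrong for a convex polygon, and the objects you enumerate are not the ones that carry the optimum. First, a vertex of $P$ can never lie in the relative interior of a side of a rectangle $R\subseteq P$ when $P$ is convex: $R$ occupies a half-disk neighbourhood of such a point, which cannot fit inside the wedge of interior angle less than $\pi$ at a vertex of a convex polygon. (Side-contacts in the paper's sense occur only at reflex vertices, which is precisely why the convex case collapses to types \typea and \typef.) So your primary enumeration over ordered triples $(v_i,v_j,v_k)$ of vertices assigned to sides of $R$ ranges over configurations that cannot occur. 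Second, the case that actually dominates --- a corner of $R$ lying in the relative interior of an edge of $P$ (think of the largest rectangle in a triangle, all four of whose corners lie on edge interiors) --- is absent from your dichotomy: it is neither ``a vertex of $P$ on a side of $R$'' nor ``a side of $R$ flush with an edge of $P$.'' The correct enumeration is over triples and quadruples of \emph{edges} of $P$ supporting three or four corners of $R$, and the nontrivial point (which the paper obtains from the event-driven machinery of Section~\ref{sec:comp-F}, namely alignment and \textsf{t}-misalignment events) is that only $O(n^3)$ such configurations need be examined with $O(1)$ work each; a naive enumeration of edge quadruples would cost $O(n^4)$, and your proposal does not say how the four-contact case is charged to a triple.

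There is a further gap: ``every side of $R$ touches $\partial P$'' does not yield three independent contacts, because two contacts at opposite corners already make all four sides touch. This is the paper's type \typea; it needs its own argument (Lemma~\ref{lem:type-A-convex} forces the two corners onto vertices of $P$, and Lemma~\ref{lem:type-A-square} shows such an \lmr is a square, whence an $O(n^2\log n)$ sweep over vertex pairs suffices), and your enumeration has no route to these candidates. Until the classification is corrected --- corners on edge interiors (three or four of them), opposite corners on vertices (squares), and flush sides as degeneracies --- the candidate set does not provably contain a largest rectangle, so the claimed bound does not follow.
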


\section{Preliminary}
Let $P$ be a simple polygon with $n$ vertices in the plane.  For ease
of description, we assume that $P$ has no hole.  When $P$ has holes,
our algorithm works with a few additional data structures and
procedures for testing if candidate rectangles contain a hole. We
discuss this in Section~\ref{sec:with-holes}.  Without loss of
generality, we assume that the vertices of $P$ are given in order
along the boundary of $P$.  We denote by $k$ with $k \ge 1$ the number of
reflex vertices of $P$. We assume the general position condition that
no three vertices of $P$ are on a line. Whenever we say a
\emph{largest rectangle}, it refers to a maximum-area rectangle
contained in $P$.

We use the $xy$-Cartesian coordinate system and rotate the $xy$-axes
around the origin while the polygon is stationary. We use $\mathsf{C}_\theta$
to denote the coordinate axes obtained by rotating the $xy$-axes of the 
standard $xy$-Cartesian coordinate system by $\theta$ degree 
counterclockwise around the origin.  For a
point $p$ in the plane, we use $p_x$ and $p_y$ to denote the $x$- and
$y$-coordinates of $p$ with respect to the coordinate axes,
respectively. We say a segment or line is \textit{horizontal} (or
\textit{vertical}) if it is parallel to the $x$-axis (or the
$y$-axis).  Let $\eta(p)$, $\lambda(p)$ and $\delta(p)$ denote the ray
(segment) emanating from $p$ going horizontally leftwards, rightwards
and vertically downwards in the coordinate axes, respectively, until
it escapes $P$ for the first time.  We call the endpoint of a ray
other than its source point the \textit{foot} of the ray.  We denote
the foot of $\eta(p)$, $\lambda(p)$ and $\delta(p)$ by $\feta(p)$,
$\flambda(p)$ and $\fdelta(p)$, respectively.

We use $D_\eps(p)$ to denote the disk centered at a point $p$ with
radius $\eps>0$.  For any two points $p$ and $q$ in the plane, we use
$pq$ and to denote the line segment connecting $p$ and $q$, and $|pq|$
to denote the length of $pq$.  For a segment $s$, we use $D(s)$ to
denote the smallest disk containing $s$.  For a subset $S\subseteq P$,
we define the \emph{visibility region} of $S$ as
$\vis(S)=\{x\in P\mid px\subset P$ for every points $p \in S\}$.  For
a point $p\in P$, we abuse the notation such that
$\vis(p)=\vis(\{p\})$. For a set $X$, we use $\partial X$ to denote
the boundary of $X$.

\subsection{Existence of a maximum-area rectangle in a simple polygon}
We start with showing the existence of a maximum-area
rectangle contained in a simple polygon.
The set $\pgrams$ of all
parallelograms in the plane is a metric space under the Hausdorff
distance measure $d_H$. 
The Hausdorff distance between two
sets $A$ and $B$ of points in the plane is defined as
$d_H(A,B)= \max\{\sup_{a\in A} \inf_{b\in B} d(a,b), \sup_{b\in B}
\inf_{a\in A} d(a,b)\}$, where $d(a,b)$ denotes the distance between
$a$ and $b$ of the underlying metric. Since the area function
$\mu: \pgrams \rightarrow \Reals^{\ge 0}$ is continuous in 
$\pgrams$,
the following lemma assures the existence of a largest
rectangle contained in $P$ and thus justifies the problem.  Let
$\rects$ denote the set of all rectangles contained in $P$. Clearly,
$\rects\subset\pgrams$.  
\begin{lemma}
  The set $\rects$ is compact.
\end{lemma}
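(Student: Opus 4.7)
The plan is to show $\rects$ is bounded and sequentially closed in $(\pgrams, d_H)$, and to transport this to sequential compactness via a finite-dimensional parametrization of rectangles. Boundedness is immediate: $P$ is a bounded subset of $\Reals^2$ (it has finitely many vertices), so every rectangle in $\rects$ lies in a common disk $D$ containing $P$, and therefore $d_H(R,R') \le \operatorname{diam}(D)$ for all $R, R' \in \rects$.

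For closedness, I would parametrize a rectangle by a tuple $\pi = (c, w, h, \theta)$, where $c \in \Reals^2$ is the center, $w, h \ge 0$ are the side lengths, and $\theta \in [0,\pi/2]$ is the orientation, and let $\Phi(\pi) \in \pgrams$ be the corresponding closed rectangle (possibly degenerate when $w = 0$ or $h = 0$). The four corners of $\Phi(\pi)$ are continuous functions of $\pi$, and the Hausdorff distance between two rectangles is bounded above by the maximum distance between corresponding corners, so $\Phi$ is continuous into $(\pgrams, d_H)$. Now take a sequence $R_i \in \rects$ with $R_i \to R^*$ in $d_H$. The corresponding parameter tuples $\pi_i$ lie in a bounded subset of $\Reals^2 \times [0,M] \times [0,M] \times [0,\pi/2]$ for a suitable $M$, since each $R_i$ is contained in the bounded set $P$; Bolzano--Weierstrass then gives a subsequence $\pi_{i_j}$ converging to some $\pi^*$. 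By continuity of $\Phi$ we have $R_{i_j} \to \Phi(\pi^*)$ in $d_H$, so $R^* = \Phi(\pi^*)$ is itself a (possibly degenerate) rectangle. Finally, $P$ is closed and each $R_i \subset P$ with $d_H(R_i, R^*) \to 0$, so every point of $R^*$ is a limit of points in $P$ and hence lies in $P$; thus $R^* \in \rects$.

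The main subtlety I expect is the treatment of degenerate rectangles: a Hausdorff limit of positive-area rectangles can collapse to a segment or a point, so $\rects$ must be taken to include such degenerate shapes for the closedness step to go through as stated. This inclusion is harmless for the intended application of the lemma, since the area function is continuous on $\pgrams$ and by the extreme value theorem on the compact set $\rects$ it attains its supremum; this maximum cannot be realized by a degenerate rectangle as long as $P$ contains any nondegenerate rectangle, which is automatic whenever $P$ has nonempty interior. A minor technical point to verify is that $\Phi$ remains continuous across the degenerate strata $w = 0$ and $h = 0$, where the parametrization becomes many-to-one; this again follows from the same corner-distance estimate used to establish continuity of $\Phi$ in the first place.
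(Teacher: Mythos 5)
Your proof is correct and follows essentially the same route as the paper: both arguments parametrize rectangles by finitely many real parameters, observe that the admissible parameter set is bounded and that the containment constraint $R\subset P$ is closed, and transfer compactness through the continuous parametrization map into $(\pgrams,d_H)$. The only cosmetic difference is that the paper phrases the closedness step topologically (the set of parallelograms not contained in $P$ is open in the Hausdorff metric), whereas you argue sequentially via Bolzano--Weierstrass and the closedness of $P$; your remark about degenerate rectangles is a fair point that the paper leaves implicit (its parameter set $T_P$ also admits $u=0$ or $v=0$) and is harmless for the application.
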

\label{lem:rectangles.compact}
\begin{proof}
  Define $f: \Reals^6 \rightarrow \pgrams$ to be a function that maps
  a triplet $(p,u,v)$ of points $p$, $u$, and $v$ in $\Reals^2$ to the
  parallelogram in $\Reals^2$ that has $p$, $p+u$, $p+v$, and $p+u+v$
  as the four corners.

  If a parallelogram $G\in \pgrams$ is not contained in $P$, there
  always exists a point $q\in G$ and a disk $D_\eps(q)$ for some
  $\eps>0$ satisfying $D_\eps(q)\cap P=\emptyset$ in the plane.  Then
  for any parallelogram $Q\in\pgrams$ with $d_H(G,Q) < \eps$, the
  intersection $Q\cap D_\eps(q)$ is not empty and $Q$ is not contained
  in $P$. Thus, $\mathcal{C}=\set{G\in \pgrams\mid G\not\subset P}$ is
  open in $\pgrams$, and therefore
  $W_P = \set{ (p,u,v) \in \Reals^6\mid f(p,u,v) \subset
    P}=f^{-1}(\pgrams\setminus \mathcal{C})$ is closed.  This implies
  that
  $T_P = \set{ (p,u,v) \in \Reals^6\mid f(p,u,v) \subset P, \text{a
      rectangle}} = W_P \cap \set{(p,u,v)\mid u\cdot v = 0}$ is closed
  and also bounded in $\Reals^6$, i.e. compact.  Now we can conclude
  that $f(T_P)=\rects$ is also compact by $f$ being continuous in
  $\Reals^6$.
\end{proof}
\subsection{Classification of largest rectangles}
We give a classification of largest rectangles based on the sets of
contacts they have on their boundaries with the polygon boundary.  We
say a rectangle contained in $P$ has a \emph{side-contact}
(\sidec for short) if a side has a reflex vertex of $P$ lying on
it, excluding the corners.  Similarly, we say a rectangle contained in
$P$ has a \emph{corner-contact} (\cornerc for short) if a corner
lies on an edge or a vertex of $P$. 
For instance, the rectangle of case $\typed_1$ in Figure~\ref{fig:DSC-classification}
has a side-contact on its top side and corner-contacts on its corners,
except its bottom-right corner.
When two opposite corners (or two opposite sides) have corner-contacts
(or side-contacts), we say the contacts are \emph{opposite}.

Daniels et al.~\cite{Daniels-1997} studied this problem with
restriction that rectangles must be axis-aligned.  
They presented a classification of
determining sets of contacts, defined below, into five types for a
largest axis-aligned rectangle contained in a simple polygon in the
plane.

\begin{definition}[Determining set of contacts~\cite{Daniels-1997}]
  \label{def:dsc}
  A set $Z$ of contacts is \textit{a determining set of contacts} if
  the largest axis-aligned rectangle satisfying $Z$ has finite area
  and the largest axis-aligned rectangle satisfying any proper subset
  of $Z$ has greater or infinite area.
\end{definition}

In our problem, a largest rectangle $R$ is not necessarily
axis-aligned.  Consider two orthogonal lines which are parallel to the
sides of $R$ and pass through the origin.  Since $R$ is aligned to the
coordinate axes defined by the lines, it also has a determining set of
contacts defined by Daniels et al.  From this observation, we present
a classification of the determining sets of contacts (\dsc for short)
for a largest rectangle in $P$ into six \emph{canonical types}, from
\typea to \typef, and their subtypes. The classification is given
below together with figures in Figure~\ref{fig:DSC-classification}.
\begin{figure}[t]
  \centering \includegraphics[width=.9\textwidth]{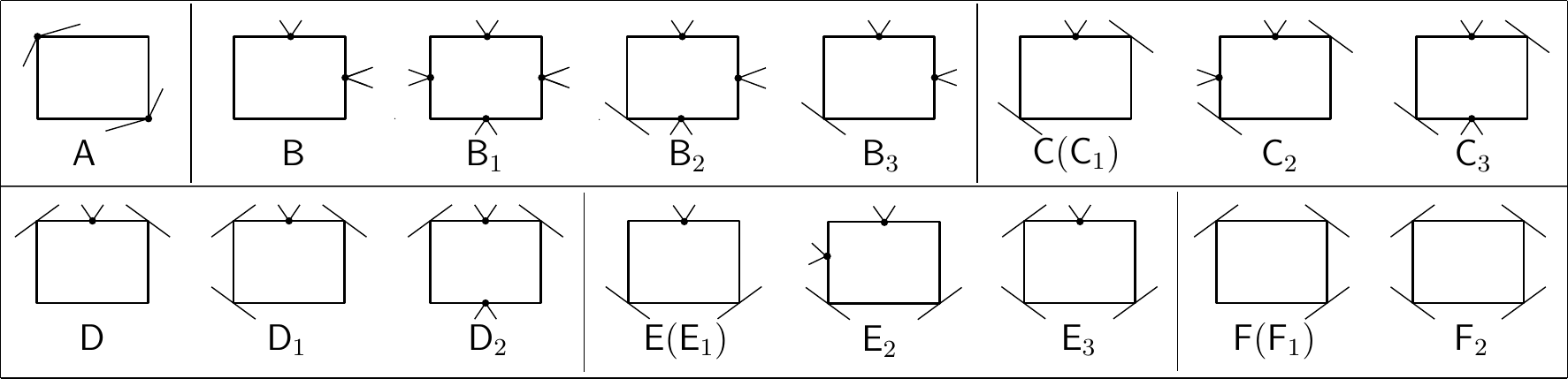}
  \caption{Classification of the determining sets of contacts of
    largest rectangles when rotations are allowed. Each canonical type
    \textsf{X}, except \typea, has a few subtypes $\typex_i$ for
    $i=1,2,3$.}
  \label{fig:DSC-classification}
\end{figure}

\begin{denseitems}
\item[-]Type \typea.  Exactly two opposite \cornerc{s} lying on
  convex vertices of $P$.
\item[-]Type \typeb.  One \sidec on each side incident to a
  corner $c$.  In addition, $\typeb_1$ has a \sidec on each of
  the other two sides, and $\typeb_2$ and $\typeb_3$ have a
  \cornerc on the corner $c'$ opposite to $c$. $\typeb_2$
  has another \sidec on a side incident to $c'$.
\item[-]Type \typec ($\typec_1$).  Two \cornerc{s} on opposite
  corners $c$ and $c'$, and a \sidec on a side $e$ incident to
  a corner $c$.  $\typec_2$ has another \sidec on the side
  incident to $c'$ and adjacent to $e$, and $\typec_3$ has another
  \sidec on a side opposite to $e$.
\item[-]Type \typed.  A \sidec on a side $e$ and a
  \cornerc on each endpoint of $e$. $\typed_1$ has another
  \cornerc and $\typed_2$ has another \sidec on the side
  opposite to $e$.
\item[-]Type \typee ($\typee_1$). A \sidec on a side $e$ and
  a \cornerc on each endpoint of the side $e'$ opposite to
  $e$.  $\typee_2$ has another \sidec on a side other than $e$
  and $e'$.  $\typee_3$ has another \cornerc on an endpoint of
  $e$.
\item[-]Type \typef ($\typef_1$).  \cornerc{s} on three corners.
  $\typef_2$ has \cornerc{s} on all four corners.
\end{denseitems}

This classification is the same as the one by Daniels et al., except
for types \typea, \typee, and \typef. We subdivide the last type in the
classification by Daniels et al. into two types, \typee
and \typef, for ease of description.  A \dsc for
type \typea consists of exactly two opposite corner-contacts lying on
convex vertices of $P$ while the corresponding one by Daniels et
al.~\cite{Daniels-1997} has two opposite corners lying on the boundary
(not necessarily on vertices) of $P$. This is because there is no
restriction on the orientation of the rectangle, which is
shown in Lemma~\ref{lem:type-A-convex}.
\begin{lemma}\label{lem:type-A-convex}
  If a largest rectangle has no contact on its boundary, except two
  corner-contacts at opposite corners $c$ and $c'$, then for every
  edge $e$ of $P$ incident to $c$ or $c'$, $D_\eps(c)\cap e$ or
  $D_\eps(c')\cap e$ is contained in $D(cc')$ for some $\eps>0$,
  respectively. Thus, the corner-contacts are on convex vertices of
  $P$.
\end{lemma}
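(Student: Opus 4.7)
The plan is to prove the two assertions of the lemma separately: first, that a small neighborhood of $c$ on every edge $e$ incident to $c$ lies in $D(cc')$; second, that this forces $c$ to be a convex vertex. The argument for $c'$ is symmetric.

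For the first assertion, I would argue by contradiction. Suppose some edge $e$ incident to $c$ admits a point $p=c+sv\in e$, with $v$ a unit tangent direction along $e$ at $c$ and $s>0$ arbitrarily small, satisfying $p\notin D(cc')$. Since $D(cc')$ has $cc'$ as diameter and $c\in\bd D(cc')$, the condition $p\notin D(cc')$ for $s$ arbitrarily small forces $(c'-c)\cdot v\le 0$. The identity $|pc'|^2=|cc'|^2-2s(c'-c)\cdot v+s^2$ then yields $|pc'|>|cc'|$. Letting $\alpha$ be the angle between the diagonal $cc'$ and a side of $R$, I would define $R'$ to be the rectangle with $p$ and $c'$ as opposite corners and sides making the same angle $\alpha$ with the diagonal $pc'$. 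Its area $|pc'|^2\sin\alpha\cos\alpha$ strictly exceeds the area $|cc'|^2\sin\alpha\cos\alpha$ of $R$.

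The remaining task is to show $R'\subset P$. Since $\bd R\cap\bd P=\set{c,c'}$, for any sufficiently small $\eps>0$ the compact set $\bd R\setminus(D_\eps(c)\cup D_\eps(c'))$ sits at positive distance from $\bd P$. Note that $R'$ is obtained from $R$ by a composition of a rotation and a scaling about $c'$ whose magnitudes tend to $0$ as $s\to 0$. Hence, for $s$ small enough, the middle portion of $\bd R'$ stays inside $P$; the corner at $c'$ is unchanged; and near the new corner $p\in e$ the two sides of $R'$ leave $p$ in directions arbitrarily close to the directions of $R$'s sides at $c$, which lie strictly on the polygon-interior side of $e$ by hypothesis. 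Thus $R'\subset P$, contradicting the maximality of $R$ and establishing the first assertion.

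For the second assertion, I would first rule out $c$ being in the relative interior of an edge $e$: the line through $e$ at $c$ is either tangent to $\bd D(cc')$ at $c$ (in which case every short segment of $e$ through $c$ exits $D(cc')$ on both sides of $c$) or crosses $\bd D(cc')$ transversally at $c$ (in which case one side of $c$ on $e$ lies outside $D(cc')$); either case contradicts the first assertion. Hence $c$ is a vertex of $P$, and its two incident edges leave $c$ in directions $v_1,v_2$ with $(c'-c)\cdot v_i>0$. If $c$ were reflex, the polygon-interior wedge at $c$ would exceed $180^\circ$ and therefore contain the direction opposite to $c'-c$; translating $c$ slightly in that direction would remain inside $P$ and strictly increase $|cc'|$, and the same construction would yield a larger rectangle in $P$, contradicting maximality. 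Thus $c$ is a convex vertex. The main obstacle throughout is verifying $R'\subset P$ after perturbation: away from the two perturbed corners the containment follows from $R$'s openness (its boundary meets $\bd P$ only at $c$ and $c'$), but the local check at each perturbed corner requires a continuity argument to confirm that the rectangle's sides still leave into the polygon-interior side of $\bd P$.
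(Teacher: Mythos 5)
Your reduction of ``$p\notin D(cc')$ for arbitrarily small $s$'' to $(c'-c)\cdot v\le 0$, the resulting inequality $|pc'|>|cc'|$, and the compactness argument for the middle portion of $\bd R'$ are all fine. The gap is in the local containment check at the new corner, and it occurs exactly in the case the lemma is supposed to eliminate. You justify that check by asserting that the two side directions of $R$ at $c$ ``lie strictly on the polygon-interior side of $e$ by hypothesis.'' That is true when $c$ lies in the relative interior of an edge or at a convex vertex, but it is false in general when $c$ is a \emph{reflex} vertex: there the local interior of $P$ is a wedge of angle greater than $\pi$, which is strictly larger than the half-plane bounded by the line through $e$, so a side of $R$ can point into the interior of $P$ while pointing into the exterior side of that line. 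Concretely, take $c$ reflex with an incident edge $e$ in direction $v=-(c'-c)/|cc'|$ (so $(c'-c)\cdot v<0$ and your hypothesis for contradiction holds). Your $R'$ is the image of $R$ under the expanding similarity about $c'$ sending $c$ to $p=c+sv$; since $-v$ lies in the corner cone of $R$ at $c$, this similarity pushes the corner \emph{through} $c$, and $R'$ contains a full neighborhood of $c$. As $c\in\bd P$, every such neighborhood meets the exterior of $P$, so $R'\not\subset P$ no matter how small $s$ is. Thus the first assertion is only established under an implicit convexity assumption at $c$, which is circular.

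The same defect breaks your proof of the second assertion outright. The rectangle with opposite corners $c-t(c'-c)/|cc'|$ and $c'$ and the same angle $\alpha$ is precisely the dilation of $R$ about $c'$ by the factor $1+t/|cc'|>1$; it contains $R$ and has $c$ in its interior, hence contains points of the exterior wedge at the reflex vertex $c$ for every $t>0$. So ``the same construction'' never yields a rectangle contained in $P$, and the reflex case is not ruled out. The paper avoids this by decoupling the two operations: it first \emph{rotates} $R$ about $c'$, an isometry under which the image cannot engulf $c$ (every point of $R$ other than $c$ is at distance strictly less than $|cc'|$ from $c'$, so the rotated copy pivots away from $c$), thereby detaching the contact at $c$; only once the moved corner is strictly interior to $P$ does it expand from $c'$, at which point the engulfed neighborhood is a neighborhood of an interior point and causes no harm. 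To repair your argument you would need to replace the one-step similarity by this rotate-then-expand scheme (or otherwise handle the reflex corner directly), rather than appealing to a half-plane picture of $P$ at $c$ that presupposes convexity.
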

\begin{proof}
  Let $R$ be a largest rectangle that has no contact on its boundary,
  except two corner-contacts at opposite corners $c$ and $c'$.  Assume
  to the contrary that there is no $\eps>0$ satisfying
  $D_\eps(c)\cap e\subset D(cc')$ for an edge $e$ incident to
  $c$. Then we can get another rectangle $R'$ by rotating $R$ around
  $c'$ in a direction such that $R'$ has no contact on its boundary,
  except the corner-contact at $c'$. Then we can get a larger
  rectangle by expanding $R'$ from $c'$ which contradicts the
  optimality of $R$ under the area function $\mu$.  We can prove the
  claim for edges incident to $c'$ analogously.
\end{proof}

\subsection{Maximal and breaking configurations.}
Recall that $\rects$ denotes the set of all rectangles contained in
$P$.  Our algorithm finds a largest rectangle in $\rects$ of each
(sub)type so as to find a largest rectangle contained in $P$.  We call
a rectangle that gives a \emph{local maximum} of the area function
$\mu$ among rectangles in $\rects$ a \emph{local maximum rectangle}
(\lmr for short).  We say an \lmr $R$ is of type $\typex$ if $R$ has
all contacts of subtype $\typex_i$ for some $i=1,2,\ldots$.  Since a
largest rectangle contained in $P$ is a rectangle aligned to the axes
that are parallel to its sides, it has contacts of at least one type
defined above and is an \lmr of that type.  Therefore, our algorithm
finds a largest rectangle among all possible \lmrs of each type.
If there are multiple largest rectangles, our algorithm finds
them all but keeps one of them.

Consider a rectangle $R\in\rects$ that satisfies a \dsc $Z$.  If there
is no contact other than $Z$, there exists a continuous transformation
of $R$ such that the transformed rectangle is a rectangle contained in
$P$ and satisfying $Z$. Then by such a continuous 
transformation the area of $R$ may
change. 
Imagine we continue with such a transformation
until the transformed rectangle $R'$ 
gets another contact. In this case, we say $R'$ is in a
\textit{breaking configuration} (\bc for short) of $Z$.  During the
transformation, the area of $R'$ may become locally maximum.  If $R'$
is locally maximum and has no contact other than $Z$,
we say $R'$ is in a \textit{maximal configuration} of $Z$.  There can
be $O(1)$ maximal configurations of $Z$, which can be observed from
the area function of the rectangle.
    
\begin{lemma}\label{lem:configurations} An \lmr
  satisfying $Z$ is in a maximal configuration or a breaking
  configuration.\end{lemma}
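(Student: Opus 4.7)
The plan is a short case analysis on whether $R$ has any contact with $\bd P$ beyond those required by~$Z$.

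Case~1: $R$ has at least one contact outside $Z$. A breaking configuration is defined as a rectangle satisfying $Z$ that additionally acquires a new contact during a continuous deformation through rectangles of $\rects$ preserving $Z$. Since $Z$ is a \dsc, no proper subset of $Z$ bounds the area of rectangles satisfying it, so each contact in $Z$ can be individually relaxed; this yields a positive-dimensional continuous family of rectangles through $R$ satisfying $Z$. A small perturbation of $R$ along this family destroys the extra contact, and reversing the perturbation realises $R$ as the precise moment the extra contact is first acquired. Hence $R$ is in a breaking configuration of~$Z$.

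Case~2: $R$ has no contact outside~$Z$. Let $\mathcal{F}_Z \subseteq \rects$ be the family of rectangles that satisfy exactly the contacts in $Z$ (and no others); by the same relaxation argument used in Case~1, this family passes through $R$. Because $R$ is an \lmr, the area function $\mu$ attains a local maximum at $R$ over all of $\rects$, and therefore also over $\mathcal{F}_Z$. This matches the definition of a maximal configuration of~$Z$, so $R$ is in a maximal configuration.

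The two cases are exhaustive and together prove the lemma. The only step that is not an immediate unfolding of definitions is justifying, in Case~1, that a small $Z$-preserving perturbation of $R$ can break the extra contact; I expect this to be the main (mild) obstacle. It is ensured by the minimality property in the definition of a \dsc (Definition~\ref{def:dsc}): if no such perturbation existed, the extra contact would be implied by $Z$, contradicting the hypothesis that it is an extra contact at all.
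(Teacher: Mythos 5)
Your case analysis is correct and matches the paper's approach: the paper states this lemma without an explicit proof, treating it as an immediate consequence of the definitions of maximal and breaking configurations sketched in the preceding paragraph, and your two cases (an extra contact present versus absent) are exactly the unfolding of those definitions. The one step you flag as delicate --- that a $Z$-preserving perturbation exists and can break the extra contact --- is likewise taken for granted in the paper's definitional setup (it asserts the existence of the continuous $Z$-preserving transformation without argument), so there is no substantive divergence.
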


For a breaking configuration $Z'$ of a \dsc $Z$, observe that
$Z'\setminus Z$ is a singleton and $Z'$ can be a \bc of some other
\dsc{s}. With this fact, we can classify \bcs by avoiding repetition
and reducing them up to symmetry. See Figure~\ref{fig:BCs} for
breaking configurations.

We use $\gth(Z)$ to denote the axis-aligned rectangle of
  largest area that satisfies a \dsc $Z$ in $\mathsf{C}_\theta$.
We say a \dsc $Z$ is
  \textit{feasible} at an orientation $\theta$ if $\gth(Z)$ is a
  rectangle contained in $P$.  We say an orientation $\theta$ is
  \textit{feasible} for $Z$ if $\gth(Z)$ is contained in $P$.

\section{Computing a largest rectangle of type \typea}
\label{sec:comp-A}
We show how to compute all \lmrs of type \typea and a largest
rectangle among them. 
\begin{figure}[t] \centering
  \includegraphics[width=.55\textwidth]{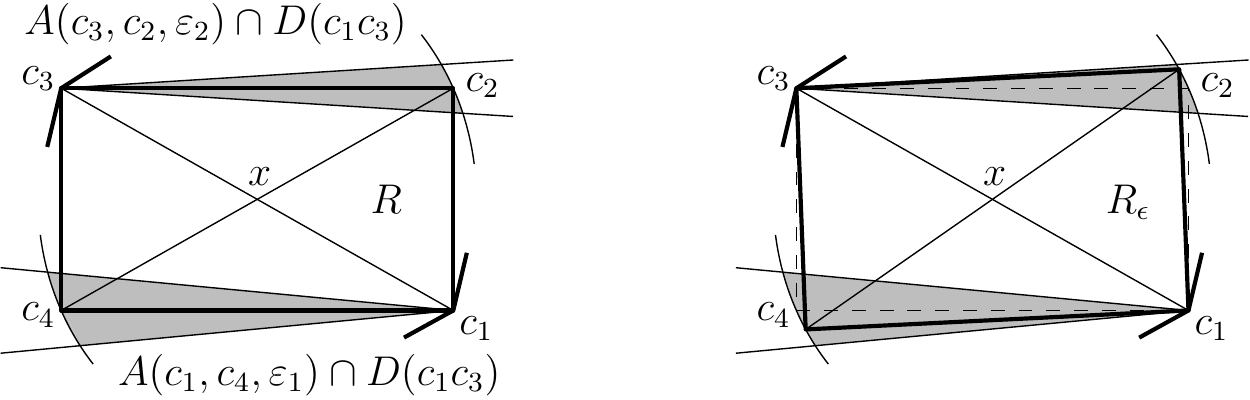}
  \caption{If an \lmr $R$ of type \typea is not a square, there is a
    larger rectangle $R_\eps$.}
  \label{fig:type-A-square}
\end{figure}
\begin{lemma}\label{lem:type-A-square}
  Every \lmr of type \typea is a square.
\end{lemma}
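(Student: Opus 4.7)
The plan is to exploit the fact that an \lmr of type \typea is pinned only at two opposite corners $c$ and $c'$ and is otherwise free to deform. Concretely, I would parameterize the family of rectangles sharing $c$ and $c'$ as diagonally opposite corners by the angle $\phi$ between the diagonal $cc'$ and one of the sides. Writing $L = |cc'|$, the side lengths are $L\cos\phi$ and $L\sin\phi$, so the area is $L^2\sin(2\phi)/2$, which is uniquely maximized (within a period) at $\phi = \pi/4$, i.e., when the rectangle is a square. So all the work is in showing that, locally around $R$, this one-parameter family lies inside $\rects$; if so, $R$ cannot be a local maximum of $\mu$ on $\rects$ unless it is a square.

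Next I would verify that the family stays inside $P$ near $c$ and $c'$. This is exactly the content of Lemma~\ref{lem:type-A-convex}: each edge of $P$ incident to $c$ or $c'$ lies, near that vertex, inside the disk $D(cc')$ that circumscribes the diagonal. Since every rectangle $R(\phi)$ in the family has $c$ and $c'$ as opposite corners and hence is inscribed in $D(cc')$, the two edges of $R(\phi)$ emanating from $c$ (resp.\ $c'$) locally lie on the $P$-side of the polygon edges through $c$ (resp.\ $c'$). Thus for $\phi$ sufficiently close to the angle realized by $R$, the rectangle $R(\phi)$ is not obstructed at $c$ or $c'$.

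Finally, since $R$ has no contact with $\bd P$ other than the two corner-contacts, its interior has positive distance $\rho>0$ from $\bd P$ away from small neighborhoods of $c$ and $c'$. For $\phi$ close enough to the angle of $R$, the Hausdorff distance $d_H(R, R(\phi))$ is smaller than $\rho$, so $R(\phi) \subset P$; combined with the previous paragraph, the whole path $\phi \mapsto R(\phi)$ lies in $\rects$ on an open interval around $R$. If $R$ were not a square, the area function $L^2 \sin(2\phi)/2$ would be strictly monotone at $R$ along this path, yielding a nearby rectangle in $\rects$ of strictly larger area and contradicting that $R$ is an \lmr. Hence $R$ must be a square.

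The main obstacle, which Lemma~\ref{lem:type-A-convex} already clears, is controlling containment near $c$ and $c'$ under the rotation: naively, rotating a rectangle about two fixed corners drags its sides across the polygon edges incident to those corners, and one must verify that the direction of rotation that strictly increases area does not immediately push the rectangle out of $P$. The disk-inclusion statement in Lemma~\ref{lem:type-A-convex} handles this for every infinitesimal rotation on both sides, which is why the area-maximizing value $\phi=\pi/4$ is locally reachable from both directions and the argument yields an equality, not just an inequality.
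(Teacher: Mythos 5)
Your proof follows essentially the same route as the paper's: both fix the diagonal $cc'$ and rotate the inscribed rectangle within the circumscribed disk $D(cc')$, using the fact that the area $\tfrac{1}{2}|cc'|^2\sin 2\phi$ is maximized exactly at the square, and both justify containment of the perturbed rectangle by the absence of contacts away from the two pinned corners (the paper phrases this with explicit angular sectors $A(c_1,c_4,\eps_1)\cap D(c_1c_3)$ and $A(c_3,c_2,\eps_2)\cap D(c_1c_3)$ apexed at those corners). One small caveat: near $c$ and $c'$ the disk-inclusion of Lemma~\ref{lem:type-A-convex} does not by itself order the sides of $R(\phi)$ against the polygon edges, since both lie inside $D(cc')$; the angular slack that actually lets the corner wedge rotate is supplied by $\partial R\cap\partial P=\{c,c'\}$, which is also what the paper's sector argument rests on.
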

\begin{proof}
  Suppose there is an \lmr $R$ of type \typea which is not a square.
  Without loss of generality, assume that $R$ has corners
  $c_1, c_2, c_3$, and $c_4$, from its bottom-right corner $c_1$, in
  counterclockwise order such that $R$ has corner-contacts at $c_1$
  and $c_3$ on convex vertices of $P$ and
  $\angle c_2 x c_1 < \frac{\pi}{2} < \angle c_3 x c_2$ for
  $x = c_1 c_3 \cap c_2 c_4$.
  See Figure~\ref{fig:type-A-square}.  Since both $c_2$ and $c_4$ are
  contained in the interior of $P$, there exist
  $\eps_1, \eps_2 \in (0, \frac{\pi}{4}-\frac{1}{2} \angle c_2 x c_1)$
  such that $A(c_1, c_4, \eps_1) \cap D(c_1c_3) \subset P$ and
  $A(c_3, c_2, \eps_2) \cap D(c_1c_3) \subset P$, where
  $A(p, q, \theta) = \{ x\in \mathbb{R}^2 \mid \angle xpq \in (0,
  \theta)\}$ for $\theta>0$.
  
  Let $R_\eps$ denote the rectangle with diagonals $c_1 c_3$ and the
  segment obtained by rotating $c_2 c_4$ by
  $\eps\in (0, 2\min\{\eps_1 , \eps_2\})$ around $x$ in
  counterclockwise direction.  Then $R_\eps$ is contained in the union
  of $R$, $A(c_3, c_2, \eps_2) \cap D(c_1c_3)$, and
  $A(c_3, c_2, \eps_2) \cap D(c_1c_3)$, which implies
  $R_\eps \subset P.$ Moreover, $\mu(R_\eps )>\mu(R)$ for all $\eps$,
  which contradicts to the assumption that $R$ is an \lmr.
\end{proof}
By Lemma~\ref{lem:type-A-square}, it
suffices to check all possible squares in $P$ with two opposite
corners on convex vertices of $P$.  Since $\partial P$ is a simple
closed curve, we can determine if a rectangle $R$ is contained in $P$
by checking if all four sides of $R$ are contained in $P$.

\begin{lemma}\label{lem:lmrs-type-A}
  We can compute a largest rectangle among all \lmrs of type \typea in
  $O((n-k)n+(n-k)^2 \log n)$ time using $O(n)$ space.
\end{lemma}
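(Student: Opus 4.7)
By Lemma~\ref{lem:type-A-square} every \lmr of type \typea is a square, and by Lemma~\ref{lem:type-A-convex} its two opposite corner-contacts lie on convex vertices of $P$. Since two opposite corners determine a square uniquely, the set of candidate \lmrs of type \typea is in bijection with the set of unordered pairs of convex vertices of $P$. There are $n-k$ convex vertices, hence $O((n-k)^2)$ candidate squares $S(u,v)$. The plan is to enumerate all of them, test containment in $P$, and keep a largest one that passes the test.

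For the containment test, I would first preprocess $P$ with the $O(n)$-space ray-shooting data structure stated in the preliminaries, which answers a ray-shooting query in $O(\log n)$ time. Because $P$ is assumed hole-free in this section, it is simply connected, so $\partial S(u,v)$ is a Jordan curve in $P$ iff $S(u,v)\subseteq P$; equivalently, $S(u,v)\subseteq P$ iff each of its four sides is contained in $P$. To test whether a single side is contained in $P$, shoot a ray from one endpoint in the direction of the other and compare the length returned to the length of the side. Each candidate therefore requires four ray-shooting queries and $O(\log n)$ time.

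For the $(n-k)\cdot n$ term, the algorithm processes convex vertices one at a time: for each convex vertex $u$, compute the visibility polygon $\vis(u)$ inside $P$ in $O(n)$ time and $O(n)$ space using a linear-time simple-polygon visibility algorithm, then iterate only over convex vertices $v\in\vis(u)$, since $u v\not\subseteq P$ already forces $S(u,v)\not\subseteq P$ (the diagonal would leave $P$). After $u$ is processed, $\vis(u)$ is discarded before the next outer iteration, so the working space remains $O(n)$ (the ray-shooting structure, the current visibility polygon, and the best square seen so far). Summing, the visibility computations cost $O((n-k)\cdot n)$ and the containment tests cost $O((n-k)^2\log n)$, matching the claim.

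The main obstacle I anticipate is justifying that checking containment of the four sides suffices, rather than having to certify that the interior of $S(u,v)$ contains no piece of $\partial P$; this is where simple connectivity of $P$ is used, and it is exactly this step that would need to be revisited (with the additional machinery of Section~\ref{sec:with-holes}) when $P$ has holes. A minor point is verifying that the largest square returned is indeed an \lmr of type \typea and not a larger rectangle with extra contacts: any such larger rectangle would then be an \lmr of a strictly larger type (strictly more contacts) and would be captured by the algorithms for the other types, so returning the maximum among our candidates is correct for the purposes of computing a global largest rectangle.
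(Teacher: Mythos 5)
Your proposal is correct and follows essentially the same route as the paper: enumerate pairs of convex vertices via the visibility polygon of each convex vertex, and test each candidate square by checking containment of its four sides with $O(\log n)$ ray-shooting queries against an $O(n)$-space structure. The extra justifications you give (why four side tests suffice in a hole-free polygon, and why returning the maximum over candidates is sound even if the winning square has additional contacts) are consistent with, and slightly more explicit than, the paper's argument.
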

\begin{proof}
  We first construct a ray-shooting data structure, such as the one by
  Hershberger and Suri~\cite{Hershberger-1995}, for $P$ in $O(n)$ time
  using $O(n)$ space, which supports a ray-shooting query in
  $O(\log n)$ time.  Then for each convex vertex $v$ of $P$, we
  compute the list $L$ of the vertices of $P$ that are visible from
  $v$ using $\vis(v)$ in $O(n)$ time using $O(n)$ space.
  
  Then for each convex vertex $v'$ in $L$, we determine if the square
  $S$ with diagonal $vv'$ is contained in $P$ by checking if each side
  of $S$ is contained in $P$ with a ray-shooting query from an
  endpoint along the side in $O(\log n)$ time.  The length of the ray
  from the query is at least $|vv'|/\sqrt{2}$ if and only if the side
  is contained in $P$.  If all four sides of $S$ are contained in $P$,
  we compute the area of $S$. Among all such squares, we return one
  with the maximum area.
\end{proof}

\section{Computing a largest rectangle of type \typeb}
\label{sec:comp-b}
We show how to compute all \lmrs of type \typeb and a largest
rectangle among them.  We compute for each \dsc a largest \lmr over
the maximal and breaking configurations. In doing so, we maintain a
combinatorial structure for each reflex vertex which helps compute all
\lmr{s} of type \typeb during the rotation of the coordinate axes.

\subsection{Staircase of a point in a simple polygon}
We define the staircase $\stc(u)$ of a point
$u\in P$ as the set of points $p\in P$ with
$p_x \le u_x$ and $p_y \le u_y$ such that the axis-aligned rectangle
with diagonal $up$ is contained in $P$ but no axis-aligned rectangle
with diagonal $uq$ is contained in $P$ for any point $q\in P$
with $q_x < p_x$ and $q_y < p_y$.  Thus, $\stc(u)$ can be
represented as a chain of segments. See
Figure~\ref{fig:extreme-staircase} (a).

The staircase of a point $u\in P$ in orientation $\theta$, 
denoted by $\stc_\theta(u)$, is defined as the staircase of $u$ in
$\mathsf{C}_\theta$.
Every
axis-aligned segment of $\stc_\theta(u)$ has one endpoint at a vertex
of $P$, $\feta(u)$, or $\fdelta(u)$.  A segment of
$\stc_\theta(u)$ that is not aligned to the axes is 
a part of an edge $e$ of $P$ and is called an \emph{oblique segment}.
(We say $e$ appears to the staircase in this case.)
Each vertex of $\stc_\theta(u)$ which is a polygon vertex or a foot of
$\eta(u)$ or $\delta(u)$ is called an \emph{extremal vertex}.  An
extremal vertex $v$ is called a \emph{tip} if it is a reflex vertex of
$P$.  A vertex of $\stc_\theta(u)$ contained in $P$ is called a
\emph{hinge}.
A \emph{step} of $\stc_\theta(u)$, denoted by an ordered pair $(a,b)$,
is the part of $\stc_\theta(u)$ between two consecutive extremal
vertices $a$ and $b$ along $\stc_\theta(u)$, where $a_x\leq b_x$ and
$a_y\geq b_y$.  It consists of either (A) two consecutive segments
$ar$ and $rb$ for $r=\delta (a) \cap \eta(b)$, or (B)
three consecutive segments $a\fdelta(a)$, $\fdelta(a)\feta(b)$, and $\feta(b)b$
with $\fdelta(a)_x\leq \feta(b)_x$ and $\fdelta(a)_y\geq \feta(b)_y$. 
Note that $\fdelta(a)\feta(b)$ is the oblique segment of step $(a,b)$ 
which we denote by $\ob(a,b)$.  A horizontal,
vertical, or oblique segment of a step can be just a point in case of
degeneracy.

We can construct $\stc_\theta(u)$ for a fixed $\theta$ in
$O(n)$ time by traversing the boundary of $P$ in counterclockwise direction
starting from $\feta(u)$ while maintaining the staircase of $u$ with
respect to the boundary chain traversed so far. When the next vertex
$v$ of the boundary chain satisfies $v_x\geq t_x$ and $v_y\leq t_y$ 
for the last vertex $t$ of the current staircase,
we append it to the staircase. If (part of) the edge incident to $v$ is an
oblique segment of the staircase, then we append it together with $v$
to the staircase.  If $v_x<t_x$,  
we ignore $v$ and proceed to the vertex next to $v$. 
If $v_x\geq t_x$ and $v_y>t_y$,
we remove the portion of the current staircase violated by $v$ and
append $v$ to the staircase accordingly. Observe that each vertex 
and each edge appear on the staircase at most once during the
construction. 

\subsection{Maintaining the staircase during rotation of the
  coordinate system}
\label{sec:type-B-staircase}
Bae et al.~\cite{Bae-2009} considered the rectilinear convex hull for
a set $Q$ of $n$ point in the plane and presented a method of
maintaining it while rotating the coordinate system in $O(n^2)$
time. The boundary of the rectilinear convex hull consists of four
maximal chains, each of which is monotone to the coordinate axes.  
We adopt their method and maintain the staircase of a reflex vertex 
$u$ in a simple polygon.

The combinatorial structure of $\stc_\theta(u)$ changes during the
rotation. Figure~\ref{fig:extreme-staircase} (b-f) show
$\stc_{0}(u)$, $\stc_{\theta_1}(u)$ and $\stc_{\theta_2}(u)$ for
three orientations $0,\theta_1,$ and $\theta_2$ ($0<\theta_1<\theta_2<\pi/2$).  Two consecutive steps,
$(a,b)$ and $(b,c)$, of the staircase merge into one step $(a,c)$ when
$\fdelta(a)$ meets $b$ (Figure~\ref{fig:extreme-staircase} (b)).  A
step $(a,b)$ splits up into two steps $(a,v)$ and $(v,b)$ when
$\feta(b)$ meets a polygon vertex $v$
(Figure~\ref{fig:extreme-staircase} (c)).  A step changes its type
between (A) and (B) when the hinge of a step hits a polygon edge (and
then it is replaced by an oblique segment) or the oblique segment of a
step degenerates to a point (and then it becomes a hinge)
(Figure~\ref{fig:extreme-staircase} (d)).  The upper tip $a$ (or the
lower tip $b$) of a step $(a,b)$ can disappear from the staircase when
$\fdelta(\feta(u))$ meets $a$ (or $\feta(\fdelta(u))$ meets $b$).
Finally, a vertex, possibly along with an edge incident to it, can
be added to or deleted from $\stc_\theta(u)$ when it is met by 
$\feta(\fdelta(u))$ or $\fdelta(\feta(u))$. We call such a change of the staircase due to the
cases described above a \emph{step event}.

One difference of the staircase $\stc_\theta(u)$ to the one for a
point-set by Bae et al.  is that the two boundary points of
$\stc_\theta(u)$ are $\feta(u)$ and $\fdelta(u)$. Since the
polygon is not necessarily monotone with respect to the axes, the
staircase may change discontinuously when $\feta(u)$ or $\fdelta(u)$
meets a vertex of $P$, which we call a \emph{ray event}.  The step of
$\stc_\theta(u)$ incident to $\feta(u)$ is replaced by a chain of
$O(n)$ steps when $\feta(u)$ meets a vertex of $P$
(Figure~\ref{fig:extreme-staircase} (e)).  A subchain 
incident to $\fdelta(u)$ is replaced by a single step when $\fdelta(u)$ 
meets a vertex of $P$ (Figure~\ref{fig:extreme-staircase} (f)).  
We call the appearance or disappearance of a step caused by 
a ray event a \textit{shift event} of the ray event.  
Note that $O(n)$ shift events occur at a ray event.
Observe that all the changes occurring in a staircase during the
rotation are caused by step, ray, or shift events. We abuse 
$\stc_\theta(u)$ to denote the combinatorial structure
of the staircase if understood in context.
\begin{figure}[t]
  \centering
  \includegraphics[width=\textwidth]{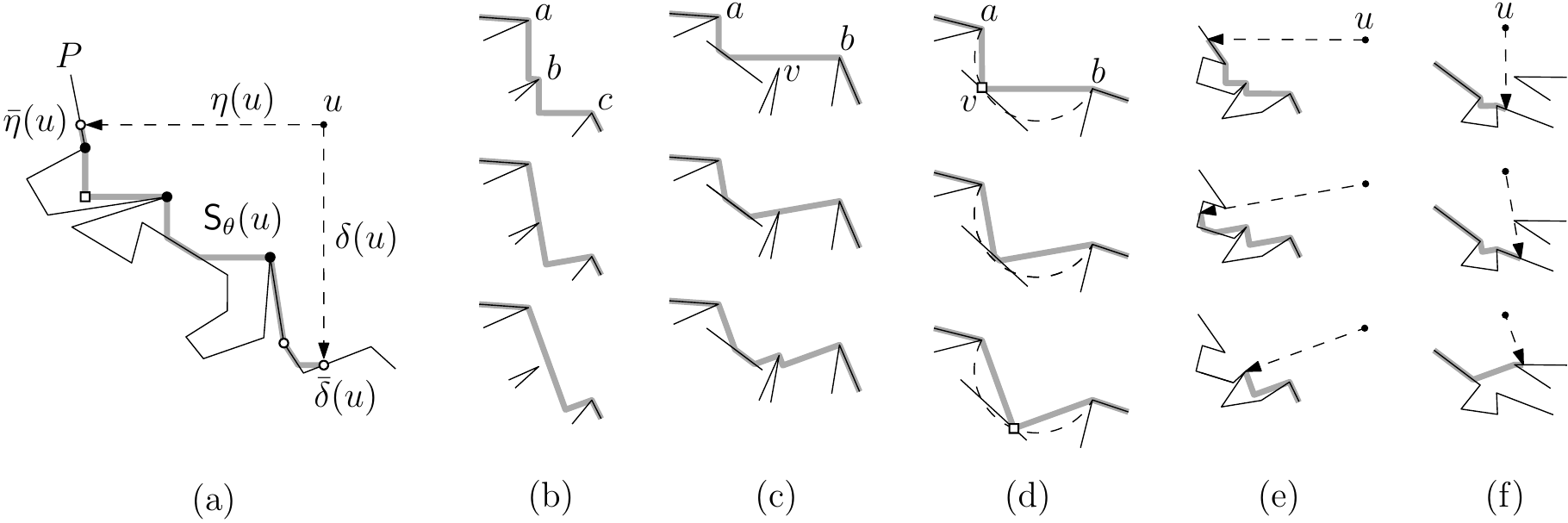}
  \caption{(a) Staircase $S_\theta(u)$ (thick gray chain) and the tips
    (black disks), the extremal vertices (black disks and circles),
    and the hinge (square) of $S_\theta(u)$. (b--d) Step events, and
    (e--f) ray events during the rotation of the coordinate system.}
  \label{fig:extreme-staircase}
\end{figure}

\begin{lemma}\label{lem:hinge_edge}
  Suppose a polygon edge $e$ disappears from a step $(a,b)$
  of $\stc_\theta(u)$ at $\theta$ by a step event.  Then
  $e$ never appears again to the staircase during the remaining
  rotation until $a_y$ becomes larger than $u_y$ in the coordinate
  system.
\end{lemma}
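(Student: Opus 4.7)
The plan is to identify how $e$ can leave its step, and then to show that, while $a$ stays weakly below the horizontal line through $u$ in $\mathsf{C}_\theta$, no subsequent step event can reintroduce $e$ as an oblique segment of any step of $\stc_\theta(u)$.

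I would first reduce to a canonical configuration. Inspecting the listed step events, the only one that actually strips an oblique segment on $e$ from its step is the type-(B)-to-(A) transition on $(a,b)$, in which the oblique portion of $(a,b)$ on $e$ collapses to a single hinge $p \in e$; the merging, splitting, and tip-disappearance cases either transfer the oblique on $e$ to a neighbouring step of $\stc_\theta(u)$ (so $e$ has not actually left the staircase at this event) or reduce to the transition case after relabelling extremal vertices. Thus one may assume $\fdelta(a) = \feta(b) = p$ at $\theta$, with $p = (a_x, b_y)$ in $\mathsf{C}_\theta$.

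The main geometric tool I would exploit is a Thales-circle argument for the hinge trajectory. As $\theta'$ varies, the point with $\mathsf{C}_{\theta'}$-coordinates $(a_x^{\theta'}, b_y^{\theta'})$ is the intersection of two mutually perpendicular lines through the fixed points $a$ and $b$, so by Thales' theorem it lies on the circle $C_{ab}$ of diameter $ab$; a short computation shows that its angular position on $C_{ab}$, measured from the midpoint of $ab$, equals $2\theta' - \phi_D$ for a constant $\phi_D$. Consequently, if the identity of the step $(a,b)$ is preserved, then $e$ can serve as the oblique of $(a,b)$ again only when the hinge returns to the second intersection $p'$ of $C_{ab}$ with the supporting line of $e$. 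A direct angle computation on $C_{ab}$, combined with the fact that $u$ and $p$ lie on the same side of the chord $ab$ (which follows because $(a,b)$ is a step of $\stc_\theta(u)$), would show that this second visit occurs only at a $\theta'$ for which $a_y > u_y$ already holds.

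The hardest part, which I expect to be the main obstacle, is to rule out that $e$ re-enters the staircase as the oblique of a different step $(a',b') \ne (a,b)$ produced by intervening step events. I would argue by minimality: take the smallest $\theta^\star > \theta$ at which $e$ returns to $\stc_{\theta^\star}(u)$, suppose for contradiction that $a_y \le u_y$ at $\theta^\star$, and inspect the type-(A)-to-(B) event responsible for reintroducing $e$. Such an event places the hinge of some step $(a',b')$ on $e$, so the axis-aligned rectangle (in $\mathsf{C}_{\theta^\star}$) from $u$ to this hinge is contained in $P$. At $\theta^\star$ the vertex $a$ is still strictly below-left of $u$ and still lies on the $u$-side of the supporting line of $e$, so $a$ must sit inside or on a non-corner boundary point of this rectangle; using the two polygon edges incident to $a$, I would exhibit a point of the rectangle that falls outside $P$, a contradiction. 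Converting the blocking role of $a$ into a precise violation of $P$-containment is the technical heart of the proof.
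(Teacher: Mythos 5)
Your reduction to the hinge-degeneration event (the oblique segment $\ob(a,b)$ collapsing to the hinge of $(a,b)$) is exactly the paper's first step, and your overall strategy --- show that after this event every relevant point of $e$ is ``blocked'' by $a$ or $b$ so that no axis-aligned rectangle from $u$ to that point can lie in $P$ --- is also the idea the paper uses. But the step you yourself flag as ``the technical heart'' is genuinely missing, and it is precisely where the paper does its work. The paper closes it with a short coordinate argument rather than the construction you sketch: letting $\theta_1$ be the orientation at which $a$ and $b$ become vertically aligned, it observes that for every $\theta'\in(\theta,\theta_1]$ every point $p\in e$ satisfies $p_x<a_x$ or $p_y<b_y$ in $\mathsf{C}_{\theta'}$, so the axis-aligned rectangle with diagonal $up$ contains $a$ or $b$ in its interior; since $a$ and $b$ lie on $\partial P$, such a rectangle cannot be contained in $P$, and hence no point of $e$ can lie on $\stc_{\theta'}(u)$ as part of \emph{any} step. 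This disposes of your ``different step $(a',b')$'' case uniformly, without a separate minimality argument. For orientations beyond $\theta_1$ the paper switches to a combinatorial observation: $e$ lies on the boundary chain from $a$ to $b$, so it remains hidden until $a_y>u_y$; your sketch does not address this regime separately, and your minimality argument would have to carry it.

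Two further remarks on the parts you do spell out. First, your plan to derive the contradiction ``using the two polygon edges incident to $a$'' is both harder than necessary and too narrow: for points of $e$ on the other side of the hinge the blocker is $b$, not $a$, and an extremal vertex of a step need not be a reflex polygon vertex (it can be $\feta(u)$ or $\fdelta(u)$), so an argument hinging on the edges incident to $a$ does not cover all cases. The clean statement is simply that a point of $\partial P$ in the open interior of a rectangle forces the rectangle out of $P$. Second, the Thales-circle analysis of the hinge trajectory is correct (the paper exploits the disk $D(ab)$ elsewhere for exactly this reason), but it only controls when $e$ could return as the oblique of the \emph{same} step $(a,b)$ and is subsumed by the blocking argument; it can be dropped.
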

\begin{proof}
  Observe that a polygon edge $e$ of $P$ appearing on
  $\stc_\theta(u)$ is the oblique segment $\ob(a,b)$ of a step
  $(a, b)$ which is contained in $D(ab)$.
  We claim that $e$ disappears from the staircase at
  $\theta$ when $\ob(a,b)$ degenerates to a point on $e$
  and becomes the hinge of $(a,b)$.  Assume to the contrary that
  $e$ disappears from the staircase when $\feta(b)$ meets a polygon vertex $v$. 
  Then $v$ becomes a new extremal vertex of the
  staircase by definition, and $e$ remains on
  $\stc_\theta(u)$ as an oblique segment $\ob(a,v)$ of step $(a,v)$.
  Similarly, it can be shown for the other types of step events that
  the $e$ remains on $\stc_\theta(u)$ as the oblique segment
  of a step in the staircase right after such an event.

  Now we prove that $e$ never appears again to the staircase
  as the oblique segment of a step until $a_y$ becomes larger than
  $u_y$ in the coordinate system.  Let $\theta_1$ be the orientation
  with $\theta<\theta_1<\pi/2$ that aligns $a$ and $b$ vertically in
  the coordinate system rotated by $\theta_1$.  Note that for any
  $\theta' \in (\theta,\theta_1]$, every point $p\in e$ satisfies
  $p_x< a_x$ or $p_y< b_y$, which implies that the axis-aligned
  rectangle with diagonal $up$ contains $a$ or $b$ in its interior.
  Since $e$ is an edge on the boundary chain from $a$ to $b$
  counterclockwise along the boundary of $P$, it never appears to the
  staircase during the remaining rotation after $\theta_1$ until $a_y$
  becomes larger than $u_y$.
\end{proof}
\begin{lemma}\label{lem:stair-complex}
  The number of events that occur to $\stc_\theta(u)$ during the
  rotation is $O(n^2)$.
\end{lemma}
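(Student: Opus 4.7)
The plan is to split the events into three categories---ray, shift, and step events---and bound each by $O(n^2)$. Ray events correspond to $\feta(u)$ or $\fdelta(u)$ meeting a vertex of $P$; as $\theta$ sweeps a full rotation, the foot of each ray moves continuously along $\bd P$ and meets each polygon vertex $O(1)$ times, giving $O(n)$ ray events in total. Shift events are already bounded by $O(n)$ per ray event in the discussion preceding the lemma, so there are $O(n^2)$ shift events overall.

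The main work is to bound step events, which I would do with a charging argument based on pairs (or triples) of polygon elements. A merge event $(a,b),(b,c)\to(a,c)$ happens when polygon vertices $a$ and $b$ become vertically aligned in $\mathsf{C}_\theta$; a split event happens when two polygon vertices become horizontally aligned; a type-change event between (A) and (B) happens when a polygon edge $e$ passes through the hinge point of some step $(a,b)$; and a tip disappearance corresponds to a collinearity involving $u$ and a polygon vertex. For any fixed pair of polygon vertices, the relevant alignment condition is satisfied at $O(1)$ orientations per rotation, so summing over $O(n^2)$ vertex pairs bounds merge, split, and tip-disappearance events by $O(n^2)$. For type-change events, a naive pair/triple count would give $O(n^3)$, which is too weak; here Lemma~\ref{lem:hinge_edge} is the crucial refinement. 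Once a polygon edge $e$ disappears from $\stc_\theta(u)$ via a step event at a step $(a,b)$, $e$ cannot reappear until $a_y>u_y$ in the current coordinate system, and for each polygon vertex $a$ the transition from $a_y\le u_y$ to $a_y>u_y$ occurs $O(1)$ times per rotation. Hence across all $n$ polygon vertices there are only $O(n)$ orientations at which such a re-entry is enabled, so each edge $e$ can be an oblique segment of $\stc_\theta(u)$ during only $O(n)$ disjoint sub-intervals of the rotation, contributing $O(n)$ step events. Summing over the $O(n)$ edges yields $O(n^2)$ edge-triggered step events.

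Combining the three counts gives $O(n)+O(n^2)+O(n^2)=O(n^2)$ events in total. The main obstacle I anticipate is exactly this refinement by Lemma~\ref{lem:hinge_edge}: without it, charging type-change events to (vertex, vertex, edge) triples would inflate the count to $O(n^3)$, and one has to check carefully that every step event can be assigned either to an alignment of a vertex pair or to an appearance/disappearance interval of a single edge so that the $O(n^2)$ bound is genuine rather than a double count.
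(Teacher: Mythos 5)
Your proposal is correct and follows essentially the same route as the paper: ray events bounded by $O(n)$ with $O(n)$ shift events each, merge/split events charged to $O(1)$ alignment orientations per vertex pair, and Lemma~\ref{lem:hinge_edge} invoked exactly as the paper does to cap oblique-segment (type-change) events at $O(n)$ per edge. The only nitpick is your charging of tip-disappearance events ($\fdelta(\feta(u))$ meeting a vertex) to vertex \emph{pairs} at $O(1)$ orientations each: since $\feta(u)$ can lie on $O(n)$ different edges over the rotation, the correct charge is to (edge, vertex) pairs (the paper states $O(n)$ such events per vertex), which still sums to $O(n^2)$ and does not affect the result.
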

\begin{proof}
  In the initialization at $\theta=0$, there are $O(n)$ vertices and
  edges in $\stc_0(u)$.  A ray event occurs only if $\feta(u)$ or
  $\fdelta(u)$ meets a vertex of $\vis(u)$.  Since $\feta(u)$ (and
  $\fdelta(u)$) meets a vertex of $P$ at most once during the
  rotation, the number of ray events is $O(n)$, resulting with
  $O(n^2)$ shift events.

  Now we count the step events occurring during the rotation. Once two
  consecutive steps, $(a,b)$ and $(b,c)$, merge into $(a,c)$ and tip
  $b$ disappears from $\stc_\theta(u)$ (by the event that $\fdelta(a)$
  meets $b$, Figure~\ref{fig:extreme-staircase} (b)), $b$ may appear
  again to the staircase only after $\feta(u)$ meets $a$.  Similarly,
  a step $(a,b)$ splits up into two steps, $(a,v)$ and $(v,b)$, and a
  polygon vertex $v$ appears as a tip to the staircase when $\feta(b)$
  meets $v$ (Figure~\ref{fig:extreme-staircase} (c)).  Since
  $\feta(v)$ and $\fdelta(v)$, for any vertex $v$, meet another vertex
  of $P$ at most once each during the rotation,
  a vertex appears to and disappears from the staircase $O(n)$ times
  in this way.  Moreover, the number of the step events occurring when
  $\fdelta(\feta(u))$ or $\feta(\fdelta(u))$ meets a polygon vertex
  $v$ is $O(n)$ for each vertex $v$.  Thus the total number of step
  events occurred by $\fdelta(\feta(u))$ and $\feta(\fdelta(u))$
  meeting polygon vertices
  is $O(n^2)$. It suffices to show that the total number of events
  induced by appearances or disappearances of oblique segments is
  $O(n^2)$. Lemma~\ref{lem:hinge_edge} implies that a
  polygon edge appears to the staircase as an oblique segment at most
  $O(n)$ times during the rotation. Thus the number of events induced
  by oblique segments is $O(n^2)$ in total.
\end{proof}
To capture these combinatorial changes and maintain the staircase
during the rotation, we construct for every reflex vertex $p$ of
$P$, 
the list of segments of visibility region $\vis(p)$ sorted in angular
order.  We compute for every pair $(p,q)$ of reflex vertices of $P$,
the list $C(p, q)$ of vertices and segments of
$\partial\vis(\{p,q\})\cap D(pq)$, 
sorted in angular order with respect to $p$
and $q$. We also compute for every pair $(p,e)$ of a reflex vertex $p$
and edge $e$, the sorted list $L(p,e)$ of angles at which
$\delta(\feta(p))$ or $\eta(\fdelta(p))$ meets a vertex of $P$ while
$\feta(p)$ or $\fdelta(p)$ lies on $e$.  We store for each orientation in $L(p,e)$
the information on the vertex corresponding to the orientation.  
This can be computed by
finding the points that $e$ intersects with the boundary of $D(tp)$
for each vertex $t$ of $P$.  These structures together constitute the
\emph{event map}.

We also construct an \emph{event queue} for each reflex vertex, which is a
priority queue that stores events indexed by their orientations.  This
is to update the staircase during the rotation in a way similar to the
one by Bae et al.~\cite{Bae-2009} using the event map.
\begin{lemma}\label{lem:stair-eventmap}
  The event map is of size $O(kn^2)$ and can be constructed in $O(kn^2\log n)$ time. 
\end{lemma}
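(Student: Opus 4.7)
The plan is to bound the size and construction time of the event map by decomposing it into its three constituents and handling each separately.

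First I would handle the visibility region lists. For each of the $k$ reflex vertices $p$, the visibility region $\vis(p)$ inside a simple polygon with $n$ vertices has combinatorial complexity $O(n)$ and can be computed in $O(n\log n)$ time by a standard visibility algorithm; sorting its edges in angular order around $p$ adds another $O(n\log n)$ factor. Thus this part contributes $O(kn)$ space and $O(kn\log n)$ construction time, which is within budget.

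Next I would analyze the lists $C(p,q)$ for ordered pairs of reflex vertices. There are $O(k^2)$ such pairs. The key observation is that $\vis(\{p,q\})=\vis(p)\cap\vis(q)$ is the intersection of two simple polygonal regions each of complexity $O(n)$, so its boundary has complexity $O(n)$; intersecting with $D(pq)$ can only truncate the boundary and keeps the complexity $O(n)$. Computing $\vis(p)\cap\vis(q)$ and then clipping to $D(pq)$ can be done in $O(n\log n)$ time via a plane-sweep on the two already-available visibility polygons, followed by an angular sort around $p$ and $q$ in additional $O(n\log n)$ time. Summing over all $O(k^2)$ pairs gives $O(k^2n)=O(kn^2)$ space (using $k\le n$) and $O(k^2 n\log n)=O(kn^2\log n)$ time.

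Finally I would bound the lists $L(p,e)$ over the $O(kn)$ pairs of a reflex vertex $p$ and an edge $e$. As the excerpt indicates, an orientation in $L(p,e)$ arises when, for some vertex $t$ of $P$, the edge $e$ meets $\partial D(tp)$; since a line (hence an edge) intersects a circle in at most two points, each vertex $t$ contributes at most $O(1)$ candidate orientations per pair $(p,e)$. Hence $|L(p,e)|=O(n)$, which is obtained in $O(n)$ time by iterating over the vertices $t$ and computing the intersections in constant time each, followed by an angular sort in $O(n\log n)$. Summed over $O(kn)$ pairs, this contributes $O(kn^2)$ space and $O(kn^2\log n)$ time.

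Adding the three contributions gives total size $O(kn)+O(kn^2)+O(kn^2)=O(kn^2)$ and total construction time $O(kn\log n)+O(kn^2\log n)+O(kn^2\log n)=O(kn^2\log n)$, which is the claimed bound. The only delicate step I expect is the complexity argument for $C(p,q)$: one has to confirm that taking the intersection of two visibility polygons and then clipping by $D(pq)$ really produces a region of complexity $O(n)$ (rather than $O(n^2)$ which would follow from a naive argument over pairs), and that an efficient $O(n\log n)$ intersection procedure is applicable here given that both $\vis(p)$ and $\vis(q)$ are simple polygons precomputed in the first step.
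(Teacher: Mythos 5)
Your decomposition into the three constituents (visibility regions, the lists $C(p,q)$, and the lists $L(p,e)$) and the per-item bounds you derive match the paper's own proof essentially exactly; the slightly looser $O(n\log n)$ per-pair costs you allow still sum to the claimed $O(kn^2\log n)$ time and $O(kn^2)$ space. The one point you flag as delicate---that $\partial\vis(p)\cap\vis(q)$ clipped by $D(pq)$ has complexity $O(n)$---is simply asserted without further justification in the paper as well (which in fact claims $O(n)$ construction time per pair), so your caution there is warranted but does not indicate a divergence from the paper's argument.
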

\begin{proof}
  We construct event map by computing the visibility region for each
  vertex, $C(p,q)$ for every pair of reflex vertices, and
  $L(p,e)$ for every pair of a reflex vertex $p$ and an edge
  $e$ of $P$. $\vis(p)$ has $O(n)$ size and can be computed, 
  for each vertex $p$, in $O(n)$ time.  $C(p,q)$ has $O(n)$ size and can be computed in
  $O(n)$ time as well, for each reflex vertex $p$
  and each vertex $q$, by cutting $\vis(p)\cap\vis(q)$ 
  with $D(pq)$.
  Finally, when constructing $C(p,t)$, we mark
  $\partial D(tp) \cup e$ aligned on $e$ for vertex $t$ and edge $e$
  in $O(n)$ time for each $t$. Then we can sort the marks to get
  $L(p,e)$ in $O(n \log n)$ time using $O(n)$ space for each edge
  $e$. Therefore, it takes $O(kn^2 \log n)$ time to construct the event map
  and its size is $O(kn^2)$.
\end{proof}
For a reflex vertex $u$, we maintain $\stc_\theta(u)$ and the event
queue $\mathcal{Q}$ for $u$ during the rotation using the event map.
We also store the extremal vertices and edges of the staircase in a
balanced binary search tree $\mathcal{T}$ representing
$\stc_\theta(u)$ at the moment in order along the staircase 
so as to insert and delete an element in $O(\log n)$ time.  
We process
the events in the queue one by one in the order of priority and update
the event queue.
\begin{lemma}\label{lem:stair-update}
  Once the event map is constructed, $\stc_\theta(u)$ can be
  maintained over all events during the rotation in $O(n^2\log n)$
  time using $O(n)$ space.  The event queue for $u$ is maintained in
  the same time using $O(n^2)$ space.
\end{lemma}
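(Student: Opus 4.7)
The plan is to maintain the staircase by an event-driven simulation that processes events in order of orientation, using the precomputed event map to look up each upcoming event in $O(\log n)$ time. First I would initialize $\stc_0(u)$ in $O(n)$ time via the boundary-traversal procedure described above, and store its $O(n)$ extremal vertices in a balanced binary search tree $\mathcal{T}$ ordered along the staircase, so that every insertion, deletion, or neighbor query costs $O(\log n)$. I would then populate the event queue $\mathcal{Q}$ with the first upcoming event for each current step: for a step $(a,b)$, a merge, split, oblique-to-hinge transition, or tip-disappearance event is detected by a predecessor query in $C(a,b)$ or in the appropriate $L(\cdot,\cdot)$ list; since these lists are already sorted by orientation, the query takes $O(\log n)$. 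Ray events for $\feta(u)$ and $\fdelta(u)$ meeting a polygon vertex are read analogously from the angularly sorted boundary of $\vis(u)$. This gives an initial $\mathcal{Q}$ of $O(n)$ entries in $O(n \log n)$ total time.

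The main loop repeatedly extracts the next event from $\mathcal{Q}$. A step event is handled by $O(1)$ modifications to $\mathcal{T}$, followed by $O(1)$ new insertions into $\mathcal{Q}$ scheduled for the newly formed steps, each again via a presorted lookup into the event map. A ray event fires its associated chain of shift events, where the sequence of shifts when $\feta(u)$ (or $\fdelta(u)$) traverses an edge $e$ is read directly from the sorted list $L(u,e)$; each shift event performs a single insertion or deletion in $\mathcal{T}$ together with scheduling one follow-up event. Lemma~\ref{lem:stair-complex} bounds the total number of step, ray, and shift events by $O(n^2)$, and each is processed in $O(\log n)$ time on $\mathcal{T}$ and $\mathcal{Q}$, so the full maintenance runs in $O(n^2 \log n)$ time. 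The tree $\mathcal{T}$ holds only the $O(n)$ extremal vertices currently on the staircase, so the staircase uses $O(n)$ space; the queue $\mathcal{Q}$ may carry one pending entry per future processed event, so $O(n^2)$ space suffices for it.

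The main obstacle is ensuring that each processed event schedules only $O(1)$ new entries in $\mathcal{Q}$ and that obsolete scheduled events, for instance the next event of a step that has just been merged away, can be canceled or ignored in $O(\log n)$ per event. I would handle this by keeping back-pointers from each tree node to its currently pending queue entry, so that when the corresponding step is destroyed its stale entry is either deleted from $\mathcal{Q}$ or lazily discarded when it is later extracted. Lemma~\ref{lem:hinge_edge} then rules out a polygon edge reappearing as an oblique segment more than $O(n)$ times, and together with the event counting of Lemma~\ref{lem:stair-complex} this keeps the amortized cost at $O(\log n)$ per event, giving the claimed $O(n^2 \log n)$ time, $O(n)$ space for $\stc_\theta(u)$, and $O(n^2)$ space for $\mathcal{Q}$.
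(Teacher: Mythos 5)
Your proposal is correct and follows essentially the same route as the paper: initialize $\stc_0(u)$ and a balanced search tree $\mathcal{T}$, drive the simulation from a priority queue whose entries are found by binary search in the precomputed event map, charge $O(\log n)$ per event against the $O(n^2)$ event bound of Lemma~\ref{lem:stair-complex}, and handle the $O(n)$ shift events at each ray event at $O(\log n)$ apiece. The only small divergence is at ray events: the paper obtains the new $O(n)$-step subchain by constructing the staircase of the vertex just met and splicing it in, rather than reading it from $L(u,e)$ (which only records when $\delta(\feta(u))$ or $\eta(\fdelta(u))$ meets a vertex), but this does not affect the stated bounds.
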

\begin{proof}
  There are $O(n^2)$ events including $O(n)$ ray events in total by
  Lemma~\ref{lem:stair-complex}. Therefore, the event queue can be
  maintained in $O(n^2 \log n)$ time using $O(n^2)$ space.  Initially,
  we construct $\stc_0(u)$ and $\mathcal{T}$ in $O(n\log n)$ time.
  Consider a step event that occurs on step $(a,b)$ of
  $\stc_\theta(u)$: 
  $(a,b)$ and $(b,c)$ merge into $(a,c)$, $(a,b)$ splits
  up into two steps $(a,v)$ and $(v,b)$, $\ob(a,b)$ starts to appear
  in $(a,b)$, or $\ob(a,b)$ disappears from $(a,b)$.  We update
  $\stc_\theta(u)$ and $\mathcal{T}$ accordingly in $O(\log n)$ time.
  Then we apply binary search on $C(a,b)$ with query $\delta(a)$ or
  $\eta(b)$ to find the next candidate step event that may occur and
  change $(a,b)$ at an orientation $\theta'>\theta$ and to insert it
  into $\mathcal{Q}$. If the step contains $\feta(u)$ (or $\fdelta(u)$),
  we apply binary search on $L(u,e)$ and $\vis(v)$ with query
  $\delta(\feta(u))$ and $\eta(v)$ (or $\eta(\fdelta(u))$ and
  $\delta(v)$) to find next candidate step event and insert it into
  $\mathcal{Q}$.  This can also be done in $O(\log n)$ time.

  Now consider a ray event caused by $\eta(u)$ at which the tip $q=\feta(u)$ is
  removed from $\stc_\theta(u)$. We first construct $\stc_\theta(q)$
  and find the step on $\stc_\theta(q)$ that new $\feta(p)$ belongs
  to, where $p$ is the second tip of $\stc_\theta(u)$. From that step,
  we traverse $\stc_\theta(q)$ up to the first extremal vertex and add
  the steps encountered during the traverse to $\stc_\theta(u)$ (and
  to $\mathcal{T}$). For each new step, we insert the step event
  candidates corresponding to the step into $\mathcal{Q}$, after
  deleting the events associated with the steps disappearing from
  $\mathcal{Q}$.

  Similarly, at a ray event caused by $\delta(u)$ 
  when a vertex $q=\fdelta(u)$ is added to $\stc_\theta(u)$ as a tip, 
  we find the step of $\stc_\theta(u)$
  from $\mathcal{T}$ which contains $\feta(q)$, and delete the steps
  lying below $\feta(q)$, together with the step event candidates
  associated with those deleted steps from $\mathcal{Q}$. We also add
  the new step incident to $q$ with its step event to
  $\mathcal{Q}$.  There are $O(n)$ steps to delete. Since each
  deletion on $\mathcal{T}$ can be done in $O(\log n)$ time, a ray
  event and associated shift events can be handled in $O(n\log n)$
  time.
\end{proof}
By combining Lemmas~\ref{lem:stair-eventmap} and
\ref{lem:stair-update}, 
we have the following lemma.
\begin{lemma}
\label{lem:stair-complexity}
  The staircases of all $k$ reflex vertices of $P$ can be constructed
  and maintained in $O(kn^2\log n)$ time using $O(kn^2)$ space during
  the rotation.
\end{lemma}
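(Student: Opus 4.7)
The proof will be a direct combination of Lemmas~\ref{lem:stair-eventmap} and \ref{lem:stair-update}. The plan is to first construct the event map once, and then use it to maintain the staircases of all $k$ reflex vertices in parallel as $\theta$ sweeps from $0$ to $\pi/2$. The event map is a shared structure: its components (the visibility regions $\vis(p)$, the lists $C(p,q)$, and the lists $L(p,e)$) depend on pairs and triples drawn from vertices and edges of $P$, but do not depend on which particular reflex vertex $u$ we are tracking. So it needs to be built only once.

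First I would invoke Lemma~\ref{lem:stair-eventmap} to build the event map in $O(kn^2\log n)$ time using $O(kn^2)$ space. Then, for each of the $k$ reflex vertices $u$, I would allocate an event queue $\mathcal{Q}_u$ and a balanced binary search tree $\mathcal{T}_u$ representing $\stc_\theta(u)$, initialized at $\theta=0$ in $O(n\log n)$ time. As $\theta$ increases, I process events from the $k$ queues in a common timeline using the event map for queries; by Lemma~\ref{lem:stair-update} the work done on a single staircase over the full rotation is $O(n^2\log n)$ time, with its queue occupying $O(n^2)$ space. Summing over all $k$ reflex vertices yields $O(kn^2\log n)$ additional time and $O(kn^2)$ additional space, which is absorbed by the cost of building the event map.

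The space bound follows by adding: the event map contributes $O(kn^2)$; the $k$ event queues contribute $O(kn^2)$; and the $k$ trees $\mathcal{T}_u$ together with the combinatorial descriptions of the $k$ staircases each take $O(n)$ space, contributing $O(kn)$ in total, which is dominated. The time bound follows analogously: $O(kn^2\log n)$ for the event map plus $k\cdot O(n^2\log n) = O(kn^2\log n)$ for maintaining the staircases and queues.

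There is no real obstacle here beyond observing that the event map is shareable across all reflex vertices, so paying for it only once is what keeps the construction cost at $O(kn^2\log n)$ rather than $O(k\cdot kn^2\log n)$. The lemma then follows immediately by combining the two previous lemmas, and I would state it as such without additional calculation.
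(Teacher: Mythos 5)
Your proposal is correct and matches the paper's argument, which likewise obtains this lemma by combining Lemmas~\ref{lem:stair-eventmap} and \ref{lem:stair-update}: build the shared event map once and then sum the per-vertex maintenance costs over all $k$ reflex vertices. The extra accounting you supply (queues, trees, and the observation that the event map is shareable) is consistent with and implicit in the paper's one-line derivation.
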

\subsection{Data structures - double staircases, event map, and event
  queue}
Our algorithm computes all \lmrs of type \typeb during the rotation and
returns an \lmr with largest area among them.  To
do so, it maintains for each reflex vertex $u$ two staircases,
$S_\theta(u)$ and $S_{\theta+\frac{\pi}{2}}(u)$ which we call the
\emph{double staircase} of $u$, during the rotation of the coordinate
axes and computes the \lmrs of type \typeb
that have $u$ as the top \sidec.  Let $I$ denote the
interval of orientations such that the horizontal line with respect to
any $\theta\in I$ passing through $u$ is tangent to the
boundary of $P$ locally at $u$.  
Let $\rth$ be the largest
axis-aligned rectangle of type \typeb in $\theta\in I$
that is contained in $P$ and has $u$ as the top \sidec.
Observe that every reflex vertex lying on the right side is a tip of
$\stc_{\theta+\frac{\pi}{2}}(u)$. We use $X$ to denote the contact set
around the bottom-left corner $\blc$ of $\rth$.  Then $X$ contains (1)
a tip of $\stc_\theta(u)$ touching the left side and a tip on either
$\stc_\theta(u)$ or $\stc_{\theta+\frac{\pi}{2}}(u)$ touching the
bottom side (type $\typeb_1$), (2) an oblique segment $e$ on
$\stc_\theta(u)$ touching $\blc$ and a tip on either $\stc_\theta(u)$
or $\stc_{\theta+\frac{\pi}{2}}(u)$ touching the bottom side (type
$\typeb_2$), or (3) just an oblique segment $e$ on $\stc_\theta(u)$
touching $\blc$ (type $\typeb_3$). 

For a reflex vertex $u$ of $P$, we construct the double staircase of
$u$, $\stc_0(u)$ and $\stc_\frac{\pi}{2}(u)$. Then we maintain the
event queue $\mathcal{Q}$ containing event orientations in order: the
orientations for staircase events (step and ray events) defined in
previous section and the orientations at which two vertices of
$\vis(u)$ are aligned horizontally. The set of the orientations of the
latter type is to capture the event orientations at which a tip of
$\stc_\theta(u)$ is aligned horizontally with a tip of
$\stc_{\theta+\frac{\pi}{2}}(u)$. We call them \emph{double staircase
  events}.  We initialize $\mathcal{Q}$ with the latter type events.
Note that it does not increase the time and space complexities of the
event queue. 

\subsection{Computing \texorpdfstring{\lmrs}{LMRs} of type
  \texorpdfstring{$\typeb_1$}{B1}}
\label{sec:b1}
Consider a reflex vertex $t$ of $P$ that appears as a tip on
$\stc_\theta(u)$.  We use $f(t)$ to denote the upper tip of the step on
$\stc_{\theta+\frac{\pi}{2}}(u)$ aligned horizontally to $t$.  For
example, in Figure~\ref{fig:LMR_B_vvv}(a,b), $v=f(q)$ in
$\stc_{\theta+\frac{\pi}{2}}(u)$.  During the rotation of the coordinate axes, we
consider the change of $f(t)$ for each tip $t$ on
$\stc_\theta(u)$, as well as step, ray, and shift events on the double staircase.
At each orientation, $f(t)$ can be computed in $O(\log n)$ time 
via binary search on $S_{\theta+\frac{\pi}{2}}(u)$ with $t_y$ 
since the staircase chain is monotone with respect to the $y$-axis. 
Thus we do not need to save the value $f(t)$ for each tip $t$.
We consider the orientation when a tip $t$
on $\stc_\theta(u)$ and $f(t)$ on $\stc_{\theta+\frac{\pi}{2}}(u)$ are
aligned horizontally so that $f(t)$ is set to the next tip on
$\stc_{\theta+\frac{\pi}{2}}(u)$.  
At such a orientation, we detect a \dsc
candidate of the type $\typeb_1$ with top, left, bottom, and right
\sidec{s} as $u$, $p$, $q$, and $v=f(q)$, respectively. Note that
the bottom \sidec $q$ might have 
$q_x > u_x$, that is, $q$ might
appear as a tip on $\stc_{\theta+\frac{\pi}{2}}(u)$.
We process only the case that $q$ is a tip on $\stc_\theta(u)$, since
the other case can be handled when fixing $p$ as an upper side
contact, as described in the following, when step $(q,v)$
disappears by a step event on $\stc_{\theta+\frac{\pi}{2}}(p)$ and $u$ is a
tip on $\stc_{\theta+\pi}(p)$.

\begin{figure}[t]
  \centering \includegraphics[width=.8\textwidth]{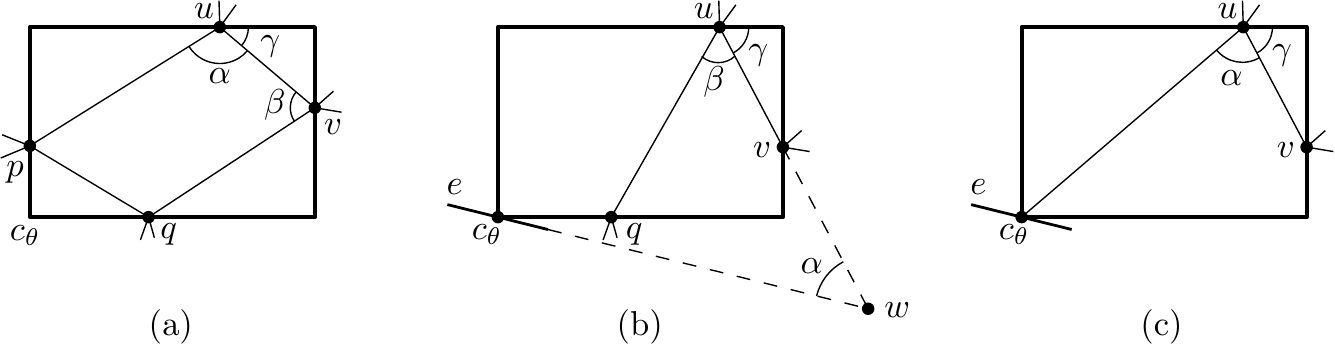}
  \caption{\lmrs of (a) 
    type $\typeb_1$, (b) type $\typeb_2$, and (c) type
    $\typeb_3$.}
  \label{fig:LMR_B_vvv}
\end{figure}

Consider an event $E$ occurring at $\theta$. When
a step of the double staircase that possibly contributes to a \dsc of
type $\typeb_1$ changes due to $E$, we detect possible \dsc{s}
that have been associated with it. Consider a \dsc $\{u, p, q, v\}$ as 
in Figure~\ref{fig:LMR_B_vvv}(a). If $E$ is a step
or shift event on $\stc_{\theta}(u)$, $O(1)$ tips appear or disappear
on $\stc_{\theta}(u)$ and $O(1)$ \dscs are detected 
at each such event.
If $E$ is a step or shift event on $\stc_{\theta+\frac{\pi}{2}}(u)$,
there are $O(n)$ tips $t$ on
$\stc_{\theta}(u)$ such that $f(t)$ changes.  Observe that such an
event corresponds to a step event on the staircase of $q$ in $\mathsf{C}_{\theta+\pi}$.
See Figure~\ref{fig:Bdistr}.
Thus, we may consider only step and
shift events on $\stc_\theta(u)$ together with the double staircase
events to detect possible \dsc{s} of type $\typeb_1$.

\begin{figure}[ht]
  \centering \includegraphics[width=0.5\textwidth]{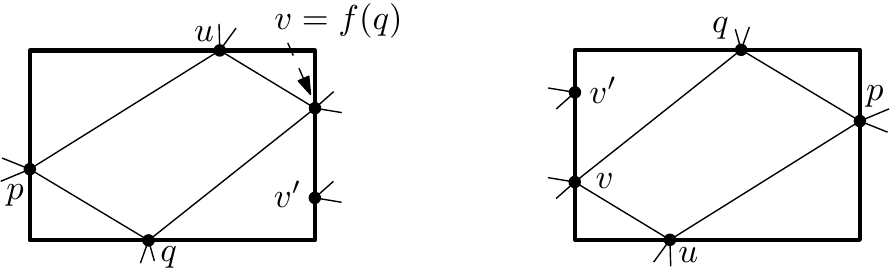}
  \caption{Step event on $\stc_{\theta+\frac{\pi}{2}}(u)$ that $f(q)$
    changes from $v$ to $v'$ (left) which is a step event on $\stc_{\theta+\pi}(q)$
    (right).}
  \label{fig:Bdistr}
\end{figure}
 
We consider $O(1)$ \dsc{s} for each event.  Let $Z=\{u,p,q,v\}$ be
a \dsc of type $\typeb_1$ such that $(p,q)$ is changed or $q$ and $v$
are aligned horizontally by an event $E$ at $\theta_E$.
Given a closed interval $J$, we can compute the set
$\Theta_Z$ of orientations $\theta_Z \in J$ maximizing
$\mu(\gth(Z))$ locally in $O(1)$ time because the area function $\mu(\gth(Z))=(|uv|\cos \gamma + |up| \cos (\pi-(\alpha+\gamma)))
                  (|uv|\sin \gamma + |qv| \cos(\frac{\pi}{2}-(\beta-\gamma)))$ 
has $O(1)$ extremal values in $J$. For angles $\alpha, \beta, \gamma$, see
Figure~\ref{fig:LMR_B_vvv}(a). 

We find the maximal interval $J$ of orientations for $Z = \{u,p,q,v\}$ 
found in an event $E$ occurring at $\theta_E$ such that 
$\theta_E\in J$ and all elements of $Z$ appear on the double staircase of $u$. 
This can be done by maintaining the latest orientation ($<\theta_E$)
at which $(p,q)$ starts to appear as a step, the latest orientation ($<\theta_E$) at which
$v$ starts to appear as a tip to the double staircase, and the orientation at which $f(q)$ was set to $v$.
Then $J=[\theta_a, \theta_E]$, where $\theta_a$ is the latest of orientation
at which all elements of $Z$ and the step consisting of elements of $Z$ start to appear on the double staircase while satisfying $f(q)=v$. 
Note that the \lmr{s} with contact $Z$ occur at every orientation of $\Theta_Z$ 
and the two endpoints (orientations) of $J$. Observe that the rectangle
$\Gamma_{\theta_Z}(Z)$ with $\theta_Z \in \Theta_Z$ corresponds to a
maximal configuration, and $\Gamma_{\theta}(Z)$ with $\theta$ being an
endpoint of $J$ corresponds to a breaking configuration. In this way,
we can compute $O(1)$ \lmrs satisfying $Z$ in $O(1)$ time.

\begin{lemma}
  \label{lem:double_staircase}
  Once the event map is constructed, for a fixed reflex vertex $u$, it
  takes $O(n^2\log n)$ time to maintain the double staircase of $u$
  and the event queue over all events during the rotation, and 
  to compute the \lmr{s} of type $\typeb_1$ having $u$ as the top \sidec.
\end{lemma}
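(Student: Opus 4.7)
The plan is to split the budget into three pieces: maintaining the two staircases $\stc_\theta(u)$ and $\stc_{\theta+\pi/2}(u)$ during the full rotation, maintaining the enlarged event queue $\mathcal{Q}$ (which also holds the double staircase events), and detecting and evaluating the \dscs of type $\typeb_1$ triggered by events. I would then add the three contributions and observe that the maintenance cost already dominates the detection cost.

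For the first piece I would just invoke Lemma~\ref{lem:stair-update} twice, once for $\stc_\theta(u)$ and once for $\stc_{\theta+\pi/2}(u)$; each yields $O(n^2 \log n)$ time and $O(n^2)$ space, and these bounds add. For the second piece, the double staircase events can be pre-enumerated: since $|\vis(u)|=O(n)$, the number of pairs of vertices of $\vis(u)$ is $O(n^2)$, and each pair determines the unique orientation (in $[0,\pi)$) at which its two vertices align horizontally. Pushing all $O(n^2)$ such orientations into the priority queue at initialization costs $O(n^2 \log n)$ time and $O(n^2)$ space, matching what the staircase events already require.

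For the third piece, the paragraph preceding the lemma reduces the events that can produce a new \dsc of type $\typeb_1$ to step/shift events on $\stc_\theta(u)$ and double staircase events; by Lemma~\ref{lem:stair-complex} and the paragraph above, both classes have total size $O(n^2)$. At each such event only $O(1)$ candidate \dscs $Z=\{u,p,q,v\}$ arise (namely those involving the steps or tips that just changed, and, for a double staircase event, the tip $t$ with $t_y$ equal to the aligned $y$-coordinate, which is located by a single binary search on $\stc_{\theta+\pi/2}(u)$ in $O(\log n)$ time). For each such $Z$, the maximal interval $J=[\theta_a,\theta_E]$ is obtained in $O(1)$ time by consulting three timestamps that I would maintain during the staircase sweep: the last orientation at which step $(p,q)$ became part of $\stc_\theta(u)$, the last orientation at which $v$ became a tip of $\stc_{\theta+\pi/2}(u)$, and the last orientation at which $f(q)$ became $v$. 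Each of these stamps is updated in $O(1)$ per staircase event without changing the amortized cost. On $J$ the area function $\mu(\gth(Z))$ is the closed form given in the excerpt with $O(1)$ local extrema, so $\Theta_Z$ and the \lmrs at the two endpoints of $J$ can be computed in $O(1)$ time. Thus detection adds $O(n^2 \log n)$ in total.

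The main obstacle is not the arithmetic but verifying that no \dsc of type $\typeb_1$ with $u$ as the top \sidec is missed. Two subtleties must be handled: first, an event on $\stc_{\theta+\pi/2}(u)$ can cause $f(t)$ to change for $\Theta(n)$ tips $t$ at once, which is why one deliberately ignores these events here and instead captures them later (as observed in the paragraph before the lemma, such a change corresponds to a step event on the staircase of $q$ in $\mathsf{C}_{\theta+\pi}$, so when the algorithm is later run with $p$ in place of $u$ as the top \sidec it recovers the missing \dsc via a symmetric step event); second, when the bottom \sidec $q$ satisfies $q_x>u_x$ so that $q$ lives on $\stc_{\theta+\pi/2}(u)$, the same symmetry argument re-assigns the role of ``top \sidec'' to $p$, so the configuration is detected during the run on $p$. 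Once this coverage argument is in place, summing the three pieces gives the claimed $O(n^2\log n)$ time bound.
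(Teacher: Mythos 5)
Your proposal follows essentially the same route as the paper: maintain both staircases via Lemma~\ref{lem:stair-update}, seed the event queue with the $O(n^2)$ horizontal-alignment (double staircase) events, process $O(1)$ \dsc{s} per step/shift/double-staircase event using binary search for $f(\cdot)$ and stored timestamps to recover the feasibility interval $J$, and defer the events on $\stc_{\theta+\frac{\pi}{2}}(u)$ that would touch $\Theta(n)$ tips to the runs on other reflex vertices. One small slip in your coverage argument: such an event is recovered in the run on $q$ (whose staircase in $\mathsf{C}_{\theta+\pi}$ undergoes the corresponding step event), not on $p$; the run with $p$ as the upper side contact is instead what covers the separate case $q_x>u_x$, exactly as you state in your second subtlety.
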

For a reflex vertex $u$, we maintain an event queue $\mathcal{Q}$.  For
each event $E$ in $\mathcal{Q}$, our algorithm finds $O(1)$ \dsc{s}
$Z$ that become infeasible by $E$, and computes the \lmr{s} in $O(1)$
time.  Observe that a \dsc $Z$ of type $\typeb_1$
becomes infeasible only at shift, step, and double staircase events.
Since our algorithm is applied to every reflex vertex $u$ of $P$,
we do not need to process the shift and step events occurring on 
$\stc_{\theta+\frac{\pi}{2}}(u)$.
Therefore, we can detect every
possible \dsc $Z$ by processing the events in $Q$.  
By lemmas~\ref{lem:stair-complexity} and \ref{lem:double_staircase}, 
\begin{lemma}
  \label{lem:B1}
  Our algorithm computes all \lmr{s} of type $\typeb_1$ with largest
  area in $O(kn^2\log n)$ time, where $k$ is the number of reflex
  vertices.
\end{lemma}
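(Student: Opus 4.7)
The plan is to combine the preceding structural lemmas with an outer loop over reflex vertices, and then to verify that this loop exhausts every \dsc of type $\typeb_1$.

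First, I would invoke Lemma~\ref{lem:stair-eventmap} once to build the event map in $O(kn^2\log n)$ time and $O(kn^2)$ space; this structure is shared across all the subsequent sweeps. Then, I would iterate over each of the $k$ reflex vertices $u$, treating $u$ as the (fixed) top \sidec of a candidate rectangle. For a given $u$, I would initialize the double staircase $(\stc_0(u),\stc_{\pi/2}(u))$ together with its event queue (which also stores the double staircase events), and apply Lemma~\ref{lem:double_staircase}, which maintains both staircases and enumerates every \lmr of type $\typeb_1$ having $u$ as top \sidec in $O(n^2\log n)$ time. Summing over the $k$ choices of $u$ gives the claimed $O(kn^2\log n)$ bound, while the $O(1)$ candidate rectangles produced at each event are compared on the fly to keep a running best.

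The main obstacle is correctness, specifically showing that no \lmr of type $\typeb_1$ is missed by this outer loop. By the definition of $\typeb_1$, every such \lmr has a \sidec on each of its four sides, in particular on its top side; some reflex vertex $u$ realizes that contact, so the \lmr is eligible to be discovered during the sweep for exactly that $u$. The subtler point is that during the sweep for $u$ the algorithm is entitled to ignore step and shift events on $\stc_{\theta+\pi/2}(u)$, which keeps the per-vertex cost at $O(n^2\log n)$. I would justify this with the exchange argument indicated in the discussion preceding the lemma: an event that alters the right contact $v=f(q)$ via a step event on $\stc_{\theta+\pi/2}(u)$ is the same combinatorial event as a step event on $\stc_{\theta+\pi}(q)$, as illustrated in Figure~\ref{fig:Bdistr}, and will thus be processed when the outer loop visits $q$ as the top contact. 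Consequently every critical configuration of the \dsc $\{u,p,q,v\}$ is detected by at least one iterate of the outer loop, and within that iterate Lemma~\ref{lem:double_staircase} already certifies that the corresponding maximal and breaking configurations are examined.

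Finally, I would tally the running time: the event map construction costs $O(kn^2\log n)$ by Lemma~\ref{lem:stair-eventmap}; the $k$ sweeps together cost $O(kn^2\log n)$ by Lemma~\ref{lem:double_staircase}; and extracting the $O(1)$ maximal and breaking configurations per event (each in $O(1)$ time via the closed-form area expression $\mu(\gth(Z))$ stated in Section~\ref{sec:b1}) contributes only $O(kn^2)$ additional time, which is absorbed. The overall running time is therefore $O(kn^2\log n)$, as claimed.
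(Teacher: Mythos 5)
Your proposal is correct and follows essentially the same route as the paper: build the event map once via Lemma~\ref{lem:stair-eventmap}, sweep the double staircase for each reflex vertex $u$ as the top \sidec via Lemma~\ref{lem:double_staircase}, and justify ignoring step/shift events on $\stc_{\theta+\frac{\pi}{2}}(u)$ by the correspondence with step events on $\stc_{\theta+\pi}(q)$ handled in the sweep for $q$. The accounting and the correctness argument match the paper's.
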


\subsection{Computing \texorpdfstring{\lmrs}{LMRs} of type
  \texorpdfstring{$\typeb_2$}{B2}.}
\label{sec:b2}
A \dsc $Z=\{u,e,q,v\}$ of type $\typeb_2$ consists of three
reflex vertices $u, q, v$ realizing the top, bottom, right
\sidec and an oblique segment $e$ realizing the \cornerc at
the bottom-left corner $\blc$ of $\gth(Z)$. Let $w$ be the point where the
extended line of $e$ and the line through $u$ and $v$ cross.  If $w$
appears below $\blc$, then the area function is
$\mu(\gth(Z))=|uq|\sin(\beta+\gamma)\big(\cot(\gamma-\alpha)(|uw|\sin\gamma-|uq|\sin(\beta+\gamma))-|vw|\cos\gamma\big)$.
See Figure~\ref{fig:LMR_B_vvv}(b). The area of $\gth(Z)$ with $w$
appearing above $u$ can be computed in a similar way.  Note that there
are $O(1)$ orientations that maximize $\mu(\gth(Z))$ locally and they
can be computed in $O(1)$ time.

Observe that $q$ is contained in $\stc_{\theta}(u)$
or $(q,v)$ is a step on $\stc_{\theta+\frac{\pi}{2}}(u)$.  In addition
to the method from the Section~\ref{sec:b1}, we also handle the case
that $(q,v)$ is a step on $\stc_{\theta+\frac{\pi}{2}}(u)$ as follows.
For a reflex vertex $t$ appearing as a tip on
$\stc_{\theta+\frac{\pi}{2}}(u)$, let $g(t)$ be the edge that contains
$\feta(t)$.  We consider every change of $f(t)$ for each tip $t$ on
$\stc_{\theta}(u)$ and the every change of $g(t)$ for each tip $t$ on
$\stc_{\theta+\frac{\pi}{2}}(u)$ during the rotation.

Consider an event $E$ occurring at $\theta_E$. If $f(q)$ changes 
from $v$ to $v'$, we detect the \dsc $\{u, e, q, v\}$, where $e$ is
the edge containing the oblique segment of the step $q$ belongs to, 
in a similar way as we process such an event of type $\typeb_1$.
For a step or shift event on $\stc_{\theta}(u)$, and a
double staircase event, there exist only $O(1)$ \dsc{s} becoming
infeasible caused by the change of $f(q)$, and their \lmr{s} are computed
in $O(1)$ time.  In addition, at each event associated with
disappearance of a step $(p,q)$ on $\stc_\theta(u)$, we process a \dsc
$\{u, p, q, e\}$, where $e$ is the edge that contains $\flambda(u)$.
When $\flambda(u)$ no longer meets $e$, we process a \dsc
$\{u, p, q, e\}$ for each step $(p, q)$ on $\stc_\theta(u)$ as well.
For a step or shift event on $\stc_{\theta+\frac{\pi}{2}}(u)$, there
can be $O(n)$ tips $q$ on $\stc_\theta(u)$
such that $f(q)$ gets changed, but such a case overlaps with the case
of step or shift event on $\stc_{\theta+\pi}(q)$ with $\flambda(q)$
on $e$, which is handled for the double staircase of $q$.
Thus we do not handle them for the double staircase of $u$.

So it remains to consider the case for an event $E$ that changes
$g(q)$ for each tip $q$ on $\stc_{\theta + \frac{\pi}{2}}(u)$ at $\theta_E$.
Observe that $g(q)$ changes only if a double staircase event occurs
associated with $q$.  Whenever a new step $(q,v)$ appears on
$\stc_{\theta+\frac{\pi}{2}}(u)$, we do binary search for
$\feta(q)\in e$ in $\vis(q)$. When $g(q)$ changes or a step $(q,v)$
disappears caused by a step or a shift event on
$\stc_{\theta+\frac{\pi}{2}}(u)$, we find $O(1)$ \lmr{s} with \dsc
$Z=\{u, e, q, v\}$, and check if they are in $P$ by checking if the
boundary of the rectangles are in $P$, since we do not know if $e$
appears on $\stc_\theta(u)$ or not. These \lmr{s} can be computed 
in a similar way as we do for type $\typeb_1$.

We find the maximal interval $J$ of orientations for $Z=\{u,e,q,v\}$ 
such that $\theta_E\in J$ and all elements of $Z$ appear on the double staircase of $u$.
We can compute the set
$\Theta_Z$ of orientations $\theta_Z \in J$ that maximize
$\mu(\gth(Z))$ locally in $O(1)$ time because the area function 
has $O(1)$ extremal values in $I$.
Then for each $\theta_Z \in \Theta_Z$, we
do binary search in $C(q,v)$ within $J$ for $Z$, to
get two feasible orientations $\theta_1,\theta_2\in J$ closest to $\theta_Z$
with $\theta_1 \le \theta_Z$ and $\theta_2 \ge \theta_Z$ if they exist. Together with
the endpoints of $J$, we get the \lmr{s} with \dsc $Z$.  Then we
consider $O(1)$ \dsc{s} $Z$ of type $\typeb_2$ for each event and
their feasible orientations that maximize the area of $\gth(Z)$ can be
computed in $O(\log n)$ time. 
This way we can detect every \bc including $Z$ as well. See Figure~\ref{fig:BCs}.
 By capturing the changes of $f(q)$
and $g(q)$ together with the changes on the double staircase, we detect all possible \dsc{s} of type $\typeb_2$ and associate \bc{s}. 
Suppose that an \lmr appears as a $\gth(Z)$ for $Z=\{u,e,q,v\}$. 
For example, $e$ is on $\stc_\theta(u)$ and $q$ is on $\stc_\theta(u)$ with $f(q)=v$ or on $\stc_{\theta+\frac{\pi}{2}}(u)$ with $g(q)=e$. As we rotate the coordinate system, $Z$ becomes infeasible by the change of the double staircase, $f(q)$, or $g(q)$, say at event $E$.
However, the remaining conditions are not affected by $E$, so that we can detect $Z$ by our algorithm.
Together
with Lemma~\ref{lem:stair-complexity}, we have the following
lemma.
\begin{lemma}\label{lem:B2}
  Our algorithm computes all \lmr{s} of $\typeb_2$ with the largest
  area in $O(kn^2\log n)$ time, where $k$ is the number of reflex
  vertices of $P$.
\end{lemma}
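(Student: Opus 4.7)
The plan is to establish two things: (i) every \lmr of type $\typeb_2$ is reported, and (ii) the total time over all reflex vertices is $O(kn^2\log n)$. By Lemma~\ref{lem:configurations}, each such \lmr satisfying a \dsc $Z=\{u,e,q,v\}$ lies in a maximal or breaking configuration of $Z$; so correctness reduces to showing that, while iterating over each reflex vertex $u$ and processing the events on the double staircase of $u$, every maximal and every breaking configuration of $Z$ is enumerated.

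For correctness, I would fix an arbitrary \dsc $Z=\{u,e,q,v\}$ and let $J=[\theta_a,\theta_b]$ be the maximal interval of feasible orientations for $Z$ containing the orientation $\theta_E$ at which $Z$ is first detected. The endpoints of $J$ are events of one of the following kinds: a double staircase event; a step or shift event that removes the step containing $e$ from $\stc_\theta(u)$ or the step $(q,v)$ from $\stc_{\theta+\frac{\pi}{2}}(u)$; or a change of $f(q)$ or $g(q)$. Each appears in the event queue of an appropriate reflex vertex ($u$ when $q$ lies on $\stc_\theta(u)$; otherwise it is captured from the perspective of $q$, since a step event on $\stc_{\theta+\frac{\pi}{2}}(u)$ that changes $f(q)$ corresponds to a step event on $\stc_{\theta+\pi}(q)$ as in Figure~\ref{fig:Bdistr}). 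Inside $J$, the area function $\mu(\gth(Z))$ is smooth with only $O(1)$ local extrema which are computable in $O(1)$ time from the explicit formula; each interior breaking configuration is detected by binary search on $C(q,v)$ in $O(\log n)$ time. Together this exhausts the maximal and breaking configurations of $Z$, and by Lemma~\ref{lem:configurations} every \lmr of type $\typeb_2$ is examined.

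For the running time, by Lemma~\ref{lem:stair-complexity} each reflex vertex $u$ contributes $O(n^2)$ events on the double staircase and the associated event queue, maintained in $O(n^2\log n)$ time. At each event, the algorithm identifies $O(1)$ candidate \dsc{s} of type $\typeb_2$ and processes each in $O(\log n)$ time (the binary search on $C(q,v)$ dominates, while the $O(1)$ extrema of $\mu(\gth(Z))$ take $O(1)$). Events on $\stc_{\theta+\frac{\pi}{2}}(u)$ that would otherwise affect $O(n)$ tips of $\stc_\theta(u)$ via $f(q)$-changes are charged to the processing of $q$ as the top side-contact, avoiding any blow-up. Multiplying $O(\log n)$ per event by $O(n^2)$ events per reflex vertex by $k$ reflex vertices gives the claimed $O(kn^2\log n)$ bound.

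The main obstacle, as I see it, is verifying that the redistribution of events across the choice of ``top side-contact'' reflex vertex covers every \lmr of type $\typeb_2$ without omission and without redundant work. One must carry out a careful case analysis on which staircase $q$ lies on (and correspondingly whether the triggering event is a step, shift, double staircase, $f$-change, or $g$-change event) and check that in every case some event in the queue of the chosen reflex vertex brackets the feasibility interval $J$ of $Z$. This is essentially bookkeeping rather than a deep argument, but it is the place where a careless proof could lose an \lmr.
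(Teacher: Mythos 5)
Your proposal follows essentially the same route as the paper: fix each reflex vertex $u$ as the top side-contact, detect each \dsc $Z=\{u,e,q,v\}$ at the events where it becomes infeasible, charge the problematic $O(n)$-tip events on $\stc_{\theta+\frac{\pi}{2}}(u)$ to the double staircase of $q$ (the paper's Figure~\ref{fig:Bdistr} argument), and combine the $O(1)$ extrema of $\mu(\gth(Z))$ with a binary search on $C(q,v)$ over the feasibility interval $J$, for $O(\log n)$ work per event and $O(kn^2\log n)$ overall. The case analysis you defer as ``bookkeeping'' (tracking $f(q)$ and $g(q)$, the subcase where $(q,v)$ is a step of $\stc_{\theta+\frac{\pi}{2}}(u)$ so that $e$ need not appear on $\stc_\theta(u)$ and feasibility must be checked directly, and the variant \dsc with the corner-contact on the edge containing $\flambda(u)$) is precisely what the paper's text supplies, so you have identified rather than omitted the substantive content.
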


\subsection{Computing \texorpdfstring{\lmrs}{LMRs} of type
  \texorpdfstring{$\typeb_3$}{B3}.}
Consider the case when \dsc $Z = \{u,e,v\}$ is feasible, where $e$ is a polygon
edge that appears as an oblique segment on $\stc_\theta(u)$ and $v$
is a tip on $\stc_{\theta+\frac{\pi}{2}}(u)$ (type $\typeb_3$). See
Figure~\ref{fig:LMR_B_vvv}(c).  To achieve the largest area, we
observe that the bottom-left corner of \lmr{s} satisfying $Z$ must lie at
the midpoint $c$ of the extended line segment $pq$ of $e$, where $p$ and
$q$ are the intersection points of the line containing $e$ with
$\eta(u)$ and $\delta(v)$, respectively.  The area function of
$\gth(\{u, \blc, v\})$, the rectangle with top \sidec on $u$,
bottom-left \cornerc on $\blc$, and right \sidec on $v$, is
convex with respect to $\blc \in l$, where $l$ is the line containing
$e$. If $c$ does not lie on $e$, $\blc$ must lie on a
point of $e$ closest to $c$ to maximize the rectangle area.

The area function of $\gth(Z)$ for a \dsc $Z$ of $\typeb_3$ is
$\mu(\gth(Z))={|uc_\theta|}\sin(\alpha+\gamma)\big(|uc_\theta|\cos(\pi-(\alpha+\gamma))+|uv|\cos\gamma\big)$,
where $c_\theta$ is the midpoint of $pq$ (if the midpoint lies on $e$) or the
endpoint of $e$ that is closer to the midpoint (otherwise) at $\theta$. Since the midpoint 
moves along $l$ in one direction as $\theta$ increases, 
there are $O(1)$ intervals of orientations at which the midpoint of $pq$ is contained in $e$, and thus this area function has $O(1)$ extremal values in $I$.

Our algorithm for computing all \lmr{s} of type $\typeb_3$ is simple.
First we fix the top \sidec on $u$. For each pair of an edge $e$
and a reflex vertex $v$, we compute the set $\Theta_Z$ 
of orientations that maximize $\mu(\gth(Z))$ locally for $Z = \{u,e,v\}$.
Observe that $\Theta_Z$ consists of $O(1)$
orientations because the area function has $O(1)$ extremal values.
Then 
for each $\theta_Z \in \Theta_Z$, we find two orientations $\theta_1,\theta_2$ closest to $\theta_Z$
with $\theta_1 \le \theta_Z$ and $\theta_2 \ge \theta_Z$ such that
the top-right corner of $\gth(Z)$ is contained in $P$ by applying
binary searching on $C(u,v)$.
Finally, we check if
$\gth(Z)$ is contained in $P$ for $O(1)$ such orientations $\theta$ by checking if the
boundary of the rectangles are contained in $P$. 
This way we can
compute all \lmr{s} of $\typeb_3$ with top \sidec on $u$. See
Figure~\ref{fig:BCs}.  By using the event map and
Lemma~\ref{lem:stair-complexity}, we have the following lemma.

\begin{lemma}\label{lem:B3}
  Our algorithm computes all \lmr{s} of $\typeb_3$ with largest area
  in $O(kn^2\log n)$ time, where $k$ is the number of reflex vertices
  of $P$.
\end{lemma}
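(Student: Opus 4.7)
The plan is to verify the algorithm described immediately before the statement. The two things to establish are (i) every \lmr of type $\typeb_3$ is enumerated as a candidate, and (ii) the total work is $O(kn^2\log n)$.

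For (i), fix an \lmr $R$ of type $\typeb_3$ with \dsc $Z=\{u,e,v\}$, where $u$ is a reflex vertex giving the top side-contact, $e$ is an edge whose oblique segment hosts the bottom-left corner $\blc$, and $v$ is a reflex vertex giving the right side-contact. By Lemma~\ref{lem:configurations}, $R$ is in a maximal or a breaking configuration of $Z$. In the maximal case the orientation is a local extremum of $\mu(\gth(Z))$, so it lies in the set $\Theta_Z$ computed by the algorithm; here we use that on any maximal interval of orientations for which $\blc$ stays inside $e$ the area function is smooth with $O(1)$ extrema, and that there are only $O(1)$ such intervals since the midpoint of the chord $pq$ moves monotonically along the supporting line of $e$. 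In the breaking case, the additional contact arises either at the top-right corner of $\gth(Z)$, whose crossings with $\partial P$ are recorded by $C(u,v)$ and thus captured by the binary search returning the bracketing orientations $\theta_1,\theta_2$, or at some other free part of the rectangle boundary, in which case the final containment check will either discard or discover the configuration.

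For each triple $(u,e,v)$ the algorithm produces $O(1)$ candidate orientations, namely the elements of $\Theta_Z$ together with the $O(1)$ bracketing orientations returned by the binary searches on $C(u,v)$. Each candidate is validated by four ray-shooting queries along the sides of $\gth(Z)$, using the $O(\log n)$-time data structure already built in Lemma~\ref{lem:lmrs-type-A}, and the largest valid candidate over all triples is returned as the largest \lmr of type $\typeb_3$.

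For (ii), there are $k$ choices of $u$ and, for each, at most $nk\le n^2$ choices of $(e,v)$. Each triple is processed in $O(\log n)$ time for the binary searches and the ray shootings. Summing gives $O(kn^2\log n)$, which also dominates the $O(kn^2\log n)$ event-map construction of Lemma~\ref{lem:stair-eventmap}. The main subtlety is completeness in (i): one must rule out an \lmr hiding strictly inside a feasibility subinterval, away from every $\theta_Z$ and from the endpoints delivered by $C(u,v)$. This is precluded because on such a subinterval no additional contact appears, so $\mu(\gth(Z))$ is a smooth function whose local maxima are by definition elements of $\Theta_Z$; hence the enumeration misses no \lmr.
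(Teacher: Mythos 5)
Your proposal follows essentially the same route as the paper: enumerate all triples $(u,e,v)$, use the midpoint characterization of the bottom-left corner on the supporting line of $e$ to get an area function with $O(1)$ local extrema, bracket each extremum by binary search on $C(u,v)$, validate the $O(1)$ candidate orientations by checking that the rectangle boundary lies in $P$, and sum over the $O(kn^2)$ triples to get $O(kn^2\log n)$ time. The one shaky step is your claim that a breaking configuration whose new contact appears somewhere other than the top-right corner is ``discovered'' by the final containment check --- it is not, since that orientation is never generated as a candidate for the triple $(u,e,v)$; the paper instead relies on the fact that such a configuration carries the contact set of another subtype (e.g.\ $\typeb_2$ when the bottom side gains a side-contact) and is therefore enumerated by the procedure for that subtype, which is exactly the duplication-free assignment of breaking configurations in Figure~\ref{fig:BCs}.
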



\section{Computing a largest rectangle of types \typec and \typed}
\label{sec:lmrs-type-CD}
\lmr{s} of types \typec and \typed can be computed in a way similar to
the one for type \typeb. For each reflex vertex $u$, we find all
\lmr{s} of types \typec and \typed that have $u$
on its top side while maintaining the double staircase of $u$.

Let an edge $e$ be an element of a \dsc of type \typec or \typed that
is a \cornerc on the top-right corner of an \lmr.
Note that $e$ contains $\flambda(u)$ at some $\theta$.
\begin{figure}[!ht]
  \centering
  \includegraphics[width=\textwidth]{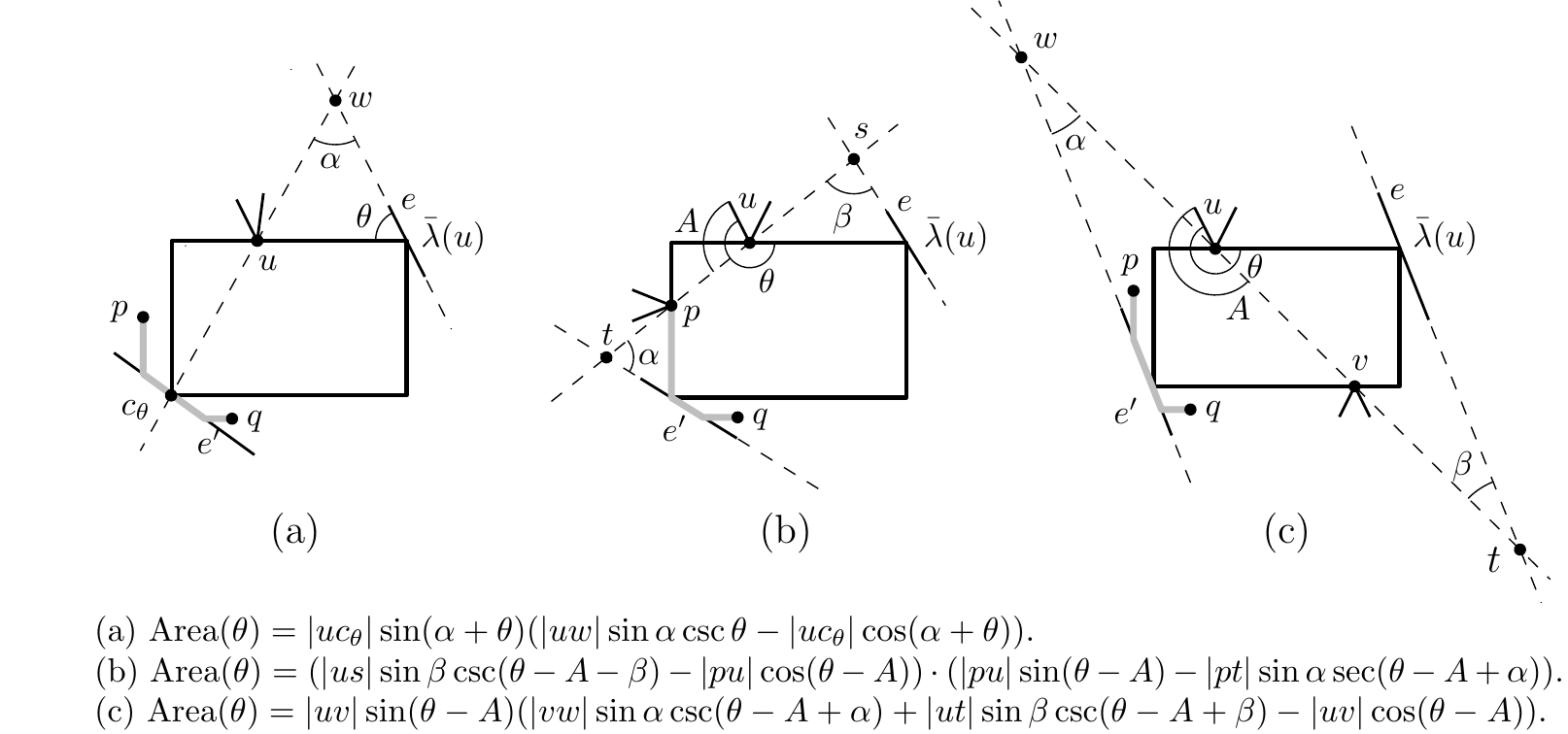}
  \caption{Computing the area of some \lmr{s} of type \typec. (a) Type $\typec_1$, (b) type $\typec_2$, and (c) type $\typec_3$.}
  \label{fig:lmr-C}
\end{figure}

\subsection{Computing \lmrs of type \typec}
As we do for type $\typeb$ in the Section~\ref{sec:comp-b}, 
we detect the events at which a possible \dsc $Z$ of type \typec
becomes infeasible, and compute the \lmrs satisfying $Z$.
We compute \lmr{s} of type \typec at (1) every step and shift event (and ray event) on
$\stc_\theta(u)$ such that a step 
on $\stc_\theta(u)$ changes,
(2) every ray event such that $\flambda(u)$ meets a vertex of $P$,
(3) every step event of the step incident to $\flambda(u)$ 
on $\stc_{\theta+\frac{\pi}{2}}(u)$, and
(4) every event such that the lower tip $q$ of step $(p,q)$ on $\stc_\theta(u)$ and 
the tip $v$ of step $(\flambda(u),v)$ on $\stc_{\theta+\frac{\pi}{2}}(u)$ are aligned horizontally. 

In case (1) such that a step $(p,q)$ on $\stc_\theta(u)$ changes, 
we consider three subcases: \dsc{s} $\{u,e',e\}$ of type 
$\typec_1$ (Figure \ref{fig:lmr-C}(a)),
\dsc{s}
$\{u,p,e',e\}$ of type 
$\typec_2$ (Figure \ref{fig:lmr-C}(b)),
and \dsc{s} $\{u,e,q,e'\}$ of type $\typec_3$ for edge
$e'$ containing 
$\ob(p,q)$ (Figure \ref{fig:lmr-C}(c)).
In case (2), $\flambda(u)$ no longer lies on
$e$. Thus we consider \dsc{s} $\{u,e',e\}$ (type $\typec_1$), 
\dsc{s} $\{u,p,e',e\}$ and $\{u,e',q,e\}$
(type $\typec_2$ and $\typec_3$), and 
\dsc{s} $\{u,e',v,e\}$ (type $\typec_3$) for each edge $e'$ containing
an oblique segment $\ob(p,q)$ on $\stc_\theta(u)$ and the tip $v$ of step $(\flambda(u),v)$ on
$\stc_{\theta+\frac{\pi}{2}}(u)$.
In case (3) such that step $(\flambda(u),v)$ changes,
we consider three \dscs $\{u,e',e\}$, $\{u,p,e',e\}$ and $\{u,e',v,e\}$ 
for edge $e'$ containing $\ob(p,q)$, 
where $(p,q)$ is the step on $\stc_\theta(u)$ intersecting $\eta(v)$.
If $\fdelta(\flambda(u))$ meets a reflex vertex $v'$, 
we consider \dscs $\{u,e',e\}$ and $\{u,p,e',e\}$
for edge $e'$ containing $\ob(p,q)$, 
where $(p,q)$ is the step on $\stc_\theta(u)$
intersecting $\eta(v')$.
For a step event of step incident to $\flambda(u)$,
there can be $O(n)$ steps $(p,q)$ on $\stc_\theta(u)$ such that
\dscs $\{u,e'e\}, \{u, p,e',e\}$ and $\{u,e',q,e\}$ 
for $e'$ containing $\ob(p,q)$ become infeasible.
These changes correspond to the changes caused by the step event
that $e$ disappears from $\stc_{\theta+\pi}(p)$ or $\stc_{\theta+\pi}(q)$ .
Thus, they are handled for the double staircase of $p$ or $q$.
In case (4), we consider \dsc with $\{u,e',v,e\}$ of type $\typec_3$
for step $(\flambda(u),v)$ on $\stc_{\theta+\frac{\pi}{2}}(u)$ and
 edge $e'$ containing $\ob(p,q)$.


For each \dsc $Z$ of type $\typec$, we have its possibly feasible
interval by maintaining the orientations where each step had
the last step event associated with it, $\flambda(u)$ met another
polygon vertex, and a tip of $\stc_\theta(u)$ and the lower tip $v$ of $(\flambda(u),v)$ 
were aligned horizontally.
Within such a feasible interval of a \dsc $Z$, we compute all \lmr{s}
with contact $Z$. It is not difficult to see that the area functions of
type $\typec_2$ and $\typec_3$ depend only on $\theta$.
And as in type $\typeb_3$, for a fixed $\theta$ and 
$\gth(Z)$ of a \dsc $Z$ of type $\typec_1$, the bottom-left corner $\blc$ of $\gth(Z)$
on $e'$ is the midpoint of $p=l \cap \eta(\flambda(u))$ and
$q= l \cap \delta(\flambda(u))$, where $l$ is the extended line of
$e'$. If the midpoint
does not appear on the $\stc_\theta(u)$,
we simply take the point on the oblique
segment contained in $e'$ that is closest to the midpoint.
This is because $\mu(\gth(\{u,\blc,\flambda(u)\}))$ is convex
on the position of $\blc$ on $l$. 
Since the number of closed intervals where the midpoint appears on $\stc_\theta(u)$
is $O(1)$, the number of \lmr{s} with contact $Z$ is $O(1)$.
As a result, for each detected \dsc of type
$\typec$, we compute the set of $O(1)$ orientations achieving
\lmr{s} in $O(1)$ time. 

We detect every \bc of type $\typec_2$ with an additional top-left \cornerc 
in case (1). We detect every \bc of type $\typec_3$ with an additional bottom \sidec 
in cases (1) and (3-4) since there are two bottom \sidec{s} (vertices) aligned horizontally in such a \bc. 
The other \bcs of type \typec with an additional top \sidec are detected 
in case (2).
\begin{figure}[!ht]
  \centering
  \includegraphics[width=\textwidth]{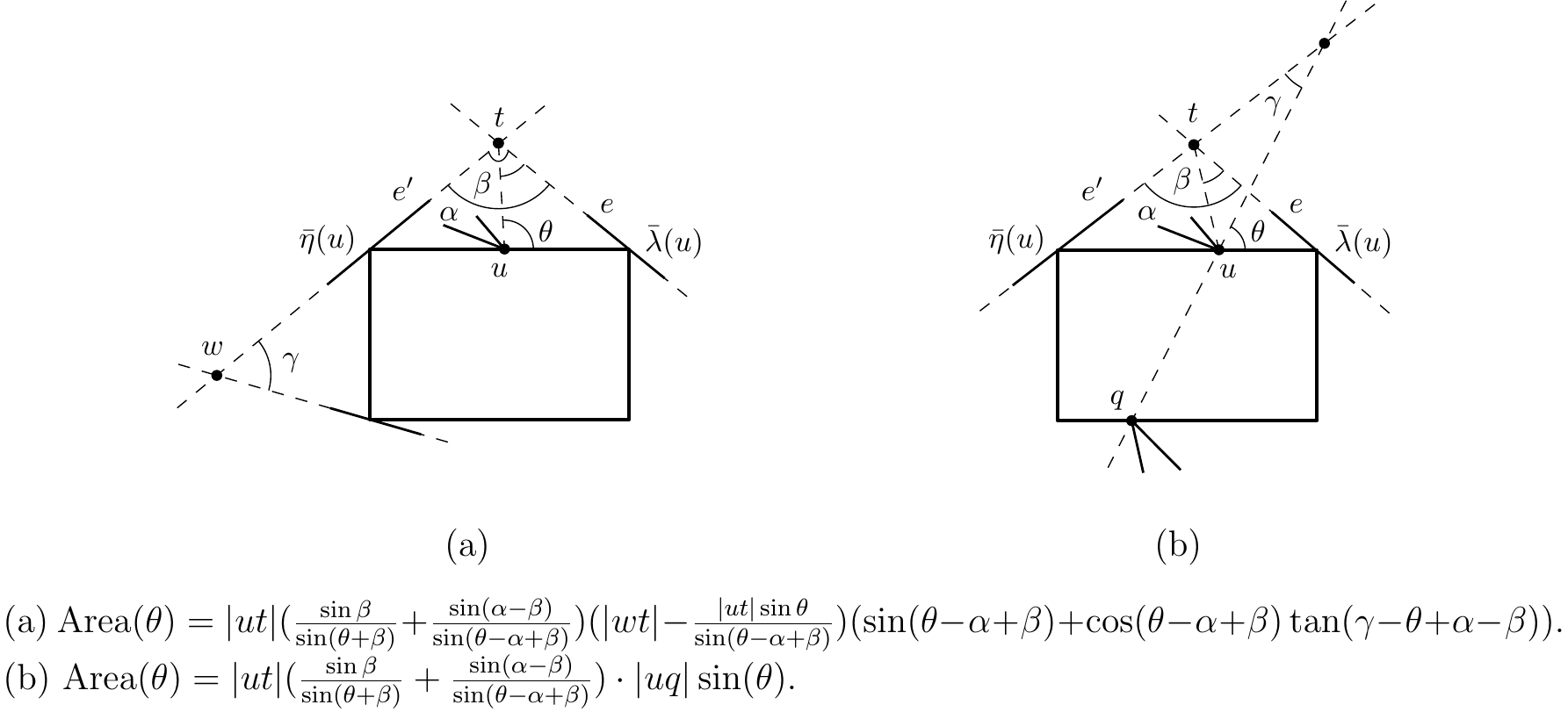}
  \caption{Area functions of type \typed. (a) Type $\typed_1$ and (b) type $\typed_2$.}
  \label{fig:LMR_D}
\end{figure}
\subsection{Computing \lmrs of type \typed}
Type \typed is almost the same as type \typec, except that
the base subset of \dsc is $\{u, e', e\}$,
where $e'$ and $e$ are polygon edges containing $\feta(u)$ and
$\flambda(u)$, respectively. Moreover, instead of the bottom-left
corner contacts on oblique segments of $\stc_\theta(u)$ considered in type $\typec$,
we simply consider corner contacts on the oblique segments of the step incident to $\feta(u)$ on
$\stc_\theta(u)$ and the step incident to $\flambda(u)$ on
$\stc_{\theta+\frac{\pi}{2}}(u)$ (Figure~\ref{fig:LMR_D}(a))
or
the side contacts on the bottom side (Figure~\ref{fig:LMR_D}(b)).

Therefore, we detect \dsc{s} and \bc{s} of types $\typed_1$ and $\typed_2$ corresponding to the events at which
$\feta(u)$ or $\flambda(u)$ meets a polygon vertex,
and to the step events on the step incident to $\feta(u)$ on $\stc_\theta(u)$
or on the step events on the step incident to $\flambda(u)$ on $\stc_{\theta+\frac{\pi}{2}}(u)$, and 
to the events at which the lower tip $q$ of step $(\feta(u),q)$ and
the lower tip $v$ of step $(\flambda(u),v)$ are aligned horizontally.
There are \bc{s} satisfying $\{u,e',e_l,t,e\}$ of type $\typed_2$,
where $t$ is a tip of $\stc_{\theta+\frac{\pi}{2}}(u)$ and $e_l$ is the edge containing $\fdelta(\feta(u))$ 
(the last \bc of type $\typed_2$ in Figure~\ref{fig:BCs}).
We can detect such \bc{s} 
of type $\typed_2$ that has bottom-left \cornerc on an edge $e_l$ 
in addition to the contacts $Z=\{u,e',t,e\}$ of type $\typed_2$
whenever we detect \dsc $Z$ in $O(1)$ time
since $e_l$ contains an oblique segment of the step incident to $\feta(u)$.
The \lmr{s} of such a \bc can be computed in $O(1)$ time by solving basic system of linear equations.


There is another \bc type of type $\typed_1$ that has
bottom-right \cornerc on an edge in addition to the contacts
of type $\typed_1$.
We will handle such \bc{s} when we handle
\dsc{s} of type $\typee_3$ 
in Section~\ref{sec:comp-E} (the last \bc of type $\typee_3$ in Figure~\ref{fig:BCs}).
Thus, we exclude this \bc type from the breaking configurations
of type $\typed$.

\begin{lemma}\label{lem:CD}
  Our algorithm computes all \lmr{s} of types \typec and \typed. 
\end{lemma}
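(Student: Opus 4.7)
The plan is to follow the event-driven template established for type \typeb: for each reflex vertex $u$, we maintain the double staircase of $u$ and its event queue using Lemma \ref{lem:stair-complexity}, treat $u$ as the top \sidec for type \typec and as the top reflex vertex whose horizontal rays land on the edges $e'$ and $e$ of the base sub-\dsc for type \typed, and sweep the events in order. At each event we enumerate the $O(1)$ \dsc{s} and associated breaking configurations of the relevant subtypes that become infeasible at that orientation, compute the feasibility interval $J$ of each by maintaining the latest appearance time of every element of the \dsc on the double staircase, and extract the $O(1)$ maximal and breaking configurations of \lmr{s} inside $J$ using the area formulas illustrated in Figures \ref{fig:lmr-C} and \ref{fig:LMR_D}.

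For type \typec, I would verify case by case along the four event categories listed above — a change of a step of $\stc_\theta(u)$; a ray event where $\flambda(u)$ meets a polygon vertex; a step event on the step incident to $\flambda(u)$ in $\stc_{\theta+\frac{\pi}{2}}(u)$; and a horizontal alignment between the lower tip $q$ of a step of $\stc_\theta(u)$ and the tip $v$ of the step $(\flambda(u),v)$ in $\stc_{\theta+\frac{\pi}{2}}(u)$ — that every \dsc of subtypes $\typec_1, \typec_2, \typec_3$ that is killed by the event is among the $O(1)$ \dsc{s} we enumerate. Step events on the step incident to $\flambda(u)$ that would otherwise cost $\Theta(n)$ per event are deferred to the processing of the double staircase of the appropriate other reflex vertex (as in the $\typeb_1$ argument), preserving the $O(1)$-per-event enumeration. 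For subtype $\typec_1$, the $O(1)$ bound on the number of \lmr{s} per \dsc follows from the same midpoint argument used for $\typeb_3$: the optimum \blc on the oblique segment is either the midpoint of the chord cut from the supporting line by $\eta(\flambda(u))$ and $\delta(\flambda(u))$, or the closest endpoint of the oblique segment to that midpoint.

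For type \typed, the enumeration is analogous with the relevant events being $\feta(u)$ or $\flambda(u)$ meeting a polygon vertex, step events on the two steps incident to $\feta(u)$ in $\stc_\theta(u)$ and to $\flambda(u)$ in $\stc_{\theta+\frac{\pi}{2}}(u)$, and horizontal alignments of the lower tips of these two steps. The breaking configuration of $\typed_2$ carrying an extra bottom-left \cornerc on an edge $e_l$ is detected in $O(1)$ time alongside its base \dsc $\{u,e',t,e\}$, since $e_l$ contains the oblique segment of the step incident to $\feta(u)$ on $\stc_\theta(u)$ and can be read off directly from that step. The breaking configuration of $\typed_1$ with a bottom-right \cornerc is, by design, not produced here and is instead captured when handling type $\typee_3$, so it does not enter the completeness account for this lemma.

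The main obstacle is completeness: showing that every \lmr $R$ of type \typec or \typed is emitted by the enumeration above at the event terminating the feasibility interval of its \dsc $Z$. By Lemma \ref{lem:configurations}, $R$ is in a maximal or breaking configuration of $Z$, so the task reduces to proving that (i) $Z$ is realized on the maintained double staircase of the appropriate reflex vertex $u$ at the orientation $\theta_R$, which follows from the definitions of $\stc_\theta(u)$ and $\vis(u)$, and (ii) the event that destroys the feasibility of $Z$ falls into one of the four enumerated categories — the crux, which is handled by observing that any combinatorial change to the contacts in $Z$ corresponds either to a step, shift, or ray event on one of the two staircases of $u$, or to a horizontal alignment captured as a double staircase event. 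Given completeness, the time bound matches that of type \typeb: Lemma \ref{lem:stair-complex} gives $O(n^2)$ events per reflex vertex, each processed in $O(\log n)$ time with $O(1)$ \dsc enumeration, for a total of $O(kn^2\log n)$.
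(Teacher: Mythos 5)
Your proposal is correct and follows essentially the same route as the paper: invoke Lemma~\ref{lem:configurations} to reduce to maximal and breaking configurations, then argue that every \dsc of type \typec becomes infeasible at one of the four enumerated event categories and that type \typed is covered by the events on the steps incident to $\feta(u)$ and $\flambda(u)$, with the $O(n)$-cost events deferred to the double staircases of other reflex vertices. The paper's own proof is far terser (it asserts the case coverage in two sentences), so your additional detail on the feasibility intervals and the midpoint argument is consistent elaboration rather than a different approach.
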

\begin{proof}
By Lemma~\ref{lem:configurations}, an \lmr satisfying a \dsc
  $Z$ is in a maximal or breaking configuration. The
  \dsc{s} and \bc{s} of type \typec get infeasible via one of the cases (1-4) and
  therefore they are detected by our algorithm.
  Also, the \dsc{s} and \bcs of type \typed are handled by the step events
  on the step incident to $\feta(u)$ or $\flambda(u)$ of the double staircase of $u$.
\end{proof}
%
\begin{lemma}
  We can compute a largest rectangle among all \lmrs of types
  \typec and \typed in $O(kn^2 \log n)$ time using $O(kn^2 )$ space, where $k$
  is the number of reflex vertices of $P$. 
\end{lemma}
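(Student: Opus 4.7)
The plan is to account for the work at each reflex vertex $u$ and sum over all $k$ reflex vertices, following the template established for type \typeb in Lemmas~\ref{lem:B1}--\ref{lem:B3}. For each reflex vertex $u$ taken as the top \sidec, I maintain the double staircase of $u$ and process events from the event queue in order of orientation. By Lemmas~\ref{lem:stair-eventmap} and~\ref{lem:stair-complexity}, constructing the event map and maintaining the double staircases of all $k$ reflex vertices over the full rotation takes $O(kn^2 \log n)$ time and $O(kn^2)$ space, which already matches the claimed bounds; what remains is to show that the additional work for detecting and evaluating \dsc{s} of types \typec and \typed does not exceed this budget.

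For a fixed $u$, the number of events I must process is $O(n^2)$ by Lemma~\ref{lem:stair-complex}. The case analysis preceding Lemma~\ref{lem:CD} (cases (1)--(4) for type \typec, and the analogous step, shift, ray, and alignment events on the steps incident to $\feta(u)$ and $\flambda(u)$ for type \typed) yields $O(1)$ candidate \dsc{s} and associated \bc{s} at each event. For each such candidate $Z$, I first recover the maximal feasible interval $J$ of orientations by maintaining timestamps of the latest step, shift, ray, and double staircase events affecting the elements of $Z$; this bookkeeping is $O(1)$ per event. Then I compute the $O(1)$ orientations in $J$ that locally maximize $\mu(\gth(Z))$ from the closed-form area function (the expressions in Section~\ref{sec:comp-b} extend naturally to the corner-contact variants of types \typec and \typed). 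Finally, I apply binary search on the precomputed lists $C(p,q)$, $L(u,e)$, or $\vis(v)$ from the event map to locate the nearest feasible orientations bracketing each candidate maximum and to verify containment of the resulting rectangle in $P$, each such query costing $O(\log n)$.

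Aggregating, the extra work per reflex vertex is $O(n^2 \log n)$, so $O(kn^2 \log n)$ across all $k$ reflex vertices, matching the maintenance cost. The space is dominated by the event map at $O(kn^2)$ by Lemma~\ref{lem:stair-eventmap}, with the $k$ event queues adding $O(n^2)$ each, still $O(kn^2)$ overall. The main obstacle is not the arithmetic but the charging argument: several events that change $f(q)$ or $g(q)$ at one reflex vertex coincide with step or shift events at a different reflex vertex, and I must ensure that each \dsc of type \typec or \typed is charged to exactly one double staircase so the $O(n^2)$-events-per-vertex count is tight. This is exactly the overlap discussed for cases (3)--(4) of type \typec (handled at the double staircase of $p$ or $q$ rather than of $u$) and for the \bc{s} shared with type $\typee_3$ described in Section~\ref{sec:lmrs-type-CD}, and Lemma~\ref{lem:CD} certifies that this selective processing still detects every \lmr. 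The complexity bounds then follow by combining Lemma~\ref{lem:CD} with the per-event cost analysis above.
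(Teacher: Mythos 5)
Your overall approach matches the paper's: build the event map once ($O(kn^2\log n)$ time, $O(kn^2)$ space), maintain the double staircase per reflex vertex over $O(n^2)$ events, spend $O(\log n)$ per candidate \dsc, and defer correctness of the selective event processing to Lemma~\ref{lem:CD}. However, there is one flaw in your accounting. You assert that cases (1)--(4) each yield $O(1)$ candidate \dsc{s} per event. That is false for case (2): when a ray event causes $\flambda(u)$ to leave an edge $e$, the algorithm must consider a \dsc for \emph{every} edge $e'$ containing an oblique segment $\ob(p,q)$ on $\stc_\theta(u)$, of which there can be $\Theta(n)$ at a single event. Your premise ``$O(n^2)$ events $\times$ $O(1)$ \dsc{s} per event'' therefore does not cover case (2). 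The bound still holds, but for a different reason: there are only $O(n)$ ray events of this kind per reflex vertex, each spawning $O(n)$ \dsc{s}, so the total number of case-(2) candidates is $O(n^2)$ per vertex. The paper states the $O(1)$-per-event claim only for cases (1), (3), and (4) and handles case (2) with this separate product argument; you need to do the same, or your per-event charge is simply wrong. With that correction the rest of your argument (interval recovery, $O(1)$ extrema of the area functions, $O(\log n)$ feasibility checks, and the $O(kn^2)$ space dominated by the event map) goes through as in the paper.
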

\begin{proof}
  The event map can be constructed in
    $O(kn^2 \log n)$ time using $O(kn^2)$ space by
    Lemma~\ref{lem:stair-eventmap}.
 Then for each reflex
  vertex, we construct its double staircase
  and maintain it in $O(n^2\log n)$ time using $O(n)$ space by Lemma~\ref{lem:stair-update}.
  
  Note that for each \dsc or \bc $Z$ of types \typec and \typed it takes $O(\log n)$ time to compute $O(1)$ 
  \lmrs and check their feasibility.
  For each reflex vertex $u$ as the top \sidec, there are $O(n^2)$ events, and each event yields $O(1)$ 
  \dsc{s} or \bc{s} in cases (1), (3) and (4).
  Since there are at most $O(n)$ steps in $\stc_\theta(u)$
  and $O(n)$ ray events in total, the total number of \dsc{s} handled
  at case (2) is $O(n^2)$ for a reflex vertex $u$ as the top \sidec.
  Therefore, for each reflex vertex $u$, we can compute \lmrs of types \typec and \typed in $O(n^2 \log n)$ time, by Lemma~\ref{lem:CD}, which implies the theorem.
\end{proof}
\section{Computing a largest rectangle of type E}
\label{sec:comp-E}
We consider the \lmrs of type \typee.  Let $u$ be a reflex vertex of $P$.
We detect every \dsc $Z$ of type \typee, containing $\{u, e_l, e_r\}$,
where $u$ is the top \sidec, $e_l$ the bottom-left \cornerc,
and $e_r$ the bottom-right \cornerc.  Observe that for each \lmr
satisfying $Z$, $e_l$ and $e_r$ appear as oblique segments
$\ob(p,q) \subset e_l$ and $\ob(t,v) \subset e_r$ of $\stc_\theta(u)$
and $\stc_{\theta+\frac{\pi}{2}}(u)$, respectively, such that
$\feta(t) \in \ob(p,q)$ or $\flambda(q) \in \ob(t,v)$, depending on
whether $q_y \le t_y$ or not.  Using this fact, we detect the events
at which $Z$ becomes infeasible, and compute the \lmrs satisfying $Z$.

We compute \lmr{s} of type \typee at (1) every step and shift event
(and ray event) with a step containing an oblique segment on the
double staircase, and (2) every event such that $\flambda(q)$ meets
$\fdelta(t)$ on an edge $e$ for a tip $q$ of $\stc_\theta(u)$ and a
tip $t$ of $\stc_{\theta+\frac{\pi}{2}}(u)$
$\stc_{\theta+\frac{\pi}{2}}(u)$ (Figure~\ref{fig:LMR_E_ref}(d).

In case (1),
at a step or shift event with a step $(p,q)$ on
$\stc_\theta(u)$ such that $\ob(p,q)\subset e_l$ is nonempty and
$\flambda(q)$ is contained in an oblique segment
$\ob(t,v)\subset e_r$ of $\stc_{\theta+\frac{\pi}{2}}(u)$, we consider
\dsc{s} $Z_1=\{u,e_l,e_r\}$ 
(Figure~\ref{fig:LMR_E_ref}(a)) 
and 
$Z_2=\{u,p,e_l,e_r\}$ (Figure~\ref{fig:LMR_E_ref}(b)),
together with
their corresponding \bcs, for edges $e_l$ and $e_r$ of $P$.  A step or
shift event with a step on $\stc_{\theta+\frac{\pi}{2}}(u)$ can be
handled in a symmetric way.  If $e_l$ appears as the oblique
segment of the step incident to $\feta(u)$ of $\stc_\theta(u)$, we
let $Z_2=\{u,e',e_l,e_r\}$, where $e'$ is the edge containing
$\feta(u)$. 
See Figure~\ref{fig:LMR_E_ref}(c).

At an event $E$ of case (2) occurring at $\theta_E$, we
have a step $(p,q)$ on $\stc_\theta(u)$ and a step $(t,v)$ on
$\stc_{\theta+\frac{\pi}{2}}(u)$ such that $\flambda(q)$ meets
$\fdelta(v)$ on an edge of $P$.  We consider the same \dsc{s} $Z_1$
and $Z_2$ considered in case (1).  Observe that $E$ corresponds to the
step event of the double staircase of $v$ at $\theta_E-\frac{\pi}{2}$.
The double staircase of $v$ has a step event at 
$\theta_E-\frac{\pi}{2}$ that $\fdelta(\flambda(v))$ meets $q$
(equivalently, $\flambda(q)$ meets $\fdelta(v)$ on an edge of $P$ at
 $\theta_E$). See Figure~\ref{fig:LMR_E_ref}(d).
 
 Thus, we
can capture $E$ by maintaining the double staircase of $v$ and insert
$E$ to the event queue of $u$ in $O(\log n)$ time.  We compute this
type of events for all reflex vertices of $P$ by maintaining double
staircases of the reflex vertices of $P$ and insert the events to the
event queues of their corresponding reflex vertices whenever such
events are found. There are $O(kn^2)$ events in total, and they can be found and inserted to the event
queues in $O(kn^2\log n)$ time.

Whenever detecting a \dsc $Z$, we take a closed interval $J$ of
orientations at which $Z$ is possibly feasible, and compute $O(1)$
\lmr{s} with contact $Z$ within $J$. When $Z$ contains
$\{u, e_l, e_r\}$ as the top \sidec $u$, bottom-left \cornerc
$e_l$, and bottom-right \cornerc $e_r$, $J$ is the interval such
that $\ob(p,q) \subseteq e_l$, $\ob(t,v) \subseteq e_r$, and
$\flambda(\fdelta(p))\in \ob(t,v)$ or $\flambda(q) \in \ob(t,v)$.
Note that the interval satisfying $\flambda(\fdelta(p))\in \ob(t,v)$
or $\flambda(q) \in \ob(t,v)$ can be computed in $O(\log n)$ time
using binary search on $L(p,e_l)$ and $L(v,e_r)$.  If $Z$ contains $p$
as the left \sidec or $e'$ as the top-left \cornerc, we
consider the \bc{s} such that $v$ is the right \sidec or 
there is another \cornerc on the top-right corner. 
Note that the \bc of the second case corresponds to a \bc of type $\typed_1$. 
The orientation at which such a \bc occurs can be computed in $O(1)$ time
by solving basic system of linear equations. Therefore, $J$ can be computed in
$O(\log n)$ time.

\begin{figure}[!ht]
  \centering \includegraphics[width=\textwidth]{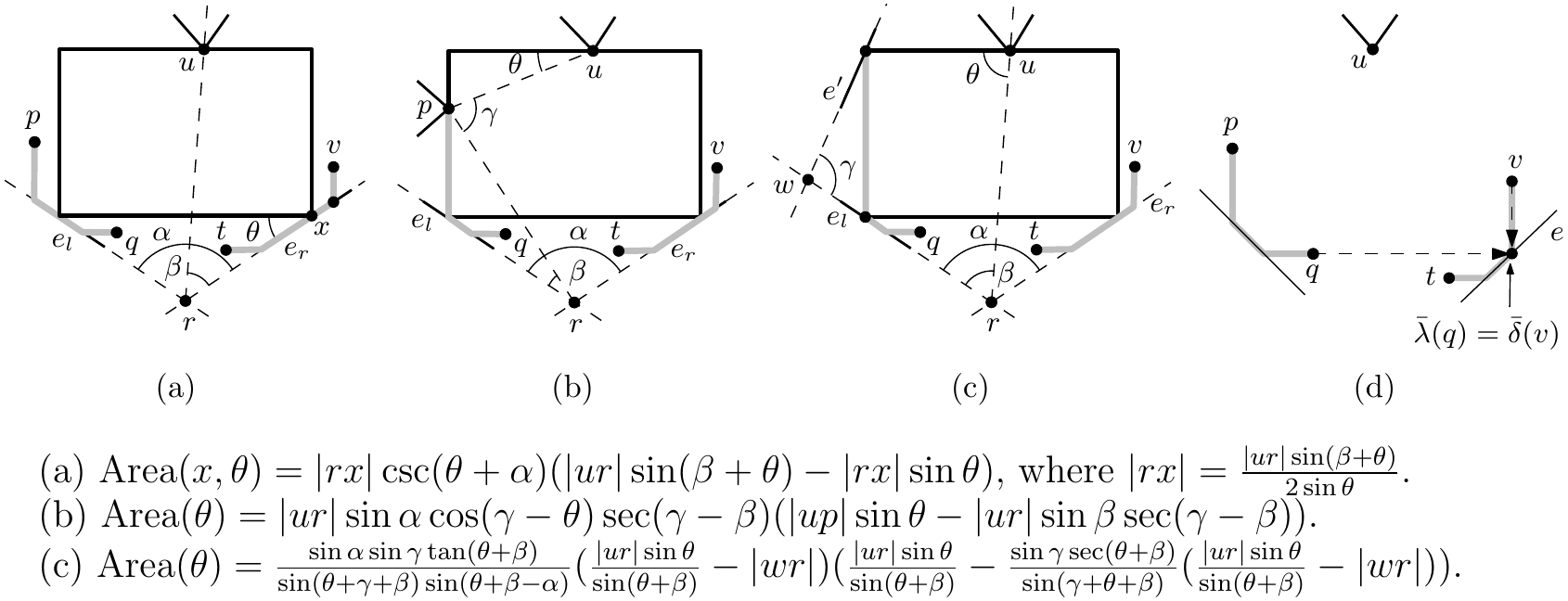}
  \caption{(a-c) \lmrs of type \typee, with reflex vertex $u$ on the
    top side, and their area functions. (d) An event at 
    $\theta_E$ at which $\flambda(q)$ meets $\fdelta(v)$ on $e$ for tips $q$ and
    $v$ and edge $e$.}
  \label{fig:LMR_E_ref}
\end{figure}

The area functions of some cases of type \typee are given in Figure
\ref{fig:LMR_E_ref} (a-c). The area functions of the other cases of
type \typee can be defined in a similar way.
The area functions of type \typee have
$O(1)$ extremal values. 

\begin{lemma}
  \label{lem:E}
  Our algorithm computes all \lmr{s} of type \typee.
\end{lemma}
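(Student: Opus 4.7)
My plan is to invoke Lemma~\ref{lem:configurations} and then argue that, for every reflex vertex $u$ serving as the top \sidec, the algorithm encounters every maximal and breaking configuration of every \dsc $Z$ of type \typee that has $u$ on its top side. Since the algorithm is run with each reflex vertex $u$ playing this role, this suffices. Throughout, I would use the structural observation already made in the excerpt: whenever $Z\supseteq\{u,e_l,e_r\}$ is feasible at some orientation $\theta$, the edges $e_l$ and $e_r$ must appear as the oblique segments $\ob(p,q)\subseteq e_l$ of some step on $\stc_\theta(u)$ and $\ob(t,v)\subseteq e_r$ of some step on $\stc_{\theta+\pi/2}(u)$, and either $\flambda(q)\in \ob(t,v)$ or $\flambda(\fdelta(p))\in \ob(t,v)$ (with the symmetric condition on the other staircase).

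The heart of the proof is to show that, as $\theta$ varies, this combinatorial configuration can become infeasible only at an event of case (1) or case (2). I would argue this by a case analysis on which of the three conditions fails first: (i) $\ob(p,q)$ or $\ob(t,v)$ ceases to be a (nonempty) oblique segment of a step on its respective staircase, (ii) $\flambda(q)$ leaves $\ob(t,v)$, or (iii) an additional contact appears that turns a maximal configuration into a breaking configuration. Failure of (i) is by definition a step, shift, or ray event on a step that contains an oblique segment, which is exactly case (1). Failure of (ii) happens precisely when $\flambda(q)$ passes through $\fdelta(v)$ on the edge containing $\ob(t,v)$ or crosses a vertex; the former is case (2), and the latter induces a step event on one of the two staircases and is therefore already in case (1). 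For (iii), the extra contact produces a breaking configuration that is either another type \typee \dsc (handled by the same case analysis with a different labeling), a \dsc of type $\typec$ or $\typeb$ (handled by the respective sections), or the one \bc of type $\typed_1$ that was explicitly postponed to this section, which gets picked up when we process $Z_1$ or $Z_2$ and add the top-right corner contact.

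For the cross-staircase event of case (2), I would appeal to the observation made in the excerpt that $\flambda(q)\cap\fdelta(v)$ at orientation $\theta_E$ is the same event as the step event on $\stc_{\theta_E-\pi/2}(v)$ in which $\fdelta(\flambda(v))$ meets $q$. Since the algorithm maintains the double staircase of every reflex vertex and scans its step events, this meeting can be detected while processing $v$'s staircase and then inserted into the event queue of $u$ in $O(\log n)$ time; applying this for all reflex vertex pairs captures every case (2) event without omission. This is the step I expect to be the main obstacle, because one has to convince oneself that no case (2) event is missed and none is double-counted in a way that spoils the time bound; the bound of $O(kn^2)$ events across all reflex vertices follows from Lemma~\ref{lem:stair-complex} applied to each reflex vertex.

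Finally, for each detected \dsc $Z$, I would compute the maximal feasibility interval $J$ for $Z$ by combining the most recent moments at which each of the constituent steps and oblique segments appeared on the two staircases, with the interval conditions $\flambda(q)\in \ob(t,v)$ (or $\flambda(\fdelta(p))\in\ob(t,v)$) obtained by binary search on $L(p,e_l)$ and $L(v,e_r)$. Within $J$, the area function of $\gth(Z)$ (as sketched by the formulas in Figure~\ref{fig:LMR_E_ref}) has only $O(1)$ extrema, so the $O(1)$ candidate \lmrs arising from maximal configurations are obtained by solving $O(1)$ univariate equations, and the \lmrs from breaking configurations are obtained by evaluating $\gth(Z)$ at the endpoints of $J$ and at the extra event orientations produced by the extra contact of the breaking configuration (computable in $O(1)$ time by a constant-size linear system). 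Combining with Lemma~\ref{lem:configurations}, every \lmr of type \typee is therefore reported.
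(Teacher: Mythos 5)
Your proposal is correct and follows essentially the same route as the paper's proof: reduce to the structural fact that $e_l$ and $e_r$ must appear as oblique segments $\ob(p,q)$ and $\ob(t,v)$ on the two staircases of $u$, show that each subtype $\typee_1$, $\typee_2$, $\typee_3$ can only become infeasible at an event of case (1) or case (2), and conclude via the feasibility interval $J$ and the $O(1)$ extrema of the area function. Your treatment is slightly more explicit than the paper's in a couple of sub-cases (e.g., $\flambda(q)$ leaving $\ob(t,v)$ through a vertex rather than through $\fdelta(v)$, and the classification of the extra-contact breaking configurations), but these are refinements of, not departures from, the paper's argument.
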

\begin{proof}
  Let $Z$ be a \dsc of type \typee such that $Z$ is feasible at some
  $\theta$.  Without loss of generality, we may assume that
  $\{u, e_l, e_r\} \subset Z$, where $u$ is the top \sidec,
  $e_l$ the bottom-left \cornerc, and $e_r$ the bottom-right
  \cornerc. We also assume $\ob(p,q) \subset e_l$ for some step
  $(p,q)$ of $\stc_\theta(u)$, and $\ob(t,v) \subset e_r$ for some
  step $(t,v)$ of $\stc_{\theta+\frac{\pi}{2}}(u)$.

  If $Z=\{u,e_l,e_r\}$ (type $\typee_1$), $Z$ becomes infeasible when
  $\ob(p,q)$ or $\ob(t,v)$ disappears from the double staircase of $u$
  or $\flambda(q)$ meets $\fdelta(v)$ on $e_r$.  We detect every
  event such that $\ob(p,q)$ or $\ob(t,v)$ disappears (case (1)) and
  every event such that $\flambda(q)$ meets $\fdelta(v)$ on $e_r$
  (case (2)).  If $Z$ is $\{u,p,e_l,e_r\}$ of type $\typee_2$, it
  becomes infeasible by an event that we detect for \dscs of type
  $\typee_1$.  or by the step $(p,q)$ that changes. We detect these
  events in cases (1) and (2) and detect the events such that step
  $(p,q)$ changes in case (1).  Similar to type $\typee_2$, \dsc
  $Z=\{u,e',e_l,e_r\}$ of type $\typee_3$ becomes infeasible by an
  event that we detect for \dscs of type $\typee_1$ or by a step
  and shift event of the step incident to $\feta(u)$. Such a step and
  shift event is detected in case (1).
   
  For each \dsc $Z$, we take a closed interval $J$ of orientations
  where $Z$ is possibly feasible, and therefore our algorithm computes all
  \lmr{s} of type \typee.
\end{proof}

We compute $O(1)$ \lmrs for each event and check if they are contained in $P$ 
in $O(\log n)$ time.  There are $O(kn^2)$ events
corresponding to case (2) which are computed in
$O(kn^2 \log n)$ time before we handle the events of type \typee.  By
Lemma~\ref{lem:stair-complex}, we have the following
lemma.
\begin{lemma}
  We can compute a largest rectangle among all \lmrs of type
  \typee in $O(kn^2 \log n)$ time using $O(kn^2 )$ space, where $k$ is
  the number of reflex vertices of $P$.
\end{lemma}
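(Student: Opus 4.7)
The plan is to combine the infrastructure already built up in the previous sections with the event analysis of type \typee given just before the statement. First I would invoke Lemma~\ref{lem:stair-eventmap} to construct the event map in $O(kn^2\log n)$ time and $O(kn^2)$ space. Then, for each reflex vertex $u$, I would construct the double staircase of $u$ and maintain it through the full rotation using Lemma~\ref{lem:stair-update}; summed over the $k$ reflex vertices this costs $O(kn^2\log n)$ time and $O(kn^2)$ space by Lemma~\ref{lem:stair-complexity}.

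Next I would bound the total number of events of the two cases identified for type \typee. Case (1) events are the step, shift, and ray events on the double staircase of $u$ that change a step carrying an oblique segment; by Lemma~\ref{lem:stair-complex} there are $O(n^2)$ such events per reflex vertex $u$, so $O(kn^2)$ in total. Case (2) events, where $\flambda(q)$ meets $\fdelta(v)$ on a common edge, were shown in the preceding discussion to correspond to step events on the double staircase of $v$, so they can be detected while maintaining the double staircase of $v$ and inserted into the event queue of $u$ in $O(\log n)$ per insertion. Since each double staircase incurs $O(n^2)$ step events and there are $k$ double staircases, the number of case (2) events is $O(kn^2)$, and they are all detected and dispatched in $O(kn^2\log n)$ time with $O(kn^2)$ additional space.

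It then remains to account for processing each event. By Lemma~\ref{lem:E}, each detected event yields $O(1)$ candidate \dsc{s} $Z$ of type \typee. For each $Z$ I would compute the maximal feasible interval $J$ of orientations in $O(\log n)$ time using binary search on $L(p,e_l)$ and $L(v,e_r)$, as described in the text just before the statement; within $J$ the area function has $O(1)$ extremal values, so I obtain $O(1)$ candidate \lmr{s} in $O(1)$ time. For each candidate I would test containment in $P$ in $O(\log n)$ time using the ray-shooting structure on the four sides of the rectangle. Summing $O(\log n)$ work over $O(kn^2)$ events gives $O(kn^2\log n)$ time, and the only persistent storage is the event map, the $k$ double staircases (maintained one at a time or in parallel), and the event queues, all bounded by $O(kn^2)$ space. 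Taking the maximum-area rectangle over all candidates yields the claimed bound.

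The main obstacle is ensuring that the case (2) events are genuinely captured within the stated budget, since they are not intrinsic to the staircase of $u$ but are induced by a step event in a different staircase. This is why the accounting goes through maintaining all $k$ double staircases simultaneously and routing each detected step event $\fdelta(\flambda(v))$ meeting $q$ to the event queue of $u$; the delicate point is that each such cross-insertion must be paid for by the $O(n^2)$ step events of a single staircase, so that the global total remains $O(kn^2)$ rather than blowing up to $O(k^2 n^2)$.
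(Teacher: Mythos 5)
Your proposal matches the paper's argument essentially step for step: event map via Lemma~\ref{lem:stair-eventmap}, double staircases for all reflex vertices via Lemmas~\ref{lem:stair-update} and~\ref{lem:stair-complexity}, the $O(kn^2)$ bound on case (1) and case (2) events with case (2) events detected on the staircase of $v$ and routed to the event queue of $u$, and $O(\log n)$ work per event for the interval $J$ and the containment test. The accounting concern you flag at the end is exactly the point the paper relies on, so there is nothing to add.
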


\section{Computing a largest rectangle of type F}
\label{sec:comp-F}

\begin{figure}[!ht]
\centering \includegraphics[width=\textwidth]{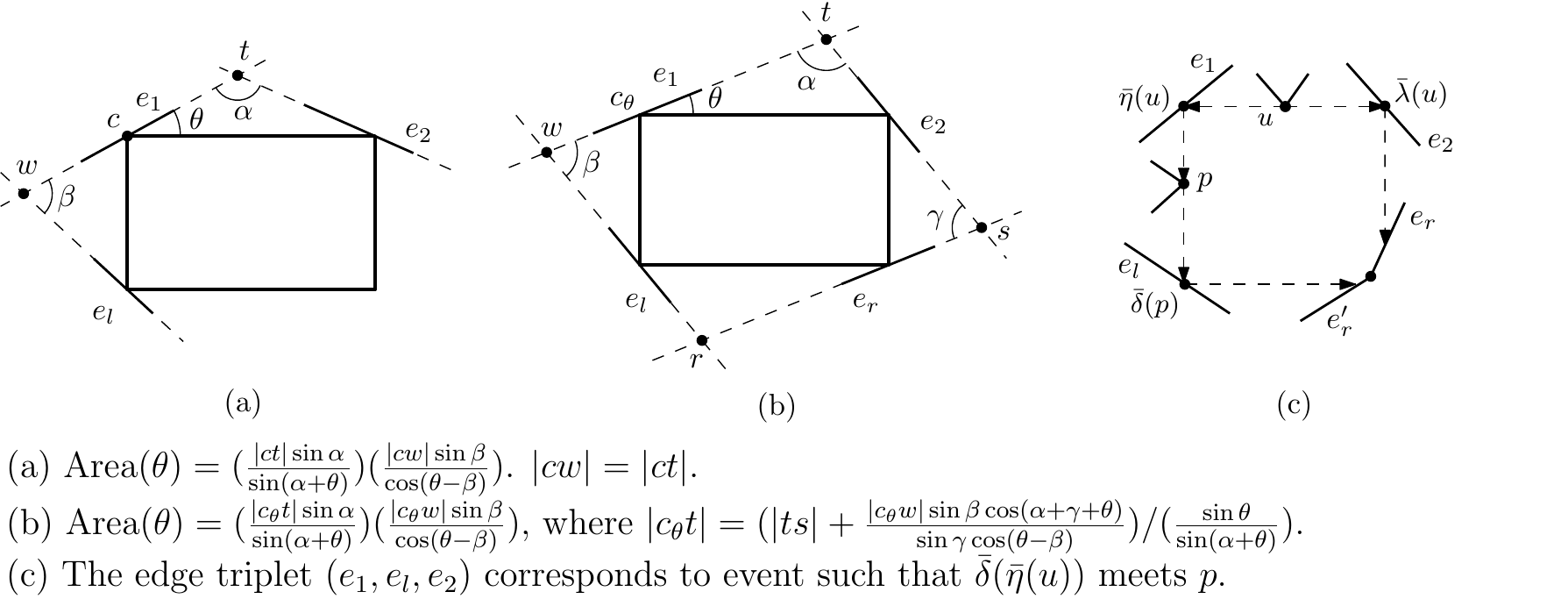}
\caption{\lmrs of type \typef.}\label{fig:LMR_F}
\end{figure}
To find all \lmrs of type \typef, we compute the maximal configurations 
and breaking configurations of \dsc{s} of type \typef as follows.
Consider a \dsc $Z_1=\{e_1,e_l,e_2\}$ of type $\typef_1$ (Figure \ref{fig:LMR_F}(a))
and a \dsc $Z_2=\{e_1,e_l,e_r,e_2\}$ of type $\typef_2$ (Figure \ref{fig:LMR_F}(b)).
Then the \lmr of a \bc of type $\typef_1$ is the rectangle satisfying $Z_1 \cup \{u\}$ or $Z_1 \cup \{e_r\}$,
or a rectangle satisfying $Z_1$ with $\cornerc$ on an end vertex of
an edge in $Z_1$,
where $u$ is a reflex vertex and $e_r$ is an edge of $P$.
The \lmr satisfying $Z_1 \cup \{u\}$ belongs to type $\typed_1$ or $\typee_3$ 
which is computed as an \lmr of 
\typed or \typee, by Lemma \ref{lem:CD} and \ref{lem:E}.
The \lmr satisfying $Z_1 \cup \{e_r\}$ belongs to type $\typef_2$ and
it is considered for type $\typef_2$.
The \lmr of a \bc of type $\typef_2$ is the rectangle satisfying $Z_2\cup \{u\}$, $Z_2\cup \{e'\}$
or the rectangle satisfying $Z_2$ with $\cornerc$ on an end vertex of an edge in
$Z_2$,
where $u$ is a reflex vertex and $e'$ is an edge of $P$.
The \lmr satisfying $Z_2\cup \{u\}$ belongs to a \bc of type $\typee_3$ which is
computed as an \lmr of type \typee by Lemma \ref{lem:E}.
(See the last \bc of type $\typee_3$ in Figure \ref{fig:BCs}.)
Thus, we have the following lemma.
\begin{lemma}
\label{lem:F_lmr}
Every \lmr is of a type in $\{\typea,\typeb,\typec,\typed,\typee\}$, a maximal configuration of type \typef, or a breaking configuration  $Z$ of type \typef containing a \cornerc on an end vertex of an edge in $Z$.
\end{lemma}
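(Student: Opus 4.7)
The plan is to start from Lemma~\ref{lem:configurations}, which asserts that every \lmr $R$ associated with a \dsc $Z$ lies in either a maximal configuration or a breaking configuration of $Z$. I then split on the canonical type of $Z$ from the classification in Figure~\ref{fig:DSC-classification}.

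If $Z$ is one of \typea, \typeb, \typec, \typed, or \typee, then $R$ immediately falls into the first alternative of the lemma. If $Z$ is of type \typef and $R$ is in a maximal configuration, then $R$ falls into the second alternative. Hence the remaining work is to analyze an \lmr $R$ that is in a breaking configuration of a \dsc $Z$ of type \typef, which means by definition that $R$ satisfies $Z$ together with exactly one additional contact $c \notin Z$.

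Writing $Z$ as either $Z_1=\{e_1,e_l,e_2\}$ (type $\typef_1$) or $Z_2=\{e_1,e_l,e_r,e_2\}$ (type $\typef_2$), I enumerate the possible forms that $c$ can take. First, $c$ could be a \sidec, namely a reflex vertex $u$ lying in the interior of a side of $R$; in that case, as already noted in the discussion preceding the lemma, $Z_1\cup\{u\}$ matches a contact pattern of $\typed_1$ or $\typee_3$ and $Z_2\cup\{u\}$ matches a breaking-configuration pattern of $\typee_3$, so $R$ is reclassified as an \lmr of type \typed or \typee and falls under the first alternative via Lemmas~\ref{lem:CD} and~\ref{lem:E}. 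Second, $c$ could be a \cornerc on the interior of a polygon edge; if $Z=Z_1$ this simply augments $Z_1$ to $Z_2$ and $R$ becomes a maximal configuration of type $\typef_2$ (the second alternative), while if $Z=Z_2$ such a fifth \cornerc on a new edge is impossible because all four corners of $R$ are already accounted for. The only remaining possibility is that $c$ is a \cornerc on an end vertex of an edge of $Z$, which is exactly the third alternative in the statement.

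The main obstacle is the exhaustive enumeration of the possible forms of the extra contact $c$ and the verification that each \sidec subcase really does match a named subtype in the classification of \typed or \typee. This is essentially a bookkeeping exercise against Figure~\ref{fig:DSC-classification} and the catalogue of breaking configurations in Figure~\ref{fig:BCs}, and no new geometric argument beyond Lemmas~\ref{lem:CD} and~\ref{lem:E} is needed.
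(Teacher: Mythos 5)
Your proposal is correct and follows essentially the same route as the paper: it invokes Lemma~\ref{lem:configurations}, disposes immediately of \dsc{s} of types \typea, \typeb, \typec, \typed, \typee and of maximal configurations of type \typef, and then enumerates the single additional contact of a breaking configuration of $\typef_1$ or $\typef_2$, absorbing the \sidec cases into types \typed and \typee (via Lemmas~\ref{lem:CD} and~\ref{lem:E}) and the new-edge \cornerc case for $\typef_1$ into type $\typef_2$. The only residual case is then a \cornerc on an end vertex of an edge of $Z$, which is exactly the third alternative and matches the paper's own argument preceding the lemma.
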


We say an edge pair $(e_1,e_2)$ is \textit{\textsf{h}-aligned} (and \textit{\textsf{v}-aligned}) 
  at $\theta$
  if there are points $p_1\in e_1$ and $p_2\in e_2$ such that
  $p_1p_2$ is horizontal (and vertical) and is contained in $P$ at $\theta$.
A pair $(e_1,e_2)$ of edges is \textit{\textsf{h}-misaligned} (and \textit{\textsf{v}-misaligned})  
at $\theta$ if the pair is not \textsf{h}-aligned (and not \textsf{v}-aligned) at $\theta$.
Note that a edge pair $(e_1,e_2)$ changes between being \textsf{h}- or
\textsf{v}-aligned and being \textsf{h}- or \textsf{v}-misaligned only
when two vertices of $P$ are aligned horizontally or vertically during the rotation.
We say a triplet $(e_1,e_l,e_2)$ of edges \textit{\textsf{t}-aligned}
at $\theta$ if there is a point $x \in e_1$ such that
 $\flambda(x) \in e_2$ and $\fdelta(x) \in e_l$ at some $\theta'\in\{\theta,\theta+\frac{\pi}{2},\theta+\pi,\theta+\frac{3\pi}{2}\}$.
 An edge triplet $(e_1,e_l,e_2)$ is \textit{\textsf{t}-misaligned}
if it is not \textsf{t}-aligned at $\theta$.  
 See Figure~\ref{fig:LMR_F}(c).
\begin{lemma}
An edge triplet $T=(e_1,e_l,e_2)$ becomes \textsf{t}-misaligned only if 
$(e_1,e_2)$ or $(e_1,e_l)$ becomes \textsf{h}- or \textsf{v}-misaligned,
or $\fdelta(\feta(u))$ meets $p$ for a vertex pair $(u,p)$ at $\theta$.
\end{lemma}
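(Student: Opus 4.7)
The plan is to reduce the claim to a continuity argument along $e_1$. Fix a representative $\theta'\in\{\theta,\theta+\frac{\pi}{2},\theta+\pi,\theta+\frac{3\pi}{2}\}$ at which $T$ is witnessed just before the transition; since the four candidate orientations rotate together with $\theta$, the horizontal/vertical roles get permuted among them, so a single analysis in one $\theta'$ suffices and the resulting \textsf{h}/\textsf{v} ambiguity is exactly what appears in the statement. In orientation $\theta'$, set
\[
A(\theta')=\{x\in e_1:\flambda(x)\in e_2\}\quad\text{and}\quad B(\theta')=\{x\in e_1:\fdelta(x)\in e_l\},
\]
so that $T$ is \textsf{t}-aligned at $\theta$ precisely when $A(\theta')\cap B(\theta')\neq\emptyset$ for some such $\theta'$.

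Under the general-position assumption on $P$, both $A$ and $B$ are finite unions of closed intervals on $e_1$ whose endpoints vary continuously with $\theta'$ and whose combinatorial structure changes only when $\flambda(x)$ or $\fdelta(x)$ sweeps over a polygon vertex. I would then enumerate the three ways $A(\theta')\cap B(\theta')$ can become empty at the critical orientation: (i) $A(\theta')$ itself becomes empty, (ii) $B(\theta')$ itself becomes empty, or (iii) both remain nonempty but their intersection collapses by a pair of endpoints colliding.

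Cases (i) and (ii) give the first two outcomes of the lemma: in case (i), no point of $e_1$ reaches $e_2$ along its $\flambda$-ray, which under general position is equivalent to $(e_1,e_2)$ becoming \textsf{h}- or \textsf{v}-misaligned (depending on which of the four $\theta'$ was picked), and symmetrically case (ii) yields that $(e_1,e_l)$ becomes \textsf{h}- or \textsf{v}-misaligned. In case (iii), the intervals $A$ and $B$ share a single boundary point $x_0\in e_1$ at the critical instant. Being an endpoint of $A$ forces $\flambda(x_0)$ to coincide with a polygon vertex $u$ (an endpoint of $e_2$, or a reflex vertex at which the foot jumps off $e_2$), and being an endpoint of $B$ forces $\fdelta(x_0)$ to coincide with a polygon vertex $p$. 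Since $P$ lies locally to the right of $e_1$ at $x_0$, as certified by the segment $x_0u\subset P$, the leftward horizontal ray from $u$ first exits $P$ at $x_0$, so $\feta(u)=x_0$ and hence $\fdelta(\feta(u))=\fdelta(x_0)=p$, matching the third event in the statement.

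The main obstacle I expect is the bookkeeping behind the interval-and-continuity description of $A$ and $B$: one has to rule out simultaneous degenerate transitions, identify the correct vertex witnesses at each endpoint event, and account for endpoint-of-$e_1$ boundary effects. The standing assumption that no three vertices of $P$ are collinear makes every ray-foot transition generic and keeps the enumeration above complete, after which the three-case analysis delivers the claim as stated.
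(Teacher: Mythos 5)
Your proposal is correct and follows essentially the same route as the paper: the sets $A(\theta')$ and $B(\theta')$ are exactly the paper's $X_\theta$ and $Y_\theta$, and the three-way case analysis (one set empties giving \textsf{h}- or \textsf{v}-misalignment, or both survive but their intersection collapses at a point where $\flambda$ and $\fdelta$ simultaneously hit vertices $u$ and $p$, forcing $\fdelta(\feta(u))=p$) is the paper's argument almost verbatim. The only cosmetic difference is that you allow $A$ and $B$ to be unions of intervals where the paper treats them as single segments; this does not change the conclusion.
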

\begin{proof}
Without loss of generality, assume that $T$ is \textsf{t}-aligned at 
$\theta_1$ but becomes \textsf{t}-misaligned at $\theta_2$ for any 
$\theta_1\in[\theta_2-\eps,\theta_2)$ with small $\eps>0$. 
Then there is a point $w \in e_1$ such that $\flambda(w) \in e_2$ and 
$\fdelta(w) \in e_l$ at 
$\theta_1$. This implies that $(e_1,e_2)$ is \textsf{h}-aligned and 
$(e_1,e_l)$ is \textsf{v}-aligned at $\theta_1$. 

Assume to the contrary that $T$ becomes \textsf{t}-misaligned at $\theta_2$ but 
$(e_1,e_2)$ is \textsf{h}-aligned, $(e_1,e_l)$ is \textsf{v}-aligned 
and no vertex pair $(u,p)$ satisfies $\fdelta(\feta(u))$ meeting $p$ at $\theta$.
Let $X_\theta$ be the set of points $w\in e_1$ with 
$\flambda(w)\in e_2$, and let $Y_\theta$ be the set of points
$w\in e_1$ with $\fdelta(w)\in e_l$ at $\theta$. Observe that $X_\theta$ 
and $Y_\theta$ form line segments contained in $e_1$ and they change 
continuously during rotation from $\theta_1$ to $\theta_2$.
Since $T$ becomes \textsf{t}-misaligned at $\theta_2$, there is no point 
$w\in e_1$ such that $\flambda(w) \in e_2$ and 
$\fdelta(w) \in e_l$ at $\theta_2$, that is, $X_{\theta_2}\cap Y_{\theta_2}=\emptyset$.
If $X_{\theta_2}=\emptyset$, $(e_1,e_2)$ is \textsf{h}-misaligned. If $Y_{\theta_2}=\emptyset$, 
$(e_1,e_l)$ is \textsf{v}-misaligned. The only remaining case is that
$X_{\theta_2}\neq\emptyset$ and $Y_{\theta_2}\neq\emptyset$ but $X_{\theta_2}\cap Y_{\theta_2}=\emptyset$.
Since $T$ is \textsf{t}-aligned at $\theta_1$, that is, $X_{\theta_1}\cap Y_{\theta_1}\neq\emptyset$,
there must be a point $w\in e_1$
such that $\flambda(w)$ moves out of $e_2$ and $\fdelta(w)$ moves out of 
$e_l$ at $\theta_2$. This occurs when $\flambda(w)$ meets a vertex $u$ and 
$\fdelta(w)$ meets a vertex $p$, that is, $\fdelta(\feta(u))$ meets $p$.
This gives a contradiction.
\end{proof}
 
We compute \lmrs of type \typef at 
(1) every event such that two vertices are aligned horizontally or vertically, and 
(2) every event such that $\fdelta(\feta(u))$ meets $p$ for every vertex pair 
$(u,p)$ at 
$\theta'\in\{\theta,\theta+\frac{\pi}{2},\theta+\pi,\theta+\frac{3\pi}{2}\}$ (Figure \ref{fig:LMR_F}(c)).
 In case (1), at an event such that two vertices $u$ and $v$ are aligned horizontally,
 we find an edge pair $(e_1,e_2)$ which becomes \textsf{h}-misaligned in $O(\log n)$ time 
 using ray-shooting queries with $\eta(u)$ and $\lambda(v)$, assuming that $u_x<v_x$ if such pair exists.
 Then we also find edges $e_l$ and $e_r$ such that $e_l$ contains $\fdelta(\feta(u))$ and
 $e_r$ contains $\fdelta(\flambda(v))$. We can find such edges in $O(\log n)$ time 
 using ray-shooting queries.
Then we compute the set $\Theta_{Z_i}$ of orientations that maximize $\mu(\gth(Z_i))$
for each \dsc $Z_1=\{e_1,e_l,e_2\}$, $Z_2= \{e_1,e_r,e_2\}$ and $Z_3= \{e_1,e_l,e_r,e_2\}$ 
using the area functions in Figure~\ref{fig:LMR_F}
and check if $\gth(Z_i)$ is contained in $P$ for $\theta\in\Theta_{Z_i}$. 
Observe that the top-left \cornerc of every \lmr of type $\typef_1$ lies
at the midpoint $c$ of $wt$, for the intersection $w$ of two lines, one containing 
$e_1$ and one containing $e_l$, and the intersection $t$ of two lines,
one containing $e_1$ and one containing $e_2$.
(See Figure~\ref{fig:LMR_F} for $wt$.)
Note that every area function in Figure \ref{fig:LMR_F} has $O(1)$ extremal values.
If $c\not\in e_1$, we take the point on $e_1$ that is closest to the midpoint.
Thus, each $\Theta_{Z_i}$ has $O(1)$ elements and we can check 
for each $\gth(Z_i)$ if it is contained in $P$
in $O(\log n)$ using ray-shooting queries.
We also compute the \bc{s} satisfying $Z_i$ with $\cornerc$ on an end vertex 
of an edge in $Z_i$, and 
check their feasibility. There are $O(1)$ such \bc{s} which can be computed in $O(1)$ time. We can compute in $O(1)$ time $\mu(\gth(Z))$
for each \bc $Z$. 
 An event at which two vertices $u$ and $v$ are aligned vertically can be handled in a symmetric way.
 
 In case (2), when $\fdelta(\feta(u))$ meets $p$,
 we find an edge triplet $(e_1,e_l,e_2)$ which becomes \textsf{t}-misaligned
 in $O(\log n)$ time using ray-shooting queries with $\eta(u), \lambda(u)$ and $\delta(p)$.
 We also find edges $e_r$ and $e'_r$ such that $\fdelta(\flambda(u))\in e_r$ 
 and $\flambda(p)\in e'_r$ in $O(\log n)$ time using ray-shooting queries.
Similar to case (1), we compute $\Theta_{Z_i}$
for each \dsc $Z_1=\{e_1,e_l,e_2\}$, $Z_2= \{e_1,e_l,e_r,e_2\}$ and $Z_3= \{e_1,e_l,e'_r,e_2\}$
and check if $\gth(Z_i)\subseteq P$ for $\theta \in \Theta_{Z_i}$.
Then we compute the \bc{s} satisfying $Z_i$ with $\cornerc$ on an end vertex
of an edge in $Z_i$ and check their feasibility.

\begin{lemma}
\label{lem:F}
Our algorithm computes all maximal configurations of type \typef
and all breaking configurations of type \typef which contain a \cornerc on a vertex of $P$.
\end{lemma}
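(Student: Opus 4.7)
The plan is to show that every \lmr $R$ of type \typef that is a maximal configuration, or a breaking configuration with a \cornerc on a vertex of $P$, has its \dsc $Z$ enumerated by the algorithm at some event in case (1) or case (2). Writing $Z$ as $\{e_1,e_l,e_2\}$ (type $\typef_1$) or $\{e_1,e_l,e_r,e_2\}$ (type $\typef_2$), I would first identify the maximal closed interval $I$ of orientations on which $Z$ is feasible. Because the area functions displayed in Figure~\ref{fig:LMR_F} have only $O(1)$ extrema, every maximal configuration of $Z$ is attained at one of these extrema inside $I$, and every \bc of $Z$ is attained at an endpoint of $I$; so it suffices to catch some endpoint of $I$.

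Next, I would classify how feasibility of $Z$ can be lost at an endpoint $\theta_0$ of $I$. The only structural possibilities are that some pair of edges in $Z$ becomes \textsf{h}- or \textsf{v}-misaligned, or that the edge triplet $(e_1,e_l,e_2)\subseteq Z$ becomes \textsf{t}-misaligned. The preceding lemma converts the second possibility into either a pair misalignment (two polygon vertices becoming axis-aligned) or a vertex-meeting event of the form ``$\fdelta(\feta(u))$ meets $p$''. These are precisely the events processed by cases (1) and (2), so the algorithm is activated at $\theta_0$.

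Then I would argue that the algorithm's enumeration at $\theta_0$ actually contains $Z$. In case (1), when vertices $u,v$ become horizontally aligned, the rays $\eta(u)$ and $\lambda(v)$ return the unique edge pair whose h-alignment is degenerating, while $\fdelta(\feta(u))$ and $\fdelta(\flambda(v))$ identify the corresponding $e_l,e_r$. These are exactly the edges of $Z$ participating in the collapsing \cornerc structure, so $Z$ appears among the three candidate \dsc{s} examined. A symmetric argument applies in case (2) using $\eta(u),\lambda(u),\delta(p)$. For each enumerated $Z$, the algorithm computes the $O(1)$-element set $\Theta_Z$ of local area maxima, verifies $\gth(Z)\subset P$ by ray-shooting, and then enumerates in $O(1)$ time the \bc{s} obtained by adding a \cornerc on an end vertex of an edge of $Z$, checking each for feasibility.

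The main obstacle is the identification step: justifying that the edges retrieved by ray-shooting at $\theta_0$ are exactly those of $Z$, not unrelated edges that happen to be hit by the same rays. The justification is that just before $\theta_0$ the four corners of $\gth(Z)$ sit on the edges of $Z$; the very vertices whose alignment collapses at $\theta_0$ are the ones defining those contacts, so the rays shot by the algorithm first strike precisely the edges of $Z$. Once this matching is established, $Z$ lies in the algorithm's $O(1)$ candidates at $\theta_0$, all its maximal configurations and its vertex-\cornerc breaking configurations are inspected, and the lemma follows.
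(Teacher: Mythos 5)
Your overall architecture matches the paper's: enumerate $Z$ at an event of case (1) or (2) at which one of the alignments among its edges breaks, use the surviving alignments to argue that the ray-shooting queries retrieve exactly the edges of $Z$, and then let the $O(1)$-extrema computation plus feasibility checks produce the maximal configurations and the vertex-\cornerc breaking configurations. However, there is a genuine gap in your reduction step. You assert that at an endpoint $\theta_0$ of the feasibility interval $I$ of $Z$ the ``only structural possibilities'' for losing feasibility are that a pair of edges of $Z$ becomes \textsf{h}- or \textsf{v}-misaligned or that the triplet becomes \textsf{t}-misaligned. That is false: feasibility of $Z$ means $\gth(Z)\subset P$, and this can fail because a portion of $\partial P$ not involved in $Z$ (for instance a reflex vertex) intrudes into $\gth(Z)$ while every pair and triplet of edges of $Z$ remains aligned. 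At such an endpoint no event of case (1) or (2) fires, so your argument does not show the algorithm is activated there; if both endpoints of $I$ are of this intrusion kind, your proof exhibits no orientation at which $Z$ is enumerated, and the maximal configurations of $Z$ in the interior of $I$ could be missed.

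The paper's proof avoids this by decoupling detection from feasibility. It considers the orientations $\theta_1,\theta_2,\theta_3$ (respectively eight orientations for $\typef_2$) at which the individual alignments of $Z$ break during the rotation, and argues that $Z$ is detected at the smallest of these, because until that orientation all the other alignments still hold and hence the queries with $\eta(u)$, $\lambda(v)$, $\fdelta(\feta(u))$, etc.\ land exactly on the edges of $Z$; the concrete instance given is that when $(e_1,e_2)$ first becomes \textsf{h}-misaligned there is a unique witness point $t\in e_2$ with $\feta(t)$ meeting $u$ and $e_1$, and $\fdelta(\feta(u))=\fdelta(\feta(t))$ still lies on $e_l$ by the surviving \textsf{t}-alignment. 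Detection may therefore occur outside $I$, which is harmless because the algorithm afterwards computes all $O(1)$ extrema of $\mu(\gth(Z))$ and all $O(1)$ vertex-\cornerc \bc{s} of $Z$ and tests each candidate rectangle for containment in $P$ by ray shooting. Your identification argument, which relies on the corners of $\gth(Z)$ sitting on the edges of $Z$ just before $\theta_0$, should be replaced by this alignment-based one, since it must also work at detection orientations that are not adjacent to $I$.
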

\begin{proof}
Let $Z_1=\{e_1,e_l,e_2\}$ and $Z_2=\{e_1,e_l,e_r,e_2\}$ be a \dsc of type $\typef_1$ and $\typef_2$, 
respectively, and assume they are feasible at some $\theta$.
There are three events corresponding to $Z_1$, $(e_1,e_2)$ becomes \textsf{h}-misaligned at $\theta_1$,
$(e_1,e_l)$ becomes \textsf{v}-misaligned at $\theta_2$ and $(e_1,e_l,e_2)$ becomes \textsf{t}-misaligned at $\theta_3$.
Let $\theta'$ be the smallest one among $\theta_1$, $\theta_2$ and $\theta_3$.
Then it is easy to see that $Z_1$ is found at the event corresponding to $\theta'$.
For example, if $\theta' = \theta_1$, when $u$ and $v$ are aligned horizontally, there exists exactly one point $t$ on $e_2$ that realizes the \textsf{h}-alignment of $e_1$ and $e_2$, and $\feta(t)$ meets $u$ and $e_1$. Thus there exists at most one edge that is \textsf{t}-aligned with $e_1$ and $e_2$, which should be $e_l$ by \textsf{t}-alignment of $(e_1, e_l, e_2)$. Moreover, $\fdelta(\feta(u))=\fdelta(\feta(t))$ is on $e_l$, which allows us to detect $(e_1, e_l, e_2)$ at $\theta'$.
Similarly, there are eight events corresponding to $Z_2$, four events are of case (1) and others are of case (2).
Then $Z_2$ is found at the smallest orientation among the orientations that the eight events occur
since all incident pairs and triplets of edges in $Z_2$ are \textsf{h}-, \textsf{v}- or \textsf{t}-aligned 
until the smallest orientation.
Thus we find all \dsc{s} of type $\typef$.
Since we compute the maximal configuration and breaking configuration which contain 
a \cornerc on a vertex of $P$
whenever we find a \dsc, all maximal configurations 
and breaking configurations of type \typef which contains a \cornerc on a vertex of $P$ 
are computed.
\end{proof}
There are $O(n^2)$ events corresponding to case (1) and 
$O(n^3)$ events corresponding to case (2). We can compute them in $O(n^3)$ time in total.
For each event, we find $O(1)$ \dsc{s} in $O(\log n)$ time and compute $O(1)$ maximal and breaking configurations of each \dsc in $O(1)$ time, and check their feasibility in $O(\log n)$ time.
And the only data structure we use for type $\typef$ is a ray-shooting data structure of $O(n)$ space. 
By lemmas~\ref{lem:F_lmr} and~\ref{lem:F}, we can conclude with the following lemma.

\begin{lemma}
We can compute a largest rectangle among all \lmr{s} 
of type \typef in $O(n^3\log n)$ time using $O(n)$ space.
\end{lemma}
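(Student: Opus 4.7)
The plan is to instantiate the event-based scheme sketched above: enumerate every orientation at which a candidate \dsc of type \typef could become infeasible, and at each such orientation compute and test a constant number of candidate \lmrs. By Lemma~\ref{lem:F_lmr}, every \lmr of type \typef is either a maximal configuration of some \dsc of type $\typef_1$ or $\typef_2$, or a breaking configuration whose extra contact is a \cornerc at a vertex of $P$; the other breaking configurations of type \typef are already captured inside the type \typed and type \typee computations. So it suffices to find every \dsc of type \typef and evaluate its $O(1)$ maximal and relevant breaking configurations. As preprocessing I would build a Hershberger--Suri ray-shooting structure for $P$ of size $O(n)$ supporting $O(\log n)$-time queries; this is the only persistent data structure, which keeps the space at $O(n)$.

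Next I would enumerate the two event classes. For case (1), the critical orientations are those at which a pair of polygon vertices becomes horizontally or vertically aligned; there are $O(n^2)$ such pairs and each contributes $O(1)$ orientations, giving an $O(n^2)$-size list computable in $O(n^2)$ time. For case (2), I would, for each ordered pair of vertices $(u,p)$ and each edge $e$, solve for the orientations at which $\feta(u)\in e$ and $\fdelta(\feta(u))=p$; this produces an $O(n^3)$-size list computable in $O(n^3)$ time. Sorting the combined list of $O(n^3)$ event orientations and scanning them in order fits within the $O(n^3\log n)$ budget.

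At each event I would recover the relevant \dsc as follows. For a case (1) event at which vertices $u,v$ align, a constant number of ray-shooting queries along $\eta(u)$, $\lambda(v)$, $\delta(\feta(u))$, and $\delta(\flambda(v))$ identify the edges $e_1,e_2,e_l,e_r$ whose $\textsf{h}$- or $\textsf{v}$-alignment changes, yielding the candidate \dscs $\{e_1,e_l,e_2\}$, $\{e_1,e_r,e_2\}$ and $\{e_1,e_l,e_r,e_2\}$; a case (2) event is handled symmetrically using ray shoots along $\eta(u),\lambda(u),\delta(p)$, $\delta(\flambda(u))$, and $\lambda(p)$. For each such $Z$ I would evaluate the closed-form area function (Figure~\ref{fig:LMR_F}), which has $O(1)$ extremal orientations, and for each extremum check containment of $\gth(Z)$ in $P$ with four ray-shooting queries in $O(\log n)$ time. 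The $O(1)$ breaking configurations of $Z$ with an additional \cornerc at an endpoint of an edge of $Z$ are found by solving small linear systems in $O(1)$ time and checked the same way.

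Correctness follows from Lemma~\ref{lem:F}: as we rotate, a \dsc $Z$ of type \typef can become infeasible only through an $\textsf{h}$-, $\textsf{v}$-, or $\textsf{t}$-misalignment among the edge pairs/triples it contains, and by the classification above every such misalignment corresponds to a case (1) or case (2) event at which $Z$ is recovered. Summing costs: $O(n^3)$ events are produced and each is processed in $O(\log n)$ time, so the total running time is $O(n^3\log n)$; apart from the ray-shooting structure, processing one event uses $O(1)$ additional space, so total space is $O(n)$. The delicate point where I would expect the verification to be most subtle is the second step above---showing that the constant-size set of \dscs associated with a given event is complete (no \dsc escapes detection) and non-redundant---but this is exactly the content of Lemma~\ref{lem:F} so the remaining obligation is just the bookkeeping.
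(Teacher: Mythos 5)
Your proposal follows essentially the same route as the paper: reduce via Lemma~\ref{lem:F_lmr} to the maximal configurations and vertex-\cornerc breaking configurations of type \typef, enumerate the $O(n^2)$ alignment events of case (1) and the $O(n^3)$ \textsf{t}-misalignment events of case (2), recover the $O(1)$ candidate \dsc{s} at each event by ray shooting, and evaluate/verify the $O(1)$ extrema of each closed-form area function in $O(\log n)$ time, with correctness supplied by Lemma~\ref{lem:F}.

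One detail is inconsistent with your own space claim: you propose to materialize and sort the combined list of $O(n^3)$ event orientations, which costs $O(n^3)$ space and contradicts the $O(n)$ bound you assert at the end. The sort is also unnecessary. Unlike types \typeb--\typee, no rotating data structure is maintained across type-\typef events --- the only persistent structure is the ray-shooting structure --- so the events need not be processed in angular order; you can simply iterate over the vertex pairs (case (1)) and the vertex-pair/edge triples (case (2)), generate each event's $O(1)$ orientations on the fly, and process it immediately. With that change the argument matches the paper's and the $O(n^3\log n)$ time, $O(n)$ space bounds hold.
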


\section{Computing a largest rectangle in a simple polygon with holes}
\label{sec:with-holes}
Our algorithm can compute a largest rectangle in a simple polygon $P$ 
with $h$ holes and $n$ vertices.
We use the same classification of largest rectangles 
and find the \lmrs of the six types. 
We construct a ray-shooting data structure, such as the one by
Chen and Wang~\cite{chen2015visibility} in $O(n+h^2\polylog h)$ time using $O(n+h^2)$ space,
which supports a ray-shooting query in $O(\log n)$ time.
We also construct the visibility region from each vertex of $P$,  
which can be done in $O(n^2\log n)$ time using $O(n^2)$ space
by using the algorithm in~\cite{chen2015visibility}.
Each visibility region is simple and has $O(n)$ complexity.
The staircase of a vertex of $P$ can be constructed in $O(n\log n)$ time 
using plane sweep with ray-shooting queries. 
Each staircase of a vertex $u$ of $P$
has $O(n)$ space. There are $O(n^2)$ events to the staircase of $u$ 
since it is equivalent to the staircase constructed in $\vis(u)$, a simple polygon
with $O(n)$ vertices.

We say a rectangle is \textit{empty} if there is no hole contained in it. 
Since $P$ has holes, there can be a hole contained in a rectangle $R$
even though every side of $R$ is contained in $P$.
Thus, we check the emptiness of rectangles, together with the test
for their sides being contained in $P$.
The emptiness of a rectangle can be checked by constructing 
the triangular range searching data structure proposed 
by Goswami~et~al.~\cite{goswami2004triangular}.
Given a set of $n$ points in the plane, the triangular range searching data 
structure can be constructed in $O(n^2)$ time and space
such that
given a query triangle, the number of points lying in the triangle can be
answered in $O(\log n)$ time.
We consider the $n$ vertices of $P$ as the input points to the data structure 
and consider a rectangle as a query with two triangles obtained from subdividing 
the rectangle by a diagonal.
Then we can check the emptiness of a rectangle in $O(\log n)$ time.
Since the remaining part of our algorithm works as it is,
we have Theorem~\ref{thm:final}.
\section{Computing a largest rectangle in a convex polygon}
 When $P$ is convex, there is no reflex vertex and therefore it suffices to consider only
 the \lmr{s} of types \typea and \typef. 
 For type \typea, we take two vertices
  $v$ and $u$ and check if the square with $vu$ as a diagonal is contained in $P$
  using ray-shooting queries.
  Among such squares, the one with largest area is
  the largest \lmr of type \typea in $P$, by Lemma \ref{lem:type-A-square}.
   Thus, using the method in Lemma \ref{lem:lmrs-type-A}, we can
  compute a largest \lmr of type \typea in $O(n^2 \log n)$
  time using $O(n)$ space.
  
   For type \typef, we find the events considered in Section~\ref{sec:comp-F}
 and all \dsc{s} corresponding to the events in case (1) 
 that a vertex $u$ is aligned to another vertex in $O(n)$ time by maintaining rays 
 $\lambda(u)$, $\delta(\flambda(u))$, 
 $\delta(u)$ and $\lambda(\fdelta(u))$ during the rotation.
 Since $P$ is convex, the foot of each ray emanating from $u$ changes \textit{continuously} along the boundary of $P$.
 Similarly, we find all \dsc{s} corresponding to the events in case (2) in Section~\ref{sec:comp-F}  that 
 an edge triplet becomes \textsf{t}-misaligned. It is caused by $\fdelta(\feta(u))$ meeting $p$ for 
 a vertex pair $(u,p)$ and its corresponding \dsc{s} can be computed in $O(n)$ time 
 by maintaining $\eta(u)$, $\delta(u)$, $\xi(p)$ and $\lambda(p)$
 during the rotation, where $\xi(p)$ is the vertically upward ray from $p$.
 Thus, we can find all events and their corresponding \dsc{s} in $O(n^2)$ time for case (1) and
 in $O(n^3)$ time for case (2). Since every \lmr is contained in $P$,
 we can find the maximal configuration of each \dsc in $O(1)$ time.
 We conclude with Theorem~\ref{thm:convex}.

\section{Discussion}
We study  the problem of finding a maximum-area rectangle with no restriction on
its orientation that is contained in a simple polygon $P$ with $n$ vertices in the plane and
present $O(n^3\log n)$-time algorithm. One may wonder if the algorithm can be
improved. It is shown that there can be $\Omega(n^3)$ combinatorially distinct
rectangles even when $P$ is convex~\cite{Cabello-2016}. So any algorithm that iterates
over all combinatorially distinct rectangles contained in $P$ needs at least
$\Omega(n^3)$ time, and therefore there seems hardly any room to improve, except
improving it by $\log n$ factor.

\bibliography{papers}{}

\begin{thebibliography}{10}

\bibitem{Aggarwal-1988}
Alok Aggarwal and Joel~Martin Wein.
\newblock {\em Computational Geometry Lecture Notes for MIT}.
\newblock 18.409. 1988.

\bibitem{Alt-1995}
Helmut Alt, David Hsu, and Jack Snoeyink.
\newblock Computing the largest inscribed isothetic rectangle.
\newblock In {\em Proceedings of 7th Canadian Conference on Computational
  Geometry (CCCG 1995)}, pages 67--72. University of British Columbia, 1995.

\bibitem{Amenta-1994}
Nina Amenta.
\newblock Bounded boxes, {H}ausdorff distance, and a new proof of an
  interesting {H}elly-type theorem.
\newblock In {\em Proceedings of 10th Annual Symposium on Computational
  Geometry (SoCG 1994)}, pages 340--347, 1994.

\bibitem{Bae-2009}
Sang~Won Bae, Chunseok Lee, Hee-Kap Ahn, Sunghee Choi, and Kyung-Yong Chwa.
\newblock Computing minimum-area rectilinear convex hull and {L}-shape.
\newblock {\em Computational Geometry}, 42(9):903--912, 2009.

\bibitem{Boland-2001}
Ralph~P. Boland and Jorge Urrutia.
\newblock Finding the largest axis-aligned rectangle in a polygon in ${O}(n\log
  n)$ time.
\newblock In {\em Proceedings of 13th Canadian Conference on Computational
  Geometry (CCCG 2001)}, pages 41--44, 2001.

\bibitem{Cabello-2016}
Sergio Cabello, Otfried Cheong, Christian Knauer, and Lena Schlipf.
\newblock Finding largest rectangles in convex polygons.
\newblock {\em Computational Geometry}, 51:67--74, 2016.

\bibitem{chen2015visibility}
Danny~Z. Chen and Haitao Wang.
\newblock Visibility and ray shooting queries in polygonal domains.
\newblock {\em Computational Geometry}, 48(2):31 -- 41, 2015.

\bibitem{Daniels-1997}
Karen Daniels, Victor Milenkovic, and Dan Roth.
\newblock Finding the largest area axis-parallel rectangle in a polygon.
\newblock {\em Computational Geometry}, 7(1):125--148, 1997.

\bibitem{Fischer-1994}
Paul Fischer and Klaus-Uwe H{\"o}ffgen.
\newblock Computing a maximum axis-aligned rectangle in a convex polygon.
\newblock {\em Information Processing Letters}, 51(4):189--193, 1994.

\bibitem{goswami2004triangular}
Partha~P. Goswami, Sandip Das, and Subhas~C. Nandy.
\newblock Triangular range counting query in 2d and its application in finding
  k nearest neighbors of a line segment.
\newblock {\em Computational Geometry}, 29(3):163 -- 175, 2004.

\bibitem{Hall-Holt2006}
Olaf Hall-Holt, Matthew~J. Katz, Piyush Kumar, Joseph S.~B. Mitchell, and Arik
  Sityon.
\newblock Finding large sticks and potatoes in polygons.
\newblock In {\em Proceedings of 17th Annual ACM-SIAM Symposium on Discrete
  Algorithm (SODA 2016)}, pages 474--483, 2006.

\bibitem{Hershberger-1995}
John Hershberger and Subhash Suri.
\newblock A pedestrian approach to ray shooting: Shoot a ray, take a walk.
\newblock {\em Journal of Algorithms}, 18(3):403--431, 1995.

\bibitem{Knauer-2012}
Christian Knauer, Lena Schlipf, Jens~M. Schmidt, and Hans~Raj Tiwary.
\newblock Largest inscribed rectangles in convex polygons.
\newblock {\em Journal of Discrete Algorithms}, 13:78--85, 2012.

\bibitem{McKenna-1985}
Michael McKenna, Joseph O'Rourke, and Subhash Suri.
\newblock Finding the largest rectangle in an orthogonal polygon.
\newblock In {\em Proceedings of 23rd Allerton Conference on Communication,
  Control and Computing}, pages 486--495, 1985.

\bibitem{Wood-1988}
Derick Wood and Chee~K. Yap.
\newblock The orthogonal convex skull problem.
\newblock {\em Discrete {\&} Computational Geometry}, 3(4):349--365, 1988.

\end{thebibliography}
\begin{figure}[ht]
  \centering
\includegraphics[width=.8\textwidth]{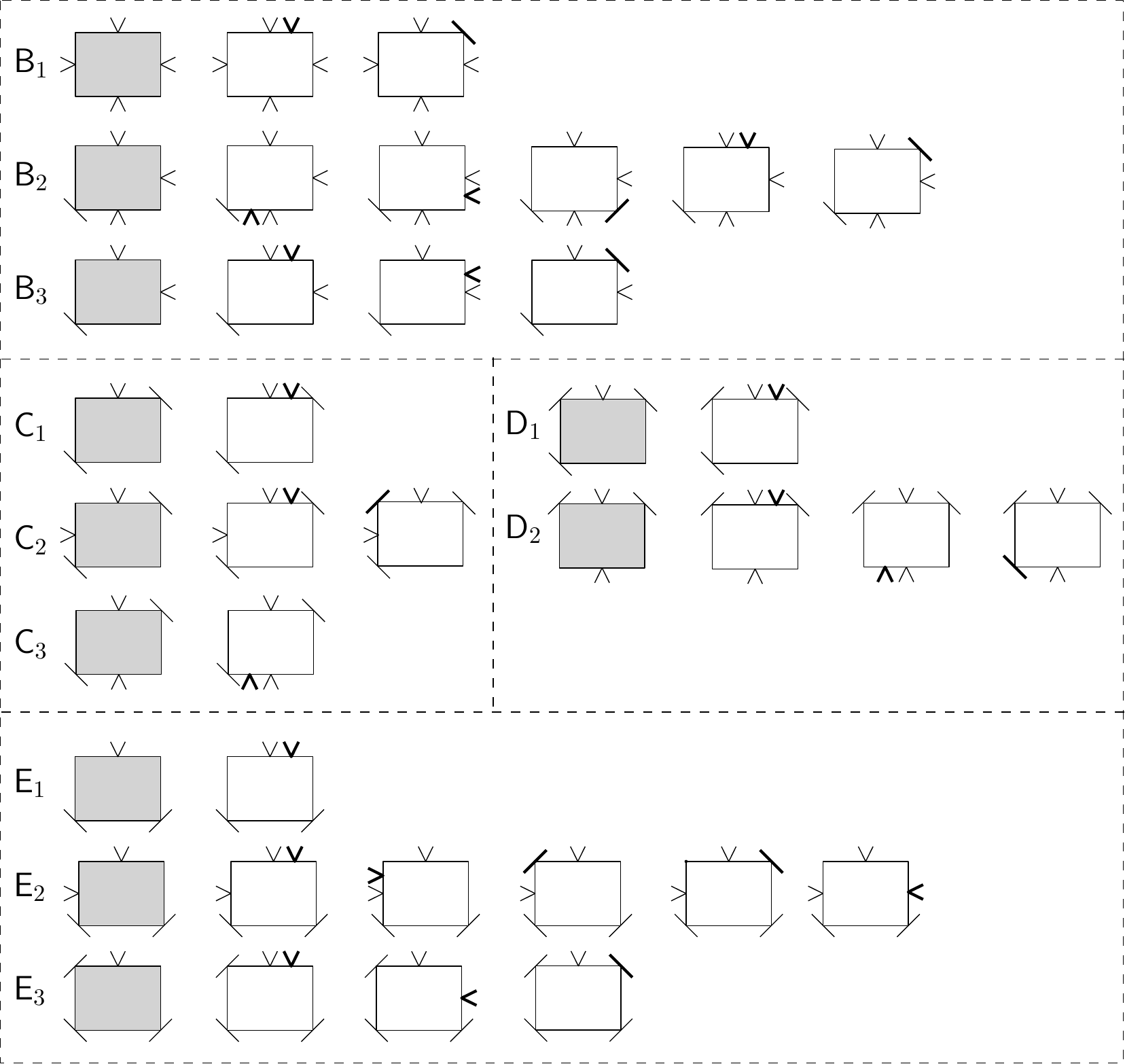}
\caption{Canonical (sub)types (gray rectangles) and their breaking configurations without duplication. The breaking configurations of subtypes $\typef_1$ and $\typef_2$ appear as breaking configurations of other types: By adding a \sidec to a \dsc $Z$ of type $\typef_1$, $Z$ becomes a \bc of type either $\typed_1$ or $\typee_3$. By adding a \cornerc to a \dsc $Z$ of type $\typef_1$, $Z$ becomes a \bc of type $\typef_2$. By adding a \sidec to a \dsc $Z$ of type $\typef_2$, $Z$ becomes a \bc (of the last type) of type $\typee_3$.}
\label{fig:BCs}
\end{figure}
\end{document}